\documentclass[a4paper,USenglish,cleveref, autoref, thm-restate]{socg-lipics-v2021}

\pdfoutput=1 

\graphicspath{{./graphics/}}

\usepackage{amsmath}
\usepackage{bm}
\usepackage{cases}
\usepackage{nicefrac}
\usepackage{multirow}
\usepackage{lineno}

\usepackage{tikz}
\usetikzlibrary{positioning}
\usetikzlibrary{calc}
\usetikzlibrary {arrows.meta}

\newcommand{\enne}{\mathbb{N}}
\newcommand{\erre}{\mathbb{R}}
\newcommand{\ip}{\text{IP}}
\newcommand{\lp}{\text{LP}}
\newcommand{\cP}{\mathcal{P}}
\newcommand{\cC}{\mathcal{C}}
\newcommand{\cM}{\mathcal{M}}

\newcommand{\ig}{\text{IG}}
\newcommand{\clp}{\text{CLP}}

\title{Integrality gap preserving reductions}

\titlerunning{IG preserving reductions} 

\author{{Koppány István} Encz\footnote{Corresponding author}}{Faculty of Informatics, Università della Svizzera italiana, [CH 6962 Lugano, Switzerland] \and Istituto Dalle Molle di studi sull'intelligenza artificiale (IDSIA USI-SUPSI), [CH 6962 Lugano, Switzerland] \and \url{https://sites.google.com/view/enczkoppany/home}}{enczk@usi.ch}{https://orcid.org/0009-0009-0937-6809}{}

\author{Monaldo Mastrolilli}{Dipartimento Tecnologie innovative, Scuola universitaria professionale della Svizzera italiana [CH 6962 Lugano, Switzerland] \and Istituto Dalle Molle di studi sull'intelligenza artificiale (IDSIA USI-SUPSI) [CH 6962 Lugano, Switzerland]}{monaldo.mastrolilli@supsi.ch}{https://orcid.org/0000-0002-2948-9749}{}

\author{Eleonora Vercesi}{Faculty of Informatics, Università della Svizzera italiana, [CH 6962 Lugano, Switzerland] \and Istituto Dalle Molle di studi sull'intelligenza artificiale (IDSIA USI-SUPSI), [CH 6962 Lugano, Switzerland] \and \url{https://eleonoravercesi.github.io/}}{eleonora.vercesi@usi.ch}{https://orcid.org/0000-0002-1621-2484}{}

\authorrunning{K. I. Encz and M. Mastrolilli and E. Vercesi}

\Copyright{Jane Open Access and Joan R. Public} 

\ccsdesc[500]{Mathematics of computing~Combinatorial optimization}

\keywords{Integrality gap, combinatorial optimization, vertex cover, multiple knapsack, unrelated machine scheduling, restricted assignment} 

\relatedversion{} 

\nolinenumbers 

\EventEditors{John Q. Open and Joan R. Access}
\EventNoEds{2}
\EventLongTitle{42nd Conference on Very Important Topics (CVIT 2016)}
\EventShortTitle{CVIT 2016}
\EventAcronym{CVIT}
\EventYear{2016}
\EventDate{December 24--27, 2016}
\EventLocation{Little Whinging, United Kingdom}
\EventLogo{}
\SeriesVolume{42}
\ArticleNo{23}

\begin{document}

\maketitle

\nolinenumbers

\begin{abstract}
We propose a framework for the systematic study of integrality gaps of combinatorial optimization problems with respect to a fixed linear programming formulation. The method, called \emph{integrality gap preserving reduction}, consists of iteratively shrinking the input universe of the problem while guaranteeing that gap-maximizing instances remain selected. When the subset of remaining instances becomes specific enough, we calculate the integrality gap explicitly. 

Besides applying integrality gap preserving reductions to three well-known optimization problems via their standard linear programming formulations (weighted vertex cover problem, multiple knapsack problem, and unrelated machine scheduling problem), we analyse the restricted assignment problem via its configuration LP relaxation. We prove that the integrality gap is equal to $1$ for three ``easy'' subclasses of the problem that are either solvable in polynomial time or admit a PTAS (e.g., the all-one processing time case). For some remaining cases, we improve the current lower bound using our technique.
\end{abstract}

\section{Introduction}

Many combinatorial optimization problems can be formulated with an \emph{integer linear program} (IP)
\begin{equation}\label{ip}
  \ip(I) := \min\{cx: Ax \ge b, x \in \enne^n\},
\end{equation}
where $I=(A,b,c)$ is an instance of a minimization problem $\cP$, $A \in \mathbb{Q}^{m \times n}$, $b \in \mathbb{Q}^m$, and $c \in \mathbb{Q}^n$. Since solving \eqref{ip} is often $\mathsf{NP}$-hard, one typically considers its \emph{linear relaxation}
\begin{equation}\label{lp}
    \lp(I):= \min\{cx: Ax \ge b, \, x \in \erre^n\},
\end{equation}
obtained by dropping integrality. Clearly, $\lp(I) \le \ip(I)$, and $\lp(I)$ can be solved in polynomial time \cite{karmarkar, khachiyan}.

A key issue to bear in mind is the loss of solution quality. A straightforward measure of this loss is the \emph{integrality gap} \footnote{For maximization problems, $\ig(I)=\lp(I)/\ip(I)$.} of an instance
\begin{equation}\label{gap}
    \ig(I):= \frac{\ip(I)}{\lp(I)}. 
\end{equation}
From a computational complexity perspective, determining $\ig(I)$ is as hard as solving $\ip(I)$, so most efforts focus on bounding it instead. For an entire class $\cP$ of optimization problem instances given as a collection $S_0$, define the global gap as
\begin{equation}\label{Gap}
    \ig(\cP) := \sup\{\ig(I): I \in S_0\}.
\end{equation}
Unlike the single-instance case, $\ig(\cP)$ is often tractable and known for several problems such as unrelated machine scheduling or multi-knapsack \cite{machine_scheduling_review, martello_toth}.
Obtaining \eqref{Gap} typically involves two steps. A lower bound is given by a family $\{I_k\}$ yielding $l := \sup_k \ig(I_k) \le \ig(\cP)$. An upper bound is usually extracted from approximation algorithms: an algorithm $M$ is an \emph{$\alpha$-approximation} (with respect to a fixed lower bound $\lp$) if $c(M(I)) \le \alpha \cdot \lp(I)$ for all $I \in \cP$, implying $\ig(\cP)\le \alpha$. The goal is to minimize $\alpha - l$, ideally achieving equality.

This approach succeeds in several cases. For unrelated machine scheduling with $m$ machines, a simple instance $I$ with only $1$ job and $p_{1,1}= \ldots = p_{1,m}=1$ gives $\ig(I)=m$, and an elementary rounding procedure serves as an $m$-approximation algorithm, proving $\ig(\cP)=m$. In contrast, for other problems, decades of effort have been insufficient for closing the distance between $l$ and $\alpha$. A prominent example is the integrality gap of the Dantzig - Fulkerson - Johnson formulation for the Symmetric Travelling Salesperson Problem (STSP), conjectured to have a gap $\frac{4}{3}$ \cite{Williamson90}. For STSP, the lower and upper bounds remain $\frac{4}{3}$ \cite{benoit08} and $\frac{3}{2}$ \cite{christofides1976report} (slightly improved to $\frac{3}{2}-10^{-36}$ in \cite{karlin2024improved}), leaving a persistent gap.

Beyond approximation algorithms, few alternative methods have been studied. Most results are problem-specific: Singh \cite{singh19} derives an explicit formula for vertex cover, while Villa and et al. \cite{villa2025integrality} prove the gap is $\frac{4}{3}$ for a subclass of STSP using computer-assisted methods building on Benoit and Boyd \cite{benoit08}. A sporadic general-purpose method appears in \cite{carr23}.

\bigskip

We propose a framework focusing solely on the integrality gap without relying on external factors, and aiming to determine $\ig(\mathcal{P})$ directly. The method stems from a simple empirical observation: both known (e.g., unrelated machine scheduling, knapsack) and conjectured (STSP, restricted assignment problem \cite{Jansen17}) worst-case instances in terms of integrality gap exhibit an extremely symmetric structure; let us recall the instance $p_{1,1}=\ldots = p_{1,m}=1$ from before. Apart from the problems we discuss in this paper, we refer to some highly symmetric families of STSP instances \cite{benoit08,hougardy2014integrality,hougardy2021hard,zhong2025lower}, achieving the best known asymptotic lower bounds on IG. The observation is nicely complemented by experimental results in \cite{Vercesi2023IGTSP} (table 5) which claim that randomly sampled STSP instances have a low integrality gap with high probability. 

These general phenomena motivate the study of \emph{integrality gap preserving reductions} (\emph{gap-preserving reductions} or \emph{gap reductions} for short): given a problem instance, modify it locally so that $1)$ the integrality gap does not decrease, and $2)$ the modified instance exhibits more structure with respect to the original one, according to some appropriately defined notion of ``structuredness''. We elaborate on this notion in Section \ref{sec:gap_red}. The methodology has two notable advantages. First, from a structural and pedagogical standpoint, rather than relying on problem-specific ad-hoc arguments often needed to bound integrality gaps, our framework offers a conceptually unified perspective. Second, it highlights the underlying relationship between the symmetry of an instance and its integrality gap: not only do we obtain the single numeric value $\ip(\cP)$, but also a hierarchy of increasingly symmetric instances, in which moving upwards both increases the gap and strengthens the structure.

We organise the paper as follows: in Section \ref{sec:gap_red}, we define gap reductions and share our first observations. To show that our framework is useful and works for different cases, we use it to find the integrality gap for three classic optimization problems (via their natural LP-relaxations) for which the gap was already known: the weighted vertex cover, the multiple knapsack and the unrelated machine scheduling problems. In Sections \ref{sec:vertex_cover}, \ref{sec:multi_knap} and \ref{sec:machine_sched}, we show that our method reaches the same known results by focusing on the underlying structure of the problems. This allows us to use a single, consistent approach instead of the different specific tricks usually required for each problem. Full technical details are provided in Appendix \ref{app:vertex_cover}, \ref{app:multi_knap} and \ref{app:machine_sched}. In Section \ref{sec:conifg_lp}, we advance the current knowledge regarding the still unknown integrality gap of the restricted assignment problem with the configuration LP relaxation by using gap-preserving reductions. In particular, we assert that the gap is $1$ for some easy (solvable in polynomial time or admitting a PTAS) sub-problems: the two-machine case in Theorem \ref{thm:p2} and the all-one processing time case in Theorem \ref{thm:all_one}. Lastly, in Section~\ref{sec:conclusion}, we formulate an intriguing hypothesis and state that, if it holds, the integrality gap of the configuration LP coincides with that of the graph balancing problem. Full details are presented in Appendix \ref{app:config}.

\section{Gap-preserving reductions}\label{sec:gap_red}

The baseline of our approach is the empirical observation that gap-maximizing instances tend to be highly structured. We can exploit it in the following manner: given an initial instance of an optimization problem $\cP$, iteratively alter it so that each new instance has a larger-or-equal gap than the last, and each new instance is ``simpler'' than the old one in terms of an intuitive metric. Frequently appearing local changes can be grouped into two major categories: structural changes to an instance (e.g., reducing the number of jobs/machines/knapsacks/items) which boost simplification, or numerical modifications of the processing times/item weights/item profits that try to make the instance more uniform. When the current instance is sufficiently structured, we calculate its gap explicitly and obtain $\ig(\mathcal{P})$ as the maximal gap among the final instances. We give precise definitions below.

Consider a minimization problem $\cP$ and its corresponding instance set $S_0$, where $I \in S_0$ is determined by a triple $(A,b,c): A \in \mathbb{Q}^{m \times n}, b \in \mathbb{Q}^m$ and $c \in \mathbb{Q}^n$ for some $n,m \in \enne$. For concreteness, when $\cP$ is the weighted vertex cover problem in its natural IP-formulation (defined in Section \ref{sec:vertex_cover}), $A = \begin{pmatrix} A_G \\I_n \end{pmatrix}$ where $A_G$ is the edge-node incidence matrix of a graph $G$ on $n$ nodes and $I_n$ is the $n\times n$ identity matrix, $b = \begin{pmatrix} 1_m \\ 0_n \end{pmatrix}$ where $1_m$ and $0_n$ are the length-$m$ all-one vector and the length-$n$ zero vector, and $c=w$. The optima of the integer and linear programs are given as \eqref{ip} and \eqref{lp}. The integrality gap of $I$ is given by \eqref{gap}. 

We now introduce the main object of our study.

\begin{definition}[Integrality gap preserving reduction (IGPR)]\label{def:gap_preserving_reduction}
Let $W \subseteq V \subseteq S_0$. A gap-preserving reduction is a function $f : V \to W$ such that for every instance $I \in V$,
\[
\ig(I) \leq \ig(f(I)).
\]
We say that $V$ gap-reduces to $W$ via $f$. In notation, $V \to_{f} W$.
\end{definition}

In informal terms, a gap-preserving reduction maps instances to a more restricted class without decreasing the integrality gap. This can be achieved in a few different ways:
\begin{lemma}\label{lem:gap_change}
    Let $I, I' \in S_0$. If one of the following conditions holds, then $\ig(I) \le \ig(I')$.
    \begin{enumerate}
        \item \(\ip(I)\le \ip(I')\) and \(\lp(I)\ge \lp(I')\),
        \item \(\ip(I') = \ip(I) - \delta\) and \(\lp(I') = \lp(I) - \varepsilon\), with $0 \le \delta \le \varepsilon$.
    \end{enumerate}
\end{lemma}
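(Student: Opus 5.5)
Both conditions follow from monotonicity of the ratio $\ip/\lp$, once we fix the sign conventions implicit in \eqref{gap}. Throughout I assume $\lp(I)>0$ and $\lp(I')>0$, since otherwise $\ig$ is not a well-defined finite ratio. In the settings of this paper costs are nonnegative, so $\ip(I)\ge \lp(I)>0$ as well, and hence $\ig(I)\ge 1$.

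\emph{Condition 1.} The plan is a single chain of inequalities. Dividing $\ip(I)\le\ip(I')$ by the positive number $\lp(I)$ gives $\ig(I)\le \ip(I')/\lp(I)$. Then I use $\lp(I)\ge\lp(I')>0$ together with $\ip(I')\ge 0$, which gives $\ip(I')/\lp(I)\le \ip(I')/\lp(I')=\ig(I')$. The only point needing care is the nonnegativity of $\ip(I')$, which follows from $\ip(I')\ge\lp(I')>0$.

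\emph{Condition 2.} Write $a=\ip(I)$ and $b=\lp(I)$, so that $a\ge b>0$. By hypothesis $\ip(I')=a-\delta$ and $\lp(I')=b-\varepsilon$, and positivity of $\lp(I')$ forces $b-\varepsilon>0$. The claim $a/b\le (a-\delta)/(b-\varepsilon)$ is equivalent, after cross-multiplying by positive denominators, to
\[ a(b-\varepsilon)\le b(a-\delta), \]
that is, to $b\delta\le a\varepsilon$. This holds because $\delta\le\varepsilon$ and $b\le a$ with all quantities nonnegative, so $b\delta\le b\varepsilon\le a\varepsilon$.

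The main, though minor, obstacle is making explicit the positivity assumptions needed for cross-multiplication and for the inequality $b\le a$, namely $\lp(I)\le\ip(I)$. I will state these at the outset as standing assumptions, since the integrality gap is only meaningful when $\lp>0$.
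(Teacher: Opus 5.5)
Your proposal is correct and follows essentially the paper's route: Case 1 is immediate, and Case 2 is reduced by cross-multiplication to $\lp(I)\,\delta \le \ip(I)\,\varepsilon$, which follows from $\lp(I)\le\ip(I)$ and $\delta\le\varepsilon$. Your form is slightly cleaner than the paper's $\delta/\varepsilon \le \ip(I)/\lp(I)$. It avoids dividing by $\varepsilon$, so it also covers $\varepsilon=0$, and it makes the needed positivity of $\lp(I)$ and $\lp(I')$ explicit.
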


\begin{proof}
    The first case is trivial. For the second case, observe that
    \[
    \frac{\ip(I)}{\lp(I)} \le \frac{\ip(I)-\delta}{\lp(I)-\varepsilon} \Longleftrightarrow \frac{\delta}{\varepsilon} \le \frac{\ip(I)}{\lp(I)},
    \]
    and since $\ip(I)/\lp(I)\ge 1$ by definition, the statement follows.
\end{proof}
Lemma \ref{lem:gap_change} essentially allows us to prune or ``clean'' an instance, provided that the decrease in the fractional cost is at least as large as the decrease in the integer cost. The high-level idea is to construct a sequence of nested sets $S_0 \supseteq S_1 \supseteq \cdots \supseteq S_k$ along with reductions $f_{q+1} : S_q \to S_{q+1}, \; q \in \{0\} \cup [k - 1]$, where $[x] := \{1,2, \ldots, x\}$ for some $x \in \enne$, so that the final set $S_k$ consists of instances structured sufficiently to allow a direct analysis of the gap.

\begin{definition}[IGPR-chain]\label{def:igpr_chain}
    A chain of integrality gap preserving reductions (IGPR-chain) is a tuple made up of a sequence of subsets $S_0 \supseteq S_1 \supseteq \ldots \supseteq S_k$, and a sequence of functions $f_1, \ldots, f_k$ such that $S_q \to_{f_{q+1}} S_{q+1}$ for each $q=0, \ldots, k-1$. The length of the chain is $k$.
\end{definition}

In practice, the reductions will heavily exploit polyhedral characterizations of vertices. Starting from a vertex of $\lp(I)$, we iteratively modify $I$ such that the corresponding new vertex has even stronger properties as the one before, and the gap does not decrease.

\section{Weighted Vertex Cover}\label{sec:vertex_cover}
The \emph{Weighted Vertex Cover} problem (WVC) is defined as follows: given a graph $G=(V, E)$ and a non-negative node-weight function $w : V \to \erre_{>0}$, find a set of vertices $U\subseteq V$ such that $\{u,v\} \cap U \ne \emptyset$ for every edge $uv \in E$, and $w(U)=\sum_{v \in U} w_v$ is overall minimized. We require $w_v > 0$ for each node $v \in V$, or else $v$ could always be selected. The set of all instances is $S_0 := \{((A_G, I_n),(1_m, 0_n), w)\,|\,G$ is a node-weighted graph on $n$ nodes, $w\in \erre_{>0} ^n, n \in \enne \}$. For simplicity, an instance will be denoted by $(G, w)$.

The IP formulation \eqref{ip} can be rewritten in the following simpler way: 
\begin{subnumcases}{\ip(G,w):= \min \sum\limits_{v\in V} w_v x_v \,\,\,\text{ s.t.} }
    x_u + x_v \ge 1, & $uv \in E$, \label{edge_const} \\ 
    x_v \in \{0,1\}, & $v \in V$, \label{pos}
\end{subnumcases}
and LP$(G, w)$ is derived by replacing \eqref{pos} with  $1 \ge x_v \geq 0$ for all $v \in V$. The integrality gap is defined as $\ig(G,w)=\frac{\ip(G,w)}{\lp(G,w)}$, the global gap $\ig(WVC)$ is defined by \eqref{Gap} and is known to be $2$ \cite{nemhauser1974properties}; we provide an alternative, unified proof via an IGPR-chain.

As we hinted before, gap reductions exploit strong polyhedral properties of vertices of the LP-relaxation. For vertex cover, this property is the half-integrality of the corresponding polyhedron \eqref{edge_const}--\eqref{pos}; let us denote it by $\lp(G)$. Nemhauser and Trotter \cite{nemhauser1974properties} state that for any vertex $x^*$ of $\lp(G)$, $x^*_v \in \{0, \frac{1}{2}, 1\}$ holds for each node $v \in V$. A well-known result relates this observation to the concept of \emph{crown decompositions} in graphs \cite{Chor05}, which is a tri-partition $V=(C,H,B)$ such that $1)$ $C$ is an independent set, $2)$ there is no edge between $C$ and $B$, and $3)$ there exists a matching between $C$ and $H$ that covers every node in $H$. In particular, if we let $C(x^*)=\{v \in V: x^*_v = 0\},\,\, H(x^*)=\{v \in V: x^*_v = 1\},\,\, B(x^*)=\{v \in V: x^*_v = \frac{1}{2}\}$ for an optimal vertex $x^*$ of $\lp(G,1_n)$, the partition $(C(x^*), H(x^*), B(x^*))$ is a crown decomposition of $G$. 
A straightforward consequence of this connection is that a minimal vertex cover of $G$ can be obtained by taking the union of $H(x^*)$ and a minimal vertex cover of $B(x^*)$; a property which is heavily exploited in many practical applications and fixed parameter tractable algorithms.

If we replace $1_n$ with an arbitrary $w \in \erre_{>0}^n$, properties $1)$ and $2)$ hold automatically. Property $3)$ is replaced by a weaker one: each node $v\in H(x^*)$ has a neighbour in $C(x^*)$, or else we could decrease $x^*_v$ from $1$ to $\frac{1}{2}$ while strictly decreasing the total weight (recall $w_v > 0$ holds for all nodes $v \in V$). We build out first reduction based on this weaker crown-decomposition. In particular, we show in Lemma \ref{lem:vc_1} that restricting ourselves to the induced subgraph $G[B(x^*)]$ preserves the integrality gap. We can iteratively apply the same reasoning until only the trivial crown decomposition exists, corresponding to the single optimal vertex $x^* \equiv \frac{1}{2}$ of $\lp(G)$. From here, it follows that adding edges to the current graph maintains the gap, allowing us to focus solely on complete graphs (Lemma \ref{lem:vc_2}). We finally assert the gap is at most $2$ for such instances, proving the following theorem:

\begin{restatable}{theorem}{thmvc}\label{thm:ig_vertex_cover_final}
    $\ig(WVC)=2$.
\end{restatable}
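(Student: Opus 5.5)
The proof has two halves: a lower bound from an explicit family, and an upper bound from the IGPR-chain sketched just before the statement.

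\textbf{Lower bound.} Take $G=K_n$ with $w\equiv 1$. Any vertex cover of $K_n$ must contain all but one node, so $\ip=n-1$. The point $x\equiv\frac12$ is LP-feasible, and summing the edge constraints over a perfect (or near-perfect) matching shows it is optimal, so $\lp=\frac n2$. Hence $\ig=\frac{2(n-1)}{n}\to 2$, and $\ig(WVC)\ge 2$.

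\textbf{Upper bound, step 1 (Lemma~\ref{lem:vc_1}).} Fix an instance $(G,w)$ and an optimal half-integral vertex $x^*$ of $\lp(G,w)$, with the weak crown decomposition $(C,H,B)=(C(x^*),H(x^*),B(x^*))$. We map $(G,w)$ to $(G[B],w|_B)$ and apply the second case of Lemma~\ref{lem:gap_change}.
\begin{itemize}
\item \emph{LP side.} The LP value drops by exactly $\varepsilon:=w(H)$. Indeed, $x^*|_B\equiv\frac12$ is feasible for $G[B]$, so $\lp(G[B])\le \lp(G)-w(H)$. Conversely, extend any feasible $y$ for $G[B]$ by $1$ on $H$ and $0$ on $C$. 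Since $C$ is independent and has no edges to $B$, this is feasible for $G$, giving $\lp(G)\le \lp(G[B])+w(H)$.
\item \emph{IP side.} Adding $H$ to any cover of $G[B]$ gives a cover of $G$, so $\ip(G)\le \ip(G[B])+w(H)$. Thus the IP drops by some $\delta\le w(H)=\varepsilon$, and since $\delta=\ip(G)-\ip(G[B])\ge 0$ by monotonicity, $0\le\delta\le\varepsilon$.
\end{itemize}
Lemma~\ref{lem:gap_change} then gives $\ig(G,w)\le \ig(G[B],w|_B)$. The case $B=\emptyset$ forces $\lp=\ip$, so the gap is $1$ and the instance can be discarded. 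Iterating terminates because $|V|$ strictly decreases whenever $C\cup H\neq\emptyset$. So we reduce to the set $S_1$ of instances whose only optimal LP vertex is $x\equiv\frac12$, or at least where $x\equiv\frac12$ is LP-optimal.

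\textbf{Upper bound, step 2 (Lemma~\ref{lem:vc_2}).} On $S_1$, replace $G$ by the complete graph on $V$ with the same weights. The LP value stays $\frac12 w(V)$, since $x\equiv\frac12$ remains feasible and the original optimum is a lower bound because adding constraints cannot decrease the LP. The IP can only increase. By the first case of Lemma~\ref{lem:gap_change}, the gap does not decrease.

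\textbf{Final computation.} For $K_n$ with weights $w$, let $v$ be a maximum-weight node. The optimal cover is $V\setminus\{v\}$, so $\ip=w(V)-w_v$. We have $\lp\ge\frac12 w(V)$ because it equals $\frac12 w(V)$ on $S_1$, and more generally it is at least $\frac12 w(V)$ as long as $x\equiv\frac12$ is optimal. Hence $\ig\le 2(w(V)-w_v)/w(V)<2$.

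The main obstacle is step 1: making the reduction a well-defined map within the class, namely showing that in the reduced instance $x\equiv\frac12$ is optimal. This follows since any improvement there would, after extension, beat $x^*$. The other delicate point is confirming that every $v\in H$ having a neighbour in $C$ is not actually needed for the inequalities above, only for termination and structure.
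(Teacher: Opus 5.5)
Your proof is correct and uses the paper's skeleton (crown reduction to $G[B(x^*)]$, completion to $K_n$, explicit computation on $K_n$), but you prove the crown step more directly.

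The paper first inflates the weights on $C(x^*)$ to an auxiliary $w^*$ and passes from $w$ to $w^*$ by Case~1 of Lemma~\ref{lem:gap_change}. It then needs the weak crown property (every vertex of $H(x^*)$ has a neighbour in $C(x^*)$) to get the exact identity $\ip(G,w^*)=w(H(x^*))+\ip(G',w')$, and only then applies Case~2. You avoid both $w^*$ and that property. The one-sided bound $\ip(G)\le \ip(G[B])+w(H)$, obtained by adding $H$ to a cover of $G[B]$, already gives $0\le\delta\le\varepsilon$, and your extension argument fixes the LP drop exactly.

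That exact LP computation also gives $\lp(G[B])=\tfrac12 w(B)$ after a single step. So $x\equiv\tfrac12$ is already LP-optimal on $G[B]$, and $\lp(G[B])>0$ whenever $B\neq\emptyset$. This makes three things in the paper unnecessary: the iteration, the uniqueness condition in the definition of $S_1$, and the isolated-vertex remark. As you note, the completion step only needs $x\equiv\tfrac12$ to be LP-optimal: adding edge constraints cannot lower the LP, and $x\equiv\tfrac12$ stays feasible. The paper, by contrast, carries uniqueness through that step. The paper's detour buys an exact integer identity and an intermediate instance that fits its reduction-chain narrative; yours is the leaner proof of the theorem.

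One harmless slip in the lower bound: for odd $n$, a near-perfect matching only certifies $\lp\ge\frac{n-1}{2}$, not $\frac n2$. This does not matter, because the lower bound only uses $\lp(K_n,1_n)\le\frac n2$, which is just feasibility of $x\equiv\tfrac12$.
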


Precise definitions and full details are provided in Appendix \ref{app:vertex_cover}. A summary of key steps can be found in Table \ref{tab:vc_summary}.

\begin{table}[ht]
    \centering
    \begin{tabular}{|c|c|c|}
         \hline
         Subset & Defining property & Reduction from $S_{i-1}$ \\
         \hline\hline
         $S_0$ & All instances $(G,w)$ & -- \\
         \hline
         $S_1$ & $(G,w)$ s.t. $x\equiv \frac{1}{2}$ is the only opt. sol. to $\lp(G,w)$ & Crown reduction \\
         \hline
         $S_2$ & $(K_n,w)$ in $S_1$ & Adding all missing edges \\
         \hline
    \end{tabular}
    \caption{Summary of the reductions for WVC. $\ig(S_2)=2$ is asserted manually.}
    \label{tab:vc_summary}
\end{table}

\section{Multi-Knapsack}\label{sec:multi_knap}

In the one-dimensional $0-1$ multiple knapsack problem (\emph{multiple knapsack} or \emph{multi-knapsack problem}), an input instance is encoded by a triple $(C,w,p)$, where $C=(C_1, \ldots, C_m) \in \enne_{>0} ^m$ is the capacity vector of $m$ knapsacks and $p=(p_1, \ldots, p_n) \in \enne_{>0}^n$, $w=(w_1, \ldots, w_n) \in \enne_{>0}^n$ are the profit and weight vectors of $n$ items. The problem consists of selecting a subset of items to be packed in each knapsack (an item can be chosen at most once) such that each knapsack's capacity constraint is respected and the combined profit of selected items is maximized. When $m$ is fixed, which is the case we are considering in this paper, the problem (denoted as $MK_m$) is weakly $\textsf{NP}$-hard and admits an FPTAS \cite{ibarra_kim,lawler_fptas}.

The classical integer linear formulation has a variable $x_{j,i}$ for each item $j \in [n]=\{1, \ldots, n\}$ and each knapsack $i\in [m]=\{1, \dots, m\}$.
\begin{subnumcases}{\ip(C,w,p):= \max \sum\limits_{j \in [n]}\sum\limits_{i \in [m]} x_{j,i} \cdot p_j \,\,\,\text{ s.t.} }
    \sum_{j \in [n]} x_{j,i} \cdot w_j \leq C_i, & $\forall i\in [m]$, \label{knap_const} \\ 
    \sum_{i \in [m]} x_{j,i} \le  1, & $\forall j\in [n]$, \label{item_const} \\
    x_{j,i} \in \{0,1\}, & $\forall j \in [n], \forall i \in [m]$. \label{pos_const_ks}
\end{subnumcases}
Setting $x_{j,i}=1$ corresponds to placing item $j$ in knapsack $i$. The constraint \eqref{knap_const} ensures compliance with knapsack capacities, whereas \eqref{item_const} reflects that each item can be selected for at most one knapsack. In the linear relaxation $\lp(C,w,p)$, constraint \eqref{pos_const_ks} is replaced by $x_{j,i}\ge 0$ for each $j\in[n]$ and $i\in[m]$. For convenience, both the linear relaxation and its optimum are denoted by $\lp(C,w,p)$. Since the problem is a maximization one, the definition in \eqref{gap} is inverted to consistently have gap values at least one: $\ig(C,w,p):= \lp(C,w,p)/\ip(C,w,p)$. Even though $MK_m$ is an ``easy'' optimization problem admitting an FPTAS, the above natural formulation has a gap of $m+1$ \cite{martello_toth}. A corresponding $(m+1)$-approximation algorithm exploits the fact that a special optimal solution of $\lp(C,w,p)$ (denoted as $x^*_s$) can be found by sorting items as $p_1 / w_1 \ge \ldots \ge p_n/w_n$, and greedily packing them into the current knapsack while moving excessive parts to the next one.

We build our reductions on this observation as well. A first preprocessing phase (Lemmas \ref{lem:mk_1}, \ref{lem:mk_2} and \ref{lem:mk_3}) clears the input from unnecessary knapsacks and items, guaranteeing that each knapsack is filled completely in $x^*_s$, and no items are ``hanging out'' from the last knapsack. The key reduction step comes next (Lemma \ref{lem:mk_4}), in which (almost) all pairs of items are merged iteratively that would otherwise fit inside the largest knapsack. In the resulting instance, any feasible integer solution contains at most one item in each knapsack, potentially with the exception of one single knapsack containing the pair $(n,j)$ for some $j \in [n-1]$. A follow-up step (Lemma \ref{lem:mk_5}) strengthens the previous one so that at most two knapsacks contain items in the integer optimal assignment. The last reduction phase (Lemma \ref{lem:mk_6}) ensures exactly one item is used in any optimal assignment, eliminating the remaining edge cases from the previous step caused by item $n$. The final instance space allows a manual calculation of the respective gap value, yielding the following theorem:

\begin{restatable}{theorem}{thmmk}\label{thm:mulit_knap}
    $\ig(MK_m)=m+1$.
\end{restatable}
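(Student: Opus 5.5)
The proof has two directions: a lower-bound family of instances, and an upper bound obtained from an IGPR-chain that follows the outline just given.

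For the lower bound I take $m$ knapsacks of capacity $C_i = k$ and $m+1$ items with $w_j = k+1$ and $p_j = 1$. No item fits anywhere, so $\ip = 0$, which makes the ratio degenerate. I therefore use the standard variant: capacities $C_i = 2k$ and $m+1$ items of weight $k+1$ and profit $1$. Two items never share a knapsack, so $\ip = m$. The LP can fill every knapsack completely, giving $\lp = \min\{m+1,\, 2km/(k+1)\} \to m+1$, which yields ratio $(m+1)/m$ and is not enough. The right family is $m+1$ identical items of weight $k+1$ and profit $1$ placed in $m$ knapsacks of capacity $(m+1)(k+1)/m$, rounded suitably, so that no knapsack holds two items and the total capacity holds all $m+1$ items fractionally. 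Then $\ip = m$ and $\lp = m+1$ give ratio $(m+1)/m$, again not $m+1$. To reach $m+1$ I make the profits nonuniform, in the spirit of the greedy $x^*_s$: take one item of large weight $W$ and profit $P$, together with tiny items. Concretely, I will look for a family where $\ip = 1$ essentially (every knapsack holds at most one item of value about $1$) while the LP packs about $m+1$ units of profit. Taking $m+1$ items each of weight $C+1$ and profit $1$ with all $C_i = C$ gives $\ip = 0$; adding one extra item of weight $1$ and profit $\epsilon$ fixes the denominator, but then $\ip = \epsilon$ and $\lp \approx m\cdot C/(C+1)$, which blows up. That contradicts the bound $m+1$, so the bounds must have been intended with $w_j \le C_i$ (items that fit). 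Under that assumption the family of $m+1$ items of weight $C$ and profit $1$ with capacities $C$ gives $\ip = m$. The correct extremal family is therefore the one the reductions end in: $m$ knapsacks of equal capacity $C$ and items of weight $\frac{C}{2}+1$, so that each knapsack takes one item integrally, with profits chosen so that the LP value tends to $(m+1)$ times the integral value. I will extract the exact family from the final reduced instance space rather than guess it now.

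For the upper bound I follow the IGPR-chain outlined above. First, Lemmas \ref{lem:mk_1}--\ref{lem:mk_3} reduce to instances in which every knapsack is exactly filled by $x^*_s$ and no item straddles past the last knapsack. Next, Lemma \ref{lem:mk_4} merges pairs so that an integral solution uses at most one item per knapsack, apart from one exceptional pair $(n,j)$. Lemma \ref{lem:mk_5} then restricts the optimal assignment to at most two knapsacks, and Lemma \ref{lem:mk_6} restricts it to a single item. On the final class $\ip$ equals the single largest fitting profit $p_{\max}$. Because the greedy solution splits at most $m$ items across knapsack boundaries, $x^*_s$ is supported on at most $m$ full items plus $m$ fractional pieces. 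In fact, the sorted prefix structure shows that each knapsack's contents form a block whose total profit is at most $p_{\max}$ plus the next split piece. I then bound $\lp \le (m+1)\,p_{\max}$ by charging each knapsack's integral block (by optimality, at most $\ip$) and the final split item (at most $p_{\max}$).

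The main obstacle is this last charging step. On the reduced class, a knapsack's block in $x^*_s$ could be packed integrally if the items are whole, but the split items are shared between consecutive knapsacks. I will handle this by the standard argument: for each $i$, either the whole items in knapsack $i$ or the single split item leaving it is a feasible integral packing. Summing then gives at most $2m$ candidate solutions, and the finer structure of the reduced instances (at most one item per knapsack) is needed to reduce this to $m+1$.
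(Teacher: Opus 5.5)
Your proposal has two genuine gaps, one in each direction.

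\textbf{Lower bound.} You never produce a valid lower-bound family. Every family you try gives $(m+1)/m$. The one you finally settle on (equal capacities $C$, items of weight $\frac{C}{2}+1$) cannot work for $m\ge 2$, whatever profits you choose. When all capacities are equal, each of the at most $m$ split items of $x^*_s$ fits in every knapsack, so they can be packed one per knapsack. The items that $x^*_s$ packs whole form another feasible packing. Hence $\lp\le 2\,\ip$.

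The missing idea is asymmetry, which is also where the reductions actually end: in $S_6$, only the largest knapsack can host an item at all, and only one, while knapsacks $2,\dots,m$ are used only fractionally. Concretely, take $m+1$ items with $w_j=N$, $p_j=1$, and capacities $C_1=2N-1$, $C_2=\dots=C_m=N-1$. No item fits in knapsacks $2,\dots,m$ and no two fit together in knapsack $1$, so $\ip=1$. Meanwhile $\lp=C_{\text{sum}}/N=m+1-m/N\to m+1$. Up to scaling, this is the paper's extremal family.

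\textbf{Upper bound.} Your charging argument stalls at $2m$, and you leave the reduction to $m+1$ open. The observation you are missing is that the integral blocks should not be charged separately. The items that $x^*_s$ packs whole, taken over all knapsacks at once, form a single feasible integral solution: the whole items sitting in knapsack $i$ fit in knapsack $i$. So together they contribute at most $\ip$. Each of the at most $m$ split items fits alone in the largest knapsack, so each contributes at most $\ip$. This already gives $\lp\le(m+1)\,\ip$, even without the chain.

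On the paper's route, the analogous fact is what yields $\ig(S_6)\le m+1$. An optimal solution uses a single item and all profits are positive. Hence no item fits in knapsacks $2,\dots,m$, and no two items fit together in knapsack $1$. So $x^*_s$ contains at most one whole item, and together with the property of $S_2$ this gives $n\le m+1$. Therefore $\lp\le p_{\text{sum}}\le(m+1)\max_j p_j=(m+1)\,\ip$.

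Your weaker summary ``at most one item per knapsack'' does not suffice, since it still allows $m$ whole items plus $m$ split ones. Also, your claim that $x^*_s$ has at most $m$ whole items is false for general instances.
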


Precise definitions and full details are provided in Appendix \ref{app:multi_knap}. A summary of key steps is presented in Table \ref{tab:mk_summary}.

\begin{table}[ht]
    \centering
    \begin{tabular}{|c|c|c|}
         \hline
         Subset & Defining property & Reduction from $S_{i-1}$ \\
         \hline\hline
         $S_0$ & All instances $(C,w,p)$ & -- \\
         \hline
         $S_1$ & No empty knapsack in $x^*_s$ & Deleting empty knapsacks \\
         \hline
         $S_2$ & No unassigned item in $x^*_s$ & Deleting unassigned items \\
         \hline
         $S_3$ & All knapsacks full in $x^*_s$ & $w':= w \cdot \nicefrac{C_{\text{sum}}}{w_{\text{sum}}}$ \\
         \hline
         $S_4$ & $\forall j_1, j_2\ne n: w_{j_1}+w_{j_2} > C_1$ & Merging items \\
         \hline
         $S_5$ & $\forall \sigma_{opt},\forall i>1: \sigma^{-1}_{opt}(i)\subseteq \{n\}$ & $C'_i:=C_i- w_j, w'_j := 0$ \\
         \hline
         $S_6$ & $\sigma_{opt}^{-1}(1)=\{j\}$, $\sigma_{opt}^{-1}(i)=\emptyset$ else & Case-dependent \\
         \hline
    \end{tabular}
    \caption{Summary of the reductions for $MK_m$. Properties are maintained when moving downward. $\ig(S_6)=m+1$ is asserted manually.}
    \label{tab:mk_summary}
\end{table}

\section{Unrelated machine scheduling}\label{sec:machine_sched}

In the \emph{unrelated parallel machine scheduling} problem (\emph{machine scheduling} for short), an input instance is given by a matrix $P=(p_{j,i})\in \enne_+^{n \times m}$, and represents a collection of $n$ jobs labelled by $[n]=\{1,\ldots, n\}$ that need to be assigned on $m$ machines $[m]=\{1,\ldots, m\}$, where job $j$ has a processing time $p_{j,i}\in \enne_+$ on machine $i$. The goal is to minimize the \emph{makespan} of a schedule $\sigma: [n]\to [m]$, defined as $C_{max}(\sigma,P)=\max\{C_{\sigma,P}(i): i \in [m]\}$, where $C_{\sigma,P}(i)=\sum\limits_{j \in \sigma^{-1}(i)} p_{j,i}$ is the \emph{completion time} of machine $i$ according to the assignment $\sigma$. It is well-known that $R||C_{max}$ (notation of \cite{machine_scheduling_review}) cannot be approximated better than $\frac{3}{2}$ \cite{lenstra_shmoys_tardos}, implying an integrality gap of at least $\frac{3}{2}$ for any polynomially solvable relaxation.

For the sake of our reductions, we consider a slightly more general problem in which some jobs $j$ cannot be assigned on some machines $i$. It is equivalent to setting $p_{j,i}=\infty$, which renders $C_{\sigma,P}$ to be $\infty$ if the sum contains an infinite element. For a job $j$, the set $\cM_j = \{i \in [m]: p_{j,i}\ne \infty\}$ contains machines on which $j$ can be placed. We further require each row and column of $P$ to have at least one non-infinite entry, eliminating degenerate cases where a machine cannot accommodate any jobs or a job cannot be placed on any of the machines. The problem, known as \emph{unrelated restricted assignment problem} and denoted as $R|\cM_j|C_{max}$, has the same approximation guarantees as its special case $R||C_{max}$.

The classic IP-formulation of $R|\cM_j|C_{max}$ has a variable $x_{j,i}$ for each pair $(j,i)$ such that $p_{j,i}\ne \infty$.
\begin{subnumcases}{\ip(P):= \min T \,\,\,\text{ s.t.} }
    \sum_{j: i \in \cM_j} x_{j,i}\cdot p_{j,i} \leq T, & $\forall i\in [m]$, \label{makespan_const} \\ 
    \sum_{i \in \cM_j} x_{j,i} = 1, & $\forall j\in [n]$, \label{job_const} \\
    x_{j,i} \in \{0,1\}, & $\forall j \in [n], \forall i \in \cM_j$. \label{pos_const}
\end{subnumcases}
Constraint \eqref{makespan_const} and the objective function ensure the maximal completion time of machines is minimal, while \eqref{job_const} guarantees each job gets assigned to exactly one machine. The linear relaxation $\lp(P)$ is obtained by replacing \eqref{pos_const} with $x_{j,i}\ge 0$ for all $j \in [n], i \in \cM_j$. For convenience, we denote by $\lp(P)$ both the linear program and its optimal value. The formulation has an unbounded integrality gap (see Introduction); in particular the gap restricted to $m$ machines is exactly $m$.

Vertices of a parametrized version of $\lp(P)$ have been studied in \cite{lenstra_shmoys_tardos} and \cite{Vazirani03} (chapter $17.3$), with the help of an auxiliary bipartite graph $G(x^*)$ that encodes fractional variables of an $\lp$-optimal solution $x^*$. We repeat their arguments and prove in Lemma \ref{lem:vert_char} that there exists an optimal vertex of $\lp(P)$ such that $G(x^*)$ is a forest. Building on this, we construct an IGPR-chain that can be divided into three main conceptual phases. In the first three steps, (Lemmas \ref{lem:ms_1}, \ref{lem:merge_int} and \ref{lem:graph_conn}), we perform a preliminary cleaning of the instance—eliminating unused assignments—to ensure connectivity of the graph, and we consolidate integer jobs. Second (Lemmas \ref{lem:int_job} and \ref{lem:ms_5}), we structurally simplify the leaves of $G(x^*)$, which by now is known to be a tree. Finally (Lemmas \ref{lem:ms_6} and \ref{lem:ms_7}), we iteratively ``prune'' the tree around its leaves—rebalancing jobs and deleting machines—until only a single fractional job remains. Throughout this process, the recursive application of the chain ensures that local simplifications do not violate the global structural properties obtained prior to that point. Eventually, we prove

\begin{restatable}{theorem}{thmms}\label{thm:machine_sched}
    $\ig(R(\le M)|\cM_j|C_{max}) = M$, $\ig(Rm||C_{max})=m$.
\end{restatable}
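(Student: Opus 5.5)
Both equalities follow from a lower-bound instance together with an upper bound obtained through the IGPR-chain sketched above. For the lower bound, I use the instance from the Introduction: one job with $p_{1,1}=\dots=p_{1,m}=1$. Then $\ip(P)=1$, while $x_{1,i}=\frac1m$ gives $\lp(P)=\frac1m$, so $\ig=m$. Since this instance has no infinite entries, it lies in $Rm||C_{max}$, and with $m=M$ it also lies in $R(\le M)|\cM_j|C_{max}$. This gives both lower bounds. For the upper bound it suffices to prove $\ig(P)\le m$ for every instance of $R|\cM_j|C_{max}$ with $m$ machines. Indeed, $R(\le M)$ allows $m\le M$, so its gap is at most $M$, and $Rm||C_{max}$ is a special case of the restricted model.

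For the upper bound I follow the announced chain. By Lemma~\ref{lem:vert_char}, fix an optimal vertex $x^*$ of $\lp(P)$ whose graph $G(x^*)$ is a forest, and apply the reductions in order.
\begin{enumerate}
\item Lemma~\ref{lem:ms_1} deletes pairs $(j,i)$ with $x^*_{j,i}=0$ by setting $p_{j,i}=\infty$. This leaves $\lp$ unchanged and cannot decrease $\ip$.
\item Lemma~\ref{lem:merge_int} merges the integrally assigned jobs on each machine into a single job.
\item Lemma~\ref{lem:graph_conn} reduces to the case where $G(x^*)$ is a tree, by keeping a component that attains the maximal gap.
\item Lemmas~\ref{lem:int_job} and \ref{lem:ms_5} normalise the leaves.
\item Lemmas~\ref{lem:ms_6} and \ref{lem:ms_7} repeatedly prune leaf machines, rebalancing load onto neighbours and deleting machines.
\end{enumerate}
Each step is justified by Lemma~\ref{lem:gap_change}. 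None of the steps increases the number of machines, so the final instances still have at most $m$ machines.

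In the final class there is a single fractional job $j$, spread over some machines $\mathcal{M}'\subseteq[m]$ with $|\mathcal{M}'|=k\le m$, and possibly one integral job per machine. I then bound the gap of this class directly. Let $T=\lp(P)$. For each $i\in\mathcal{M}'$ we have $x_{j,i}p_{j,i}\le T$, and $\sum_i x_{j,i}=1$, so some $i$ has $x_{j,i}\ge 1/k$, which gives $p_{j,i}\le kT$. Assigning $j$ integrally to that machine yields makespan at most $T+kT$. This bound is too weak, so the integral jobs must first have been eliminated by the chain. With only job $j$ left, $\lp=\bigl(\sum_{i} 1/p_{j,i}\bigr)^{-1}$ and $\ip=\min_i p_{j,i}$. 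Their ratio is at most $k\le m$, because $\sum_i 1/p_{j,i}\le k/\min_i p_{j,i}$.

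The main obstacle is ensuring that the pruning steps (Lemmas~\ref{lem:ms_6} and \ref{lem:ms_7}) really end at the pure single-job instance, with no residual integral load, and that every step keeps $\lp$ no larger relative to $\ip$ while never adding machines. If residual integral jobs remain, my first attempt would be a further application of case 2 of Lemma~\ref{lem:gap_change}: shrink the integral processing times on all machines of $\mathcal{M}'$ by a common $\varepsilon$ and show that $\ip$ drops by at most $\varepsilon$.
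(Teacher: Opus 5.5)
There is a genuine gap, at the final step. Your plan is the paper's: the one-job lower bound, then the chain $S_0\to\cdots\to S_7$ and a direct computation on $S_7$. But you never show that the final instances carry no integral jobs. You flag this yourself as the main obstacle and leave it open. Your fallback of shrinking integral loads by a common $\varepsilon$ is not carried out. It also only works while every machine of $\mathcal{M}'$ still carries an integral job, since then both optima drop by exactly $\varepsilon$. Once one of these loads reaches $0$, Lemma~\ref{lem:gap_change} no longer certifies further shifts.

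The missing observation is that the chain is nested, $S_7\subseteq S_6\subseteq\cdots\subseteq S_3$, so the final instances keep every earlier property.
\begin{itemize}
\item By $S_3$, $G(P)$ on $J_F\cup[m]$ is connected. With a unique fractional job $j$, every machine must be adjacent to $j$, since a machine holding only an integer job would be an isolated node. So every machine is a leaf-machine.
\item By $S_4$, leaf-machines hold no integer jobs.
\item Hence an instance of $S_7$ is literally one job on $k\le m$ machines. By $S_5$ it even has $p_{j,i}=\ip(P)$ for all $i$, so its gap is exactly $k$.
\end{itemize}
With this, your harmonic-mean bound closes the argument.

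The estimate you discarded as too weak was also just needlessly loose, and done carefully it removes the obstacle on its own. If $a_i$ is the integral load on $i$, keep it inside the LP machine constraint $a_i+x_{j,i}p_{j,i}\le T$. Then $x_{j,i}\ge 1/k$ gives $a_i+p_{j,i}\le a_i+k(T-a_i)\le kT$, not $T+kT$.

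The same estimate applied to all jobs of an arbitrary instance proves the upper bound directly. Send each job to a machine carrying at least a $1/m$ fraction of it in an optimal LP solution; then each machine load is at most $m\sum_j x_{j,i}p_{j,i}\le m\,\lp(P)$. This is the elementary rounding mentioned in the paper's introduction. The chain is only needed for the paper's structural aim of exhibiting the gap-maximising instances.
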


Precise definitions and full details are provided in Appendix \ref{app:machine_sched}. A summary of key steps is presented in Table \ref{tab:ms_summary}.

\begin{table}[ht]
    \centering
    \begin{tabular}{|c|c|c|}
         \hline
         Subset & Defining property & Reduction from $S_{i-1}$ \\
         \hline\hline
         $S_0$ & All instances $P$ & -- \\
         \hline
         $S_1$ & $x^*_{j,i}=0 \Rightarrow p_{j,i}=\infty$ & Increasing $p_{j,i}$ to $\infty$ \\
         \hline
         $S_2$ & $\le 1$ integer job on each $i$ & Merging jobs \\
         \hline
         $S_3$ & $G(P)$ is a tree & Restricting to a component \\
         \hline
         $S_4$ & Leaf-machines have no int. $j$ & $p'_{j_{int},i}=0$, $p'_{j_{frac},i}=p_{j_{frac},i} + p_{j_{int},i}$ \\
         \hline
         $S_5$ & $p_{j,i} = \ip(P)$ $\forall$ frac. $j$ $\forall$ leaf $i$ & $p_{j,i}:= \ip(P)$ \\
         \hline
         $S_6$ & One frac. $j$ or $u$ has no int. $j$ & $p'_{j,u}=0$, $p'_{v_d,u}=p_{v_d,u} + p_{j,u}$ \\
         \hline
         $S_7$ & $\exists!$ frac. $j$ & $p'_{v_d, u}=0, p'_{w,u}=p_{v_d,u} + p_{w,u}$ \\
         \hline
    \end{tabular}
    \caption{Summary of the reductions for $R(\le M)|\cM_j|C_{max}$. Properties are maintained when moving downward. $\ig(S_7)=M$ is asserted manually.}
    \label{tab:ms_summary}
\end{table}

\section{Restricted assignment problem via the configuration LP}\label{sec:conifg_lp}

A fundamental goal of the scheduling community is to find an IP-formulation with the best theoretically possible gap of $\frac{3}{2}$. The \emph{configuration} LP (CLP) was proposed as an alternative to the natural formulations by Bansal and Sviridenko \cite{Bansal06}; and while some special cases (including $R||C_{max}$) already admit a gap of $2$ \cite{Ebenlendr14,Verschae14}, improvements have been made for the (identical) restricted assignment framework \cite{Jansen17-2, Jansen17, Svensson11} (in which a job $j$ has the same processing time $p_j\in \enne_+$ on all machines $i\in \cM_j$), placing $\ig(P|\cM_j|C_{max})$ between $\frac{3}{2}$ and $\frac{11}{6}$. In the CLP formulation, a parameter $T$ is introduced as a proxy for minimizing the makespan, and instead of assigning individual jobs, it considers sets of jobs called \emph{configurations}. Given an instance $P$ with $n$ jobs and $m$ machines, let $\cC_i(P,T)$ denote the family of configurations whose total processing time on machine $i$ is at most $T$:
\begin{equation}
    \label{config} \cC_i(P,T) := \left\{C \subseteq [n]: \sum\limits_{j \in C} p_{j,i} \le T\right\}. 
\end{equation}
The linear program uses a variable $x_{i,C}$ for each configuration $C$ and machine $i$, and is defined as follows:
\begin{subnumcases}{\clp(P,T):} 
    \sum\limits_{C \in \cC_{i}(P,T)} x_{i, C} = 1, & $i \in [m]$, \label{machine_const}\\ 
    \sum\limits_{i \in [m]} \sum\limits_{C \in \cC_i(P,T): j \in C} x_{i,C} =1, & $j \in [n]$, \label{ass_const}\\ 
    x_{i,C} \ge 0, & $i \in [m], \, C \in \cC_i(P,T)$. \label{nonneg_const} 
\end{subnumcases}

Let $\lp(P)$ denote the smallest integer value of $T$ for which $\clp(P,T)$ is feasible, and let $\ip(P)$ denote the optimal makespan of the integer problem. It is immediate that $\lp(P) \le \ip(P)$: constraint~\eqref{machine_const} ensures that each machine selects exactly one configuration, while~\eqref{ass_const} guarantees that each job is assigned exactly once. The integrality gap is given as $\ig(P) = \ip(P) / \lp(P)$.

We present further evidence for the strength of the CLP relaxation by proving that it admits a gap of $1$ for some ``easy'' subcases of $P|\cM_j|C_{max}$. Our first result concerns the two-machine case, denoted as $P2|\mathcal{M}_j|C_{\max}$.

\begin{restatable}{theorem}{thmmequaltwo}\label{thm:p2}
    $\ig(P2|\mathcal{M}_j|C_{\max})=1$.
\end{restatable}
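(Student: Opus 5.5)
The plan is a direct rounding argument: from a feasible solution of $\clp(P,T)$ I will read off an integral schedule of makespan at most $T$. No IGPR-chain is needed, since the two-machine structure already makes the configuration LP integral in the sense that matters here. Since $\ig(P)\ge 1$ always holds, it suffices to show $\ip(P)\le \lp(P)$ for every instance of $P2|\cM_j|C_{\max}$.

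Set $T:=\lp(P)$ and fix a feasible solution $x$ of $\clp(P,T)$. Partition the jobs into three classes:
\begin{itemize}
\item $A$, the jobs with $\cM_j=\{1\}$;
\item $B$, the jobs with $\cM_j=\{2\}$;
\item $F$, the jobs with $\cM_j=\{1,2\}$.
\end{itemize}
Jobs in $A$ can only appear in configurations of machine $1$. Constraint~\eqref{ass_const} then gives $\sum_{C\ni j} x_{1,C}=1$, and combining this with~\eqref{machine_const} forces every configuration $C$ in the support of $x_{1,\cdot}$ to contain all of $A$. Symmetrically, every support configuration of machine $2$ contains $B$. Hence each support configuration of machine $1$ has the form $A\cup S_C$ with $S_C\subseteq F$ and $p(A)+p(S_C)\le T$. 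Likewise each support configuration of machine $2$ has the form $B\cup R_D$ with $R_D\subseteq F$ and $p(B)+p(R_D)\le T$. Here $p(X)=\sum_{j\in X}p_j$, using that in the restricted assignment setting processing times do not depend on the machine.

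The key step is an averaging argument. Multiply~\eqref{ass_const} for each $j\in F$ by $p_j$ and sum over $F$:
\begin{equation*}
\sum_C x_{1,C}\,p(S_C)+\sum_D x_{2,D}\,p(R_D)=p(F).
\end{equation*}
Using $\sum_C x_{1,C}=1$ from~\eqref{machine_const}, this rearranges to
\begin{equation*}
\sum_C x_{1,C}\,p(F\setminus S_C)=\sum_D x_{2,D}\,p(R_D)\le T-p(B).
\end{equation*}
The left-hand side is a convex combination over the support of $x_{1,\cdot}$. So some support configuration $C^\*$ satisfies $p(F\setminus S_{C^\*})\le T-p(B)$.

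The integral schedule assigns $A\cup S_{C^\*}$ to machine $1$, with load at most $T$ because $C^\*$ is a configuration. It assigns $B\cup(F\setminus S_{C^\*})$ to machine $2$, with load at most $T$ by the choice of $C^\*$. Every job is feasibly assigned, since $F$-jobs may go to either machine. Thus $\ip(P)\le T=\lp(P)$.

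The only delicate point is the observation that every support configuration contains all dedicated jobs. It is what allows the complement of a single machine-$1$ configuration to be compared with the averaged load of machine $2$. The argument also uses identical processing times essentially: it would fail for $R2|\cM_j|C_{\max}$, where $p(F\setminus S_C)$ would have to be measured with machine-$2$ times.
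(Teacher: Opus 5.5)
Your proof is correct, but it takes a different route from the paper's. The paper builds an IGPR-chain in three steps:
\begin{enumerate}
\item It merges all jobs exclusive to the same machine into one job (Lemma~\ref{lem:exclusive}).
\item It cancels the smaller of the two remaining exclusive jobs against the larger by subtracting $p_{j_1}$ from both. This shifts $\ip$ and $\lp$ by the same amount (Lemma~\ref{lem:clp2_2}).
\item It makes the last exclusive job flexible, and shows that neither optimum changes by redistributing configurations between the two machines (Lemma~\ref{lem:clp2_3}).
\end{enumerate}
Only on the resulting instances of $P2||C_{\max}$ does the paper round. It picks a support configuration of load exactly $\lp(P)$, which exists by integrality of the data and minimality of $\lp(P)$. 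It then uses the volume bound $p([n])\le 2\lp(P)$ to fit the complement on the other machine (Lemma~\ref{lem:iden}).

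You use the same structural fact that drives the paper's first two reductions: every support configuration of a machine contains all jobs dedicated to it. Instead of transforming the instance, you exploit this fact directly. Your averaging over the flexible jobs alone replaces both the preprocessing chain and the tight-configuration step.

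Your route buys brevity and a slightly stronger, constructive statement. For every $T$ with $\clp(P,T)$ feasible, you extract an integral schedule of makespan at most $T$ from any feasible $x$, by taking a machine-$1$ support configuration maximizing $p(S_C)$. This needs neither tightness of some configuration nor integrality of $T$. The paper's route buys a methodological point: it showcases the gap-preserving reduction framework that the paper is about, reducing the restricted problem to the unrestricted one before a simpler final rounding.
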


As preprocessing, we gap-reduce the problem to the restriction-free case $P2||C_{max}$ (Lemmas \ref{lem:exclusive}, \ref{lem:clp2_2} and \ref{lem:clp2_3}). For this simpler version, we show the optimal integer makespan coincides with a known lower bound (Lemma \ref{lem:iden}), asserting the gap is $1$. Precise definitions and full details are provided in Appendix \ref{app:clp2}. A summary of key steps is presented in Table \ref{tab:config2_summary}. 

\begin{table}[ht]
    \centering
    \begin{tabular}{|c|c|c|}
         \hline
         Subset & Defining property & Reduction from $S_{i-1}$ \\
         \hline\hline
         $S_0$ & All instances $P$ & -- \\
         \hline
         $S_1$ & $\le 1$ exclusive job on both machines in $P$ & Merging exclusive jobs \\
         \hline
         $S_2$ & $\le 1$ exclusive job in $P$ & $p_{j_1},p_{j_2} -= p_{j_1}$ \\
         \hline
         $S_3$ & $P2||C_{max}$ & $p_{j_1,1} = \infty \to p_{j_1,2}$. \\
         \hline
    \end{tabular}
    \caption{Summary of the reductions for $P2|\cM_j|C_{max}$. $\ig(S_3)=1$ is asserted manually.}
    \label{tab:config2_summary}
\end{table}

Note that Theorem \ref{thm:p2} is not replicable for $m>2$ machines, as shown by the instances in \cite{Kurpisz18} (pp $235-236$) indicating $\ig(P3||C_{max})\ge \frac{1024}{1023}$. This warns us that having a PTAS might not be sufficient enough to guarantee gap values of $1$ ($P||C_{max}$ admits a PTAS, see \cite{Encz25}). However, solvability in polynomial time is an even stronger indicator that the gap should equal $1$; we assert this for known such cases of $P|\mathcal{M}_j|C_{max}$.

Our second result considers the case where all processing times are equal. This setting is commonly denoted as $P|\mathcal{M}j, p_j = 1|C_{\max}$, since one can assume without loss of generality that $p_j = 1$ for all $j \in [n]$. Lin and Li \cite{Lin04} constructed a polynomial-time algorithm by modelling it as a network flow problem. We complement their result by proving CLP has a gap of $1$ relying on an IGPR-chain.

\begin{restatable}{theorem}{thmunit}\label{thm:all_one}
$\ig(P|\mathcal{M}_j, p_j = 1|C_{\max}) = 1$.
\end{restatable}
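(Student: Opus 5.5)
The plan is to show directly that for every instance $P$ of $P|\mathcal{M}_j, p_j=1|C_{\max}$ we have $\ip(P)\le \lp(P)$. Combined with $\lp(P)\le\ip(P)$, this gives $\ig(P)=1$ for all instances, and hence the theorem. One could phrase this as a one-step IGPR-chain. The reduction would map the configuration LP to the assignment LP of Section \ref{sec:machine_sched}, restricted to $\mathcal{M}_j$ with $p_{j,i}=1$, and then the final class would be handled by total unimodularity. The essential content, however, is the projection argument below.

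Let $T:=\lp(P)$, an integer, and fix a feasible solution $x$ of $\clp(P,T)$. Since all processing times equal $1$, every configuration $C\in\cC_i(P,T)$ is simply a set of at most $T$ jobs, all of which satisfy $i\in\mathcal{M}_j$ (jobs with $p_{j,i}=\infty$ never occur in a configuration of machine $i$). I will project $x$ onto job--machine variables by setting
\[
y_{j,i}:=\sum_{C\in\cC_i(P,T):\, j\in C} x_{i,C}.
\]
Constraint \eqref{ass_const} gives $\sum_{i\in\mathcal{M}_j} y_{j,i}=1$ for every job $j$. Using \eqref{machine_const} and $|C|\le T$, we get
\[
\sum_j y_{j,i}=\sum_{C} |C|\,x_{i,C}\le T\sum_C x_{i,C}=T
\]
for every machine $i$. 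Finally $y\ge 0$, and $y$ is supported only on pairs with $i\in\mathcal{M}_j$.

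Thus the polytope
\[
Q:=\Bigl\{y\ge 0:\ \sum_{i\in\mathcal{M}_j}y_{j,i}=1\ \forall j,\ \sum_{j:\, i\in\mathcal{M}_j}y_{j,i}\le T\ \forall i\Bigr\}
\]
is nonempty. Its constraint matrix is the incidence matrix of the bipartite graph between jobs and machines with edges $\{(j,i): i\in\mathcal{M}_j\}$. This matrix is totally unimodular, and the right-hand sides $1$ and $T$ are integers, so $Q$ has an integral vertex $y^*\in\{0,1\}$. Equivalently, one can view this as an integral max-flow argument in the network of Lin and Li \cite{Lin04}: source $\to$ jobs with capacity $1$, jobs $\to$ allowed machines, machines $\to$ sink with capacity $T$. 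A fractional flow of value $n$ exists, so an integral one does as well.

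The integral $y^*$ is a schedule $\sigma$ that assigns each job to a machine in its $\mathcal{M}_j$, with at most $T$ unit jobs per machine. Its makespan is therefore at most $T$, which yields $\ip(P)\le T=\lp(P)$.

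The main subtle point is the projection step: I must check that the machine load bound really uses $p_j=1$. This is exactly where the argument fails for general $p_j$, since there an integral assignment vertex need not respect $T$. I also need to confirm that $\lp(P)$ is defined as the smallest \emph{integer} $T$ with $\clp(P,T)$ feasible, so that the capacity $T$ is integral and total unimodularity applies.
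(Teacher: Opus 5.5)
Your proof is correct, but it takes a genuinely different route from the paper.

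\textbf{The paper's route.} The paper stays inside the IGPR framework and argues on the integer side.
\begin{enumerate}
\item It fixes an optimal schedule $\sigma_f$ that minimizes the number of machines with load $\ip(P)$, and builds the local-modification digraph $D(\sigma_f)$.
\item In Lemma~\ref{lem:graph_connect} it gap-reduces by restricting to the machines reachable from a full machine. This restriction does not increase $\lp$, and it preserves $\ip$ by the minimality of $\sigma_f$.
\item In Lemma~\ref{lem:graph_moving} it shifts unit jobs along shortest paths to show that every load lies in $\{\ip(P)-1,\ip(P)\}$. Hence $\ip(P)=\lceil n/m\rceil\le\lp(P)$.
\end{enumerate}
The only fact about the configuration LP that the paper uses is the averaging bound $\lp(P)\ge\lceil n/m\rceil$.

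\textbf{Your route.} You project an arbitrary feasible $x$ of $\clp(P,T)$ onto the job--machine assignment polytope. You then invoke total unimodularity of the bipartite incidence matrix, or equivalently integrality of the Lin--Li flow.

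\textbf{What each approach buys.} Your argument is shorter and more robust: it needs no iterated restriction and no special choice of optimal schedule. It also proves more. It shows that $\ip(P)$ equals the smallest integer $T$ for which even your polytope $Q$ is nonempty. So the extra strength of the configuration LP over the natural assignment relaxation is irrelevant for this subclass. What your argument does not give is what the paper is after: an illustration of gap-preserving reductions, together with the structural picture that reduced instances are load-balanced.

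\textbf{A small correction.} Your aside about phrasing the argument as a one-step IGPR-chain is not accurate. An IGPR maps instances to instances and keeps the relaxation fixed, whereas your step replaces the configuration LP by the assignment LP. This does not affect correctness, because your direct argument needs no such framing.
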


The key underlying idea is to study the ``local modification digraph'' corresponding to an optimal integer assignment $\sigma$. Nodes represent machines, and an arc $\overrightarrow{\sigma(j),i}$ reflects that job $j$ is allowed on machine $i$ as well as on $\sigma(j)$. In Lemma \ref{lem:graph_connect}, we gap-reduce to instances where the digraph has certain connectivity properties. This connectivity guarantees that jobs are placed evenly in $\sigma$ (Lemma \ref{lem:graph_moving}), entailing the gap is $1$. Key steps are described in detail in Appendix \ref{app:unit}.

The last case we consider is the \emph{graph balancing} problem $P|\mathcal{M}_j(\le 2)|C_{\max}$, in which each job can be placed on at most $2$ machines. The name arises from the following interpretation. Given an instance $P$, construct an edge-weighted graph $G_P=(V,E)$ where vertices correspond to machines. Each job $j$ that can be assigned to two machines $m_{j,1}$ and $m_{j,2}$ is represented by an edge $e_j=(m_{j,1}, m_{j,2})$ with weight $w(e_j)=p_j$. If a job can only be assigned to a single machine $m_j$, it is represented by a loop $(m_j, m_j)$. Assigning jobs corresponds to orienting edges, and the completion time of a machine corresponds to the weighted in-degree of its node. The problem can thus be reformulated as finding an orientation minimizing the maximum weighted in-degree. 

The integrality gap of this problem is know to lie between $\nicefrac{3}{2}$ \cite{Asahiro11} and $1.749$ \cite{jansen2018local}. In the slightly more general \emph{uniform graph balancing} ($Q|\mathcal{M}_j (\le 2)|C_{max}$; jobs are identical but each machine $i$ has a designated speed $s_i$), the gap is already $2$ \cite{Ebenlendr14}. 
When $G_P$ is a tree, a simple observation guarantees an integer assignment matching the trivial lower bound $\max_j p_j$: if we orient $G_P$ to be a rooted tree, each node will be the head of at most one arc, hence the makespan will be exactly $p_{j_{max}}$. This observation translates to $\clp$ as well, since there has to be at least one configuration containing $j_{max}$ on some machine, thus $\lp(P)\ge \max_j p_j$ holds.

\begin{restatable}{proposition}{thmgraphbalancing}
For an instance $P$ of $P|\mathcal{M}_j (\le 2)|C_{max}$, $\ig(P)=1$ if $G(P)$ is a tree.
\end{restatable}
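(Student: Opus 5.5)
The plan is to prove the two inequalities $\ip(P)\le \max_j p_j$ and $\lp(P)\ge \max_j p_j$ separately. Together with the general bound $\lp(P)\le \ip(P)$ they give $\ip(P)=\lp(P)=\max_j p_j$, hence $\ig(P)=1$.

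\textbf{Upper bound on $\ip(P)$.} Loops need some care, since a loop forces its job onto a single machine, and a tree in the strict sense has none. I read the hypothesis ``$G(P)$ is a tree'' as saying that the underlying simple graph is a tree and that no loops occur. If loops were allowed, two loops at the same node would already break the claim. Under this reading, fix an arbitrary node $r$ as the root and orient every edge $e_j=(u,v)$ towards the endpoint farther from $r$; that is, assign job $j$ to the child machine. This is a valid assignment, because each job is placed on one of its two allowed machines. Every non-root node has exactly one parent edge, so it is the head of exactly one arc, and the root is the head of none. Here I use that a tree has no parallel edges; if two jobs shared the same machine pair they would form a cycle of length $2$ in the multigraph, which is excluded. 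Every machine therefore receives at most one job, so $C_{max}\le \max_j p_j$. Since the job of maximum size must be scheduled somewhere, in fact $\ip(P)=\max_j p_j$.

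\textbf{Lower bound on $\lp(P)$.} Let $T$ be any value for which $\clp(P,T)$ is feasible, and let $j_{max}$ be a job of maximum processing time. Constraint \eqref{ass_const} for $j_{max}$ requires $\sum_i\sum_{C\ni j_{max}} x_{i,C}=1$. In particular some configuration $C\in\cC_i(P,T)$ contains $j_{max}$, and by \eqref{config} this gives $p_{j_{max}}\le \sum_{k\in C}p_k\le T$. Taking the smallest feasible integer $T$ yields $\lp(P)\ge p_{j_{max}}$, and the processing times are integers, so integrality of $T$ causes no loss.

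\textbf{Main obstacle.} The only delicate point is the interpretation of the hypothesis with respect to loops and parallel edges, namely making sure the rooted orientation really assigns at most one job per machine. I would state explicitly that $G(P)$ being a tree excludes both loops and multi-edges. If the paper intends to allow loops, I would instead root the tree at a node carrying a loop. A single loop can be accommodated this way, but more than one loop cannot, so in that case I would restrict the statement to instances with at most one loop.
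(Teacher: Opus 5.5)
Your proof is correct and follows the paper's argument: rooting the tree makes every machine the head of at most one arc, which gives $\ip(P)=\max_j p_j$, and since some configuration must contain $j_{max}$ you get $\lp(P)\ge \max_j p_j$. One side remark is wrong, though: two loops at one node break only the intermediate identity $\ip(P)=\max_j p_j$, not the conclusion $\ig(P)=1$, because both loop jobs lie in every positive-weight configuration of that machine, so $\lp(P)$ rises as well; a leaf-to-root forcing argument (an edge whose child machine cannot also host it on top of its loops and forced child edges must go to the parent, in both the CLP and the integer solution) gives $\ig(P)=1$ for trees with arbitrarily many loops, so there is no need to restrict to at most one loop.
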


\section{Conclusion and open problems}\label{sec:conclusion}

In this paper, we presented a novel technique called integrality gap-preserving reduction. This theoretical framework is built on the empirical observation that instances maximizing the integrality gap are highly structured. By demonstrating that one only needs to analyse a well-defined subset of instances to prove the global integrality gap, we validated this methodology across three well-known optimization problems: the weighted vertex cover problem, the unrelated machine scheduling problem, and the multiple knapsack problem. Furthermore, we showed the method is capable of advancing the current knowledge regarding the integrality gap of the restricted assignment problem with respect to its configuration LP relaxation. In most of these cases, we relied on a polyhedral study of the corresponding linear relaxations, starting from prior knowledge on vertices of the polytopes. It is exactly the lack of this understanding of vertices that prohibited us from carrying out a deeper analysis of the configuration LP, in contrast to the strong properties we heavily exploited in the other problems. This gave motivation to further study the properties of the CLP polytope, leading us to the following conjecture, which we experimentally asserted for several small-scale instances.

\begin{conjecture}\label{conj:gb}
    For every instance $P$ of the restricted assignment problem, the polyhedron $\clp(P,\lp(P))$ contains a half-integral feasible solution; that is, a vector $x$ such that $x_{i,C}\in \{0, \frac{1}{2}, 1\}$ for all $i \in [m]$ and $C \in \cC_{i}(P,\lp(P))$.
\end{conjecture}

Note that an equivalent form requires a half-integral solution of $\clp(P, T)$ for every $T \ge \lp(P)$, where equivalence follows from the fact that $\clp(P,\lp(P)) \subseteq \clp(P,T)$ for such $T$. Further note that we do not require $x$ to be a vertex of $\clp(P,\lp(P))$, as optimality is sufficient for constructing gap-reductions. Initially, we formulated a stronger version of the conjecture, which stated the half-integrality for each vertex of the polyhedra, similar to vertex cover. However, this stronger version was disproved experimentally, while the current version has resisted such refutations so far.

In a half-integral solution of $\clp(P,\lp(P))$, each job appears in at most two configurations. These correspond to at most two distinct machines, and all machines apart from this pair can be safely excluded, by setting the corresponding entries of $P$ to $\infty$. It follows that every general instance can be transformed into another one in which $|\mathcal{M}_j| \le 2$ while preserving the gap. Consequently, if Conjecture \ref{conj:gb} holds, the integrality gap of the configuration LP reduces to that of the graph balancing problem, and it would follow immediately that $\ig(P|\cM_j|C_{max}) \le 1.749$ by \cite{jansen2018local}.

\bibliography{biblio}

\appendix

\section{Weighted Vertex Cover}\label{app:vertex_cover}

Let $S_1 \subseteq S_0$ denote the instances $(G,w)$ where $x\equiv \frac{1}{2}$ is the only optimal solution of $\lp(G,w)$. For $(G,w)\in S_1$, define $f_1(G,w)=(G,w)$. Else, the face $\lp(G)\cap \{x: w(x)=\lp(G,w)\}$ contains a vertex $x^*$ of $\lp(G)$ different from $\frac{1}{2}$ that is optimal for $w$. Let $G':= G[B(x^*)]$ and $w':= w|_{B(x^*)}$ where $(C(x^*),H(x^*),B(x^*))$ is the crown-decomposition corresponding to $x^*$. Since $B(x^*)\ne V$, the decomposition is non-trivial. If $(G',w') \not\in S_1$, repeat the process until the resulting instance only has $\frac{1}{2}$ as an optimal solution, or $V'=\emptyset$. Let $f_1 (G,w)$ be this final instance, which must exist as $|V|$ strictly decreases in each iteration.

\begin{lemma}\label{lem:vc_1}
    $S_0 \to_{f_1} S_1$.
\end{lemma}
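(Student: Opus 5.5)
It suffices to show that a single iteration of the crown reduction does not decrease the gap. The function $f_1$ is a finite composition of such iterations, since $|V|$ strictly decreases each time, and the inequality $\ig(G,w)\le \ig(G',w')$ chains along the composition. Its image lies in $S_1$ by construction, apart from the degenerate case $V'=\emptyset$, which I treat at the end. So fix $(G,w)\notin S_1$ and an optimal vertex $x^*\ne \frac12$ of $\lp(G,w)$, with $C=C(x^*)$, $H=H(x^*)$, $B=B(x^*)$, and let $(G',w')=(G[B],w|_B)$. I will verify condition 2 of Lemma~\ref{lem:gap_change} with $\varepsilon = \lp(G,w)-\lp(G',w')$ and $\delta = \ip(G,w)-\ip(G',w')$, by showing
\[
\varepsilon = w(H) \quad\text{and}\quad 0\le \delta \le w(H).
\]

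For the LP side, I will show $\lp(G',w') = w(B)/2$, so that $\varepsilon = w(H)+\tfrac12 w(B) - \tfrac12 w(B) = w(H)$.
\begin{itemize}
\item \emph{Upper bound.} The restriction $x^*|_B\equiv\frac12$ is feasible for $G'$, giving $\lp(G',w')\le w(B)/2$.
\item \emph{Lower bound.} Suppose some $y$ feasible for $\lp(G')$ had $w'(y)<w(B)/2$. Define $z$ by $z=y$ on $B$, $z=1$ on $H$, $z=0$ on $C$. This $z$ is feasible for $G$: $C$ is independent, there are no $C$--$B$ edges, and every edge touching $H$ is covered by $H$. Then $w(z)<w(x^*)$, contradicting optimality of $x^*$.
\end{itemize}

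For the IP side:
\begin{itemize}
\item \emph{$\delta\le w(H)$.} An optimal cover $U'$ of $G'$ extends to the cover $U'\cup H$ of $G$, by the same three structural facts. Hence $\ip(G,w)\le \ip(G',w')+w(H)$.
\item \emph{$\delta\ge 0$.} Any cover of $G$ restricted to $B$ is a cover of $G[B]$. Hence $\ip(G',w')\le \ip(G,w)$.
\end{itemize}
So $0\le\delta\le\varepsilon$, and Lemma~\ref{lem:gap_change} (case 2) gives $\ig(G,w)\le\ig(G',w')$. This requires $\lp(G',w')>0$, i.e.\ $B\neq\emptyset$.

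The main obstacle is the degenerate case $B=\emptyset$, where $\lp(G',w')=0$ and the gap of the empty instance is undefined. Here $x^*$ is integral, so $\lp(G,w)=\ip(G,w)$ and $\ig(G,w)=1$. I will handle it by convention: declare the empty instance's gap to be $1$ (or map such instances to any fixed instance in $S_1$, e.g.\ a single edge with unit weights, whose gap is $1$). This keeps the inequality valid and the image inside $S_1$. The rest is routine feasibility checking using properties 1) and 2) of the weak crown decomposition. Property 3) is not even needed for this lemma; it only guarantees $x^*$ is sensible.
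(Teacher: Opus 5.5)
Your proof is correct, and it takes a more direct route than the paper on the IP side.

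\textbf{How the paper argues.} It first inflates the weights on $C(x^*)$ to obtain $w^*$, so that $\lp(G,w^*)=\lp(G,w)$ and $\ip(G,w^*)\ge \ip(G,w)$ (Case 1 of Lemma~\ref{lem:gap_change}). It then observes that every optimal cover for $w^*$ avoids $C(x^*)$. By the weak crown property (each vertex of $H(x^*)$ has a neighbour in $C(x^*)$), such a cover must contain $H(x^*)$. This yields the exact identity $\ip(G,w^*)=w(H)+\ip(G',w')$, after which Case 2 is applied with $\delta=\varepsilon=w(H)$.

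\textbf{How your route differs.} You skip the detour. The extension $U'\cup H$ gives $\ip(G,w)\le \ip(G',w')+w(H)$ directly for the original weights. Only this easy direction actually enters the paper's own inequality chain as well, so the auxiliary weights and property 3) are indeed dispensable, as you remark. Your LP lower bound $\lp(G',w')\ge w(B)/2$ is more than you need, since $\lp(G',w')\le w(B)/2$ alone already gives $\varepsilon\ge w(H)\ge\delta$. It is nonetheless correct, and it makes explicit why $G[B]$ has no isolated vertices. The paper's version buys an exact formula for the IP value; yours buys brevity, a single application of Lemma~\ref{lem:gap_change}, and no dependence on property 3).

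\textbf{A small slip in the degenerate case.} The single edge with unit weights is not in $S_1$, because $(1,0)$ is also LP-optimal there. In fact no nonempty instance of $S_1$ can have gap $1$: an integral cover of weight $w(V)/2$ would be a second LP optimum. A valid fallback would therefore be something like the unit-weight triangle, which lies in $S_1$ and whose gap $\frac43\ge 1$ suffices. Your primary choice, the convention $\ig(\emptyset,\cdot)=1$, is exactly what the paper uses.
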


\begin{proof}
It suffices to prove that one crown reduction preserves the gap. Let $(G,w)\in S_0 \setminus S_1$, and let $x^*$ be an optimal vertex of $\lp(G,w)$ different from $\frac{1}{2}$. Consider the following auxiliary modification of $w$:
\[
w^*_v := \begin{cases}
    \sum_{u \not\in C(x^*)} {w_u}+1 & v \in C(x^*) \\
    w_v                         & \text{otherwise.}
\end{cases}
\]
The inequalities $\ip(G,w)\le \ip(G,w^*)$ and $\lp(G,w)\le \lp(G,w^*)$ hold, since $w\le w^*$ coordinate-wise. On the other hand, $w(x^*)=w^*(x^*)$ imply $\lp(G,w)=\lp(G,w^*)$. An optimal solution of $\ip(G,w^*)$ cannot contain $v \in C(x^*)$, as $H(x^*)\cup B(x^*)$ is a vertex cover with strictly smaller weight. Thus, as in the $w=1_n$ case, a minimum-weight vertex cover $U$ of $(G,w^*)$ must contain $H(x^*)$ to cover edges between $C(x^*)$ and $H(x^*)$ (recall that each $v \in H(x^*)$ is needed, as each of them has a neighbour in $C(x^*)$). These already cover every edge inside $H(x^*)$ and between $H(x^*)$ and $B(x^*)$, so the rest of $U$ is a minimum-weight vertex cover in the induced subgraph $G[B(x^*)]$ with respect to $w^*|_{B(x^*)}=w|_{B(x^*)}$. With $G' = G[B(x^*)]$ and $w'=w|_{B(x^*)}$, it holds that $\ip(G,w^*) = w(H(x^*))+ \ip(G',w')$ and $\lp(G,w^*)= 0 \cdot w^*(C(x^*))+1 \cdot w(H(x^*))+ \frac{1}{2}\cdot w(B(x^*))$. Note that $G'$ does not have isolated vertices: such a vertex $v$ must be the $G$-neighbour of some node in $H(x^*)$, but then $x^*_v:= 0$ would strictly decrease the weight of $x^*$. Hence, $x\equiv \frac{1}{2}$ is feasible for $\lp(G')$, implying $\lp(G',w')\le \frac{1}{2}\cdot w(B(x^*))$ and $\lp(G,w^*)\ge w(H(x^*)) + \lp(G',w')$. Thus,

\begin{equation}\label{vc}
    \frac{\ip(G',w')}{\lp(G',w')} \ge \frac{\ip(G,w^*)-w(H(x^*))}{\lp(G,w^*)-w(H(x^*))} \ge \frac{\ip(G,w^*)}{\lp(G,w^*)} \ge \frac{\ip(G,w)}{\lp(G,w)},
\end{equation}
where the last two inequalities follow from Lemma \ref{lem:gap_change} Case 2 and Case 1, respectively. Observe that \eqref{vc} is true even if $V(G') = \emptyset$ (with the convention that $\ig(\emptyset,w')=1$): $B(x^*) = \emptyset$ implies $\ip(G,w^*)=\lp(G,w^*)=w(H(x^*))$.
\end{proof}

The next reduction builds on the fact that if $x\equiv \frac{1}{2}$ is the only optimal solution of $\lp(G,w)$, then it is the only optimal solution to $\lp(G\cup\{e\}, w)$ (where $e \not \in E(G)$) as well. Hence, adding an edge to $G$ does not change $\lp(G,w)$, but it may increase $\ip(G,w)$, so $\ig(G,w)\le \ig(G\cup\{e\}, w)$ is true. Let us define $S_2 =\{(K_n, w) \in S_1: w \in \erre_+ ^n, n \in \enne\}$, where $K_n$ is the complete graph on $n$ vertices. Note that not every $(K_n, w)$ belongs to $S_2$, e.g. $n=3, w_u=2, w_v =1, w_s=1$ has the optimal solution $x_u = 0, x_v=x_s=1$. Define $f_2: S_1 \to S_2$ as $f(G,w)=(K_n,w)$ where $G$ has $n$ vertices. The following lemma is immediate.

\begin{lemma}\label{lem:vc_2}
    $S_1 \to_{f_2} S_2$.
\end{lemma}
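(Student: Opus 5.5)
The plan is to verify the two properties demanded by Definition~\ref{def:gap_preserving_reduction}. First, $f_2$ must actually map $S_1$ into $S_2$. Second, $\ig(G,w)\le \ig(K_n,w)$ must hold for every $(G,w)\in S_1$. Both follow by adding the missing edges of $G$ one at a time. So it suffices to show the following single step: if $(G,w)\in S_1$ and $e=uv\notin E(G)$, then $(G\cup\{e\},w)\in S_1$ and $\ig(G,w)\le\ig(G\cup\{e\},w)$. Induction on the number of missing edges then finishes the argument, since $K_n$ is reached after finitely many steps.

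For membership in $S_1$, I would use that adding an edge only adds a constraint, so $\lp(G\cup\{e\})\subseteq \lp(G)$. The vector $x\equiv\frac12$ satisfies the new constraint $x_u+x_v\ge 1$ with equality, so it is still feasible. Hence $\lp(G\cup\{e\},w)\le w(\tfrac12)=\lp(G,w)$. Conversely, any feasible $y$ for $G\cup\{e\}$ is feasible for $G$, which gives $w(y)\ge \lp(G,w)$ and therefore $\lp(G\cup\{e\},w)=\lp(G,w)$. Now suppose $y$ is optimal for $\lp(G\cup\{e\},w)$. Then $y$ is feasible for $\lp(G)$ and attains the value $\lp(G,w)$, so it is optimal there. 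By the uniqueness assumption defining $S_1$, this forces $y\equiv\frac12$. Thus $\frac12$ is the unique optimum for $G\cup\{e\}$ as well.

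For the gap, integer vertex covers of $G\cup\{e\}$ are also vertex covers of $G$, so $\ip(G,w)\le\ip(G\cup\{e\},w)$. Combined with $\lp(G,w)=\lp(G\cup\{e\},w)$, Case 1 of Lemma~\ref{lem:gap_change} yields $\ig(G,w)\le\ig(G\cup\{e\},w)$.

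I do not expect a real obstacle here. The only point needing care is uniqueness, and it rests on the inclusion of feasible regions together with the equality of optimal values. The one subtlety is the degenerate vertex set: instances with $V=\emptyset$ or $n=1$ produced by $f_1$. For $n=1$, $K_1$ has no edges, so the only LP optimum is $0\neq\frac12$. Such instances have gap $1$ and can be handled by the convention $\ig=1$, or by noting that $S_1$ contains no such instance.
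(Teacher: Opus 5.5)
Your proposal is correct and follows the paper's own argument. You add the missing edges one at a time and note that $x\equiv\frac12$ stays the unique optimum of $\lp$, so the LP value is unchanged, while $\ip$ can only grow; Case~1 of Lemma~\ref{lem:gap_change} then gives the claim. You also spell out why uniqueness survives adding an edge (inclusion of feasible regions plus equal optimal values), which the paper states as a fact before calling the lemma immediate.
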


As anticipated, a strong enough structure in the reduced instance space enables a manual calculation of the gap. Assume $(K_n, w)\in S_2$ and $w_1 \le \ldots \le w_n$. Then $\ip(K_n, w)=w_1 + \ldots + w_{n-1}$, as every vertex cover consists of at least $n-1$ nodes. On the other hand, $\lp(K_n, w)=\frac{w_1 + \ldots + w_n}{2}$, as $x \equiv \frac{1}{2}$ is the only optimal fractional solution.


\thmvc*

\begin{proof}
    $((S_0, S_1, S_2), (f_1, f_2))$ is an IGPR-chain, and
    \[
    \ig(S_2) = \sup\left\{2\cdot \frac{w_1 + \ldots + w_{n-1}}{w_1 + \ldots + w_n}: w_1 \le \ldots \le w_n, w \in \erre_+^n, n \in \enne\right\}=2,
    \]
    where the tightness of the supremum is shown by the $w \equiv 1$ instances.
\end{proof}

\section{Multi-Knapsack}\label{app:multi_knap}

Since $MK_m$ is a maximization problem and the definition of the integrality gap is inverted, we also have to modify Lemma \ref{lem:gap_change}. Let $S_0$ be the instance set of a maximization problem $\cP$.

\begin{lemma}\label{lem:gap_change_max}
    Let $I, I' \in S_0$. If one of the following conditions holds, then $\ig(I)\le \ig(I')$.
    \begin{enumerate}
        \item $\lp(P)\le \lp(P')$ and $\ip(P) \ge \ip(P')$,
        \item $\lp(P')=\lp(P)-\delta$ and $\ip(P')=\ip(P)-\varepsilon$, with $0 \le \delta \le \varepsilon$.
    \end{enumerate}
\end{lemma}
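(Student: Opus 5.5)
The plan is to mirror the proof of Lemma~\ref{lem:gap_change}, replacing the ratio $\ip/\lp$ by $\lp/\ip$ and keeping track of which quantity sits in the numerator. Throughout I assume, as the gap definition implicitly does, that all optima involved are positive: $\ip(I),\ip(I')>0$, so that $\ig(I)=\lp(I)/\ip(I)$ and $\ig(I')=\lp(I')/\ip(I')$ are well defined and at least $1$. In particular, in the second case this means $\ip(I)-\varepsilon>0$. I read the $P,P'$ in the statement as the instances $I,I'$.

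\emph{Case 1.} This is a direct monotonicity argument. Increasing the numerator from $\lp(I)$ to $\lp(I')$ does not decrease the ratio. Likewise, decreasing the positive denominator from $\ip(I)$ to $\ip(I')$ does not decrease it. Chaining these gives
\[
\frac{\lp(I)}{\ip(I)}\le\frac{\lp(I')}{\ip(I)}\le\frac{\lp(I')}{\ip(I')}.
\]

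\emph{Case 2.} Here the numerator loses $\delta$ and the denominator loses $\varepsilon$. Since both denominators are positive, I can cross-multiply:
\[
\frac{\lp(I)}{\ip(I)}\le\frac{\lp(I)-\delta}{\ip(I)-\varepsilon}
\iff \lp(I)\,\ip(I)-\varepsilon\,\lp(I)\le \lp(I)\,\ip(I)-\delta\,\ip(I)
\iff \delta\,\ip(I)\le \varepsilon\,\lp(I).
\]
The last inequality holds because $\delta\le\varepsilon$ and $\ip(I)\le\lp(I)$, with all four quantities nonnegative, so $\delta\,\ip(I)\le\varepsilon\,\ip(I)\le\varepsilon\,\lp(I)$. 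This step also covers $\varepsilon=0$ (which forces $\delta=0$) without dividing by $\varepsilon$.

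The only step needing care is the positivity of $\ip(I')=\ip(I)-\varepsilon$, which justifies the cross-multiplication. I would state it explicitly as a standing assumption, since otherwise the gap of $I'$ is undefined anyway. Beyond that the argument is routine algebra.
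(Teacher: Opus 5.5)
Your proof is correct and takes the paper's route. The paper just points back to the proof of Lemma~\ref{lem:gap_change}: Case 1 is trivial, and Case 2 cross-multiplies down to the condition that $\delta/\varepsilon$ is at most the gap, which holds because the gap is at least $1$. Your Case 2 reaches the equivalent condition $\delta\,\ip(I)\le\varepsilon\,\lp(I)$ without dividing by $\varepsilon$, so it also covers $\varepsilon=0$, and stating positivity of $\ip(I')$ explicitly is a sensible clarification.
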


\begin{proof}
    The proof is identical to that of Lemma \ref{lem:gap_change}.
\end{proof}

The classical $(m+1)$-approximation algorithm builds on two key observations. First, Martello and Toth (\cite{martello_toth}, Chapter 6.2.1) show that the linear relaxation $\lp(C,w,p)$ is in essence equivalent with merging the $m$ knapsacks into a larger one with capacity $\sum_{i \in [m]} C_i$, and then considering its natural relaxation $\lp(\sum_{i \in [m]} C_i, w,p)$. The second observation from Dantzig \cite{dantzig} shows that decreasingly sorting the items by their unit profit $p_i/w_i$ (with arbitrary tie-breaking) allows a greedy approach to find an optimal solution of the latter linear relaxation. The method iteratively processes the items in the sorted order, and assigns them integrally to the first knapsack. If a fraction of the current item does not fit entirely, then the excessive fractional part is moved to the next knapsack, and the process resumes from there. It follows from these two facts that this solution, denoted by $x^*_s$, is indeed optimal for $\lp(C,w,p)$. 

Let $f_1, \ldots, f_m \in [n]\cup\{\infty\}$ denote the (at most) $m$ items for which a new knapsack was opened in the process, with $f_k = \infty$ if the $k$-th item is not defined. Note that $f_{i}=\ldots = f_j$ is possible for any two indices $i$ and $j$. Item $f_k$ is referred to as the \emph{critical item in knapsack} $k$, and is obtained as $f_k:= \min\left\{j \in [n]: \sum\limits_{l=1}^j w_l > \sum\limits_{i=1}^k C_i\right\}$. Items not in $\{f_1, \ldots, f_m\}$ are assigned integrally in $x^*_s$, and therefore are called \emph{integer items}, whereas if $\sum\limits_{l=1}^{f_k-1} w_l < \sum\limits_{i=1}^k C_i$, then $f_k$ is split fractionally between knapsacks $k$ and $k+1$; else item $f_k$ is integrally assigned to knapsack $k+1$.

We focus on instances where the number of machines is at most a fixed constant $M>1$. This will prove advantageous later, as some reduction steps might decrease the number of knapsacks in the process. The set of instances therefore is $S_0:= \{(C,w,p)\in \enne_{>0}^{m+n+n}: m \le M, \max_j\{w_j\} \le \max_i \{C_i\}\}$, where the latter condition forbids degenerate inputs in which an item does not fit inside any of the knapsacks. Let us assume the items are sorted decreasingly by their unit profit: $\nicefrac{p_1}{w_1} \ge \ldots \ge \nicefrac{p_n}{w_n}$, with ambiguity resolved via an arbitrary tie-breaking. We may further assume without loss of generality that the first knapsack has the largest capacity: $C_1 \ge C_2, \ldots ,C_m$. Let us further denote $C_{\text{sum}}:= \sum_{i \in [m]} C_i$, $p_{\text{sum}}:= \sum_{j \in [n]} p_j$ and $w_{\text{sum}}:= \sum_{j \in [n]} w_j$. 

Our first reduction ensures that there are no empty knapsacks in $x^*_s$. Let $S_1 \subseteq S_0$ be the set of instances for which $f_{m-1}< \infty$. For an instance $(C,w,p)\in S_1$, let $f_1(C,w,p)=(C,w,p)$. Else, assume that the last critical item is $f_k$ with $k\le m-2$, and $f_{k+1}=\ldots = f_m = \infty$. Let $C'=(C_1, \ldots, C_{k+1})$, and let $f_1(C,w,p):= (C',w,p)\in S_1$.

\begin{lemma}\label{lem:mk_1}
    $S_0 \to_{f_1} S_1$.
\end{lemma}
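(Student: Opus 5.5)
We will show that deleting the knapsacks $k+2,\dots,m$ leaves $\lp$ unchanged and cannot increase $\ip$. Case 1 of Lemma \ref{lem:gap_change_max} then gives $\ig(C,w,p)\le \ig(C',w,p)$. For instances already in $S_1$ there is nothing to prove, since $f_1$ is the identity there. We also need $f_1(C,w,p)\in S_1$, i.e.\ the new instance has critical item $f'_{k}<\infty$ with $k = m'-1$, where $m'=k+1$ is the new number of knapsacks.

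\textbf{The LP value is unchanged.} Recall that by the Martello--Toth and Dantzig observations, $\lp(C,w,p)$ is the greedy fractional knapsack value for one knapsack of capacity $C_{\text{sum}}$. If $f_{k+1}=\infty$, the greedy procedure packs all items within the capacity $\sum_{i\le k+1} C_i$, so $\sum_{j} w_j\le \sum_{i\le k+1}C_i$. Hence $x^*_s$ places every item integrally into knapsacks $1,\dots,k+1$, and $\lp(C,w,p)=p_{\text{sum}}$, the largest value possible. For $C'$, the same greedy solution is feasible, and its value $p_{\text{sum}}$ is again an upper bound. Therefore $\lp(C',w,p)=\lp(C,w,p)=p_{\text{sum}}$.

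\textbf{The IP value does not increase.} Any integral assignment for $C'$ is an integral assignment for $C$ with knapsacks $k+2,\dots,m$ left empty. Hence $\ip(C',w,p)\le \ip(C,w,p)$, which is exactly the inequality Case 1 requires.

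\textbf{Membership in $S_1$ and the remaining checks.} The critical items for $C'$ are determined by the same prefix sums as for $C$. Therefore $f'_i=f_i$ for $i\le k+1$, and in particular $f'_{m'-1}=f'_k=f_k<\infty$, so $(C',w,p)\in S_1$. The step needing most care is the degenerate setup. The condition $\max_j w_j\le \max_i C_i$ must survive the deletion. It does, because we assumed $C_1\ge C_i$ for all $i$ and knapsack $1$ is kept. The bound $m'\le M$ is clear. The edge case $k=0$, where no item is critical at all, should be handled in the same way: keep only knapsack $1$, and the statement of $S_1$ (with $f_0$ vacuous) must be read accordingly.
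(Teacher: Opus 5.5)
Your proof is correct and takes the same route as the paper: the greedy LP solution $x^*_s$ uses only knapsacks $1,\dots,k+1$, so $\lp$ is unchanged after deleting the rest, deleting knapsacks can only lower $\ip$, and Case~1 of Lemma~\ref{lem:gap_change_max} finishes. Your extra checks (membership in $S_1$ via the unchanged prefix sums, and survival of $\max_j w_j \le \max_i C_i$ because knapsack $1$ is kept) fill in details the paper leaves implicit; the only slip is wording, since $x^*_s$ packs every item \emph{fully} rather than integrally, as the critical items $f_1,\dots,f_k$ may be split across two knapsacks.
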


\begin{proof}
    Given that $f_{k+1}=\ldots = f_m = \infty$, in the process of constructing $x^*_s$, all items after $f_k$ were assigned integrally to knapsack $k+1$, along with item $n$. Hence, knapsacks $k+2, \ldots, m$ are left empty, and we can get rid of them while maintaining the feasibility of $x^*$. At the same time, getting rid of knapsacks can only decrease $\ip(C,w,p)$, so the integrality gap can only increase.
\end{proof}

In the next preprocessing step, we ensure no item is hanging out completely of the last knapsack in $x^*_s$, by throwing away such unnecessary items. Let $S_2\subseteq S_1$ be the set of instances in $S_1$ where either $f_m = \infty$, or $f_m =n$. For an instance $(C,w,p)\in S_2$, let $f_2(C,w,p)=(C,w,p)$. Else, assume that $f_m < \infty$ but $f_m < n$. Let $w'=w|_{[f_m]}$ and $p'=p|_{[f_m]}$. Define $f_2(C,w,p)=(C,w',p')\in S_2$.

\begin{lemma}\label{lem:mk_2}
    $S_1 \to_{f_2} S_2$.
\end{lemma}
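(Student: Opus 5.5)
The plan is to follow the template of Lemma \ref{lem:mk_1} and use Lemma \ref{lem:gap_change_max}, Case 1. We show that deleting the items $f_m+1,\ldots,n$ leaves the LP value unchanged and can only decrease the IP value.

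First, the LP value is unchanged. Take an instance $(C,w,p)\in S_1\setminus S_2$, so $f_m<n$. By the definition of $f_m$ as the least $j$ with $\sum_{l=1}^j w_l > C_{\text{sum}}$, the greedy solution $x^*_s$ fills all knapsacks completely using items $1,\ldots,f_m$ only. In particular, $f_m$ is placed fractionally with whatever part fits in knapsack $m$, and items $f_m+1,\ldots,n$ receive value $0$. Hence the restriction of $x^*_s$ to items in $[f_m]$ is feasible for $\lp(C,w',p')$ and has the same objective value, so $\lp(C,w',p')\ge\lp(C,w,p)$.

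For the reverse inequality, note that $(C,w',p')$ has the same sorted unit-profit order. Running Dantzig's greedy on it, via the Martello--Toth equivalence with the merged knapsack of capacity $C_{\text{sum}}$, produces exactly this restricted solution. Therefore $\lp(C,w',p')=\lp(C,w,p)$. Alternatively, every feasible LP solution of the reduced instance extends by zeros to one of the original instance, which gives $\lp(C,w',p')\le \lp(C,w,p)$ directly. This zero-extension argument is the cleaner route and is the one I would use.

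Second, the IP value does not increase. Every integer feasible solution of $(C,w',p')$ is also feasible for $(C,w,p)$, with the removed items left unassigned. So $\ip(C,w',p')\le\ip(C,w,p)$. Lemma \ref{lem:gap_change_max}, Case 1, then gives $\ig(C,w,p)\le\ig(C,w',p')$.

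It remains to check that the image lies in $S_2$. The instance has at most $M$ knapsacks, still satisfies $\max_j w_j\le\max_i C_i$, and keeps the sort order. Its greedy solution coincides with the old one on $[f_m]$, so its critical items $f_1,\ldots,f_{m-1}$ are unchanged and remain finite. In particular the instance stays in $S_1$. Its last critical item is now either $f_m$, which is the new $n$, or $\infty$. The second case occurs exactly when $\sum_{l\le f_m} w_l$ does not strictly exceed $C_{\text{sum}}$ after truncation. Since $f_m$ was defined by a strict inequality on the same prefix sums, that sum still exceeds $C_{\text{sum}}$, so the new $f_m$ equals the new $n$. Either way the instance lies in $S_2$.

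The only delicate point is this bookkeeping of the critical items after truncation, including the tie-breaking convention. The gap argument itself is immediate.
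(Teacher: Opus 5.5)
Your proof is correct and follows the paper's argument. The $\lp$ optimum is unchanged because $x^*_s$ uses only items in $[f_m]$ (your zero-extension step makes the reverse inequality explicit), deleting items can only lower $\ip$, and Lemma \ref{lem:gap_change_max} (Case 1) concludes; your check that the image still lies in $S_1$, and that the new last critical item is the new item $n$, is bookkeeping the paper leaves implicit.
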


\begin{proof}
    The modification does not affect $x^*_s$, so $\lp(C,w,p)=\lp(C,w',p')$. On the other hand, throwing away items can only decrease $\ip(C,w,p)$, so $\ip(C,w,p) \ge \ip(C,w',p')$ holds. The statement follows from Lemma \ref{lem:gap_change_max}.
\end{proof}

As a last preprocessing, we ensure that even knapsack $m$ is filled entirely. Let $S_3 \subseteq S_2$ be the set of instances in $S_2$ where $C_{\text{sum}} \le w_{\text{sum}}$. For $(C,w,p)\in S_3$, let $f_3(C,w,p)=(C,w,p)$. Else, if $w_{\text{sum}} < C_{\text{sum}}$, let $r:= \nicefrac{C_{\text{sum}}}{w_{\text{sum}}}>1$. Define $w'$ as $w'=r\cdot w=(r\cdot w_1, \ldots, r\cdot w_n)$. Let $f_3(C,w,p)= (C,w',p)$. Note that $w'_{\text{sum}} = C_{\text{sum}}$.

\begin{lemma}\label{lem:mk_3}
    $S_2 \to_{f_3} S_3$.
\end{lemma}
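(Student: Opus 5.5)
We need $\ig(C,w,p)\le\ig(C,w',p)$ for $w'=r\cdot w$ with $r=C_{\text{sum}}/w_{\text{sum}}>1$, and also $f_3(C,w,p)\in S_3$ (membership in $S_2$ and $S_0$ preserved). By Lemma \ref{lem:gap_change_max}, Case 1, it suffices to show $\lp(C,w,p)\le\lp(C,w',p)$ and $\ip(C,w,p)\ge\ip(C,w',p)$.

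\textbf{IP side.} Since $w'\ge w$ coordinate-wise, every integer solution feasible for $(C,w',p)$ satisfies $\sum_j x_{j,i}w_j\le\sum_j x_{j,i}w'_j\le C_i$. It is therefore feasible for $(C,w,p)$ with the same profit, so $\ip(C,w',p)\le\ip(C,w,p)$.

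\textbf{LP side.} In the original instance $w_{\text{sum}}<C_{\text{sum}}$. Via the Martello--Toth aggregation, the greedy solution $x^*_s$ packs every item fully, so $\lp(C,w,p)=p_{\text{sum}}$. In the new instance $w'_{\text{sum}}=C_{\text{sum}}$, so all items again fit fractionally in the merged knapsack and the greedy process assigns all of them completely, giving $\lp(C,w',p)=p_{\text{sum}}$. Since $p_{\text{sum}}$ is also a trivial upper bound by \eqref{item_const}, the two LP values are equal. The unit-profit order $p_j/(rw_j)$ is unchanged, so the same sorting and tie-breaking apply.

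\textbf{Membership and main obstacle.} Membership in $S_3$ is immediate from $w'_{\text{sum}}=C_{\text{sum}}$. The main obstacle is checking that $f_3(C,w,p)$ still lies in $S_0$ and $S_2$ (and $S_1$).
\begin{itemize}
\item \emph{Integrality and fitting.} The entries $rw_j$ need not be integers, so I would either allow rational weights (scaling all data to clear denominators does not change the gap) or note this convention. The condition $\max_j w'_j\le C_1$ may fail. In that case I would first cap $r$ at $C_1/\max_j w_j$, or argue that an oversized item is packed in no integer solution, so deleting it only lowers $\ip$.
\item \emph{Critical items.} After scaling, the critical items $f_k$ are recomputed. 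Because $w'_{\text{sum}}=C_{\text{sum}}$ exactly, the last item $n$ ends exactly at the end of knapsack $m$, so $f_m=\infty$ (or the boundary case $f_m=n$), and $S_2$ holds. Every knapsack is full, so none is empty and $f_{m-1}<\infty$ (barring degenerate coincidences), which preserves $S_1$.
\end{itemize}

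I expect this bookkeeping of the sets and the fit condition, rather than the gap inequality itself, to be the delicate part.
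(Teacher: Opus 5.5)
Your inequality argument is exactly the paper's proof: $\lp=p_{\text{sum}}$ before and after, since everything fits fractionally; $\ip$ can only drop, because weights only grew; and Lemma~\ref{lem:gap_change_max}, Case~1, concludes. The paper stops there and never checks that $f_3$ lands in $S_0$. Your suspicion about the fit condition is well founded; non-integrality alone is harmless, as your denominator-clearing remark shows. Take $m=2$, $C=(N,N)$, $w=(1,N)$, $p=(1,N)$ with $N\ge 2$. This instance lies in $S_2\setminus S_3$ (critical items $f_1=2$, $f_2=\infty$, and $w_{\text{sum}}=N+1<2N$) and has gap $1$. But $r=2N/(N+1)$ gives item $2$ weight $2N^2/(N+1)>N$, so it fits in no knapsack, and the image has $\ip=1$, $\lp=N+1$. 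Hence $f_3$ as defined does not map into $S_3\subseteq S_0$. It cannot, since otherwise $\ig(MK_2)\ge N+1$ for every $N$, contradicting Theorem~\ref{thm:mulit_knap}.

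The genuine gap in your proposal is that neither of your patches repairs this. Capping $r$ at $C_1/\max_j w_j$ leaves $w'_{\text{sum}}<C_{\text{sum}}$, so the image is still not in $S_3$; in the example above the cap is $1$ and nothing changes.

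Deleting an item that fits nowhere does not lower $\ip$ at all, since no integer solution packs it. It does lower $\lp=p_{\text{sum}}$ by its profit, which moves the gap the wrong way. For instance, $C=(10,10,10)$, $w=(6,6,6,10)$, $p\equiv 1$ lies in $S_2\setminus S_3$ with gap $4/3$. Scaling by $15/14$ makes the last item oversized, and deleting it leaves gap $1$.

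What does work is to shrink the last knapsack instead of inflating the weights: set $C'_m:=w_{\text{sum}}-(C_1+\dots+C_{m-1})$.
\begin{itemize}
\item By the $S_1$ property ($f_{m-1}<\infty$) this is a positive integer, and it is smaller than $C_m$.
\item Hence $\max_i C'_i=C_1\ge\max_j w_j$ still holds for $m\ge 2$; for $m=1$, $C'_1=w_{\text{sum}}\ge\max_j w_j$.
\item Since $C'_{\text{sum}}=w_{\text{sum}}$, we keep $\lp=p_{\text{sum}}$, while a smaller capacity can only lower $\ip$.
\item The item order and the critical items $f_1,\dots,f_{m-1}$ are unchanged, while $f_m=\infty$.
\end{itemize}
So the image is an integral instance lying in $S_1\cap S_2\cap S_3$.
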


\begin{proof}
    Let $(C,w,p)\in S_2\setminus S_3$. On the one hand, $\lp(C,w,p)=\lp(C,w',p)=p_{\text{sum}}$, as all $n$ items can be packed fractionally in both cases. On the other hand, $\ip(C,w,p) \ge \ip(C,w',p)$, because we just increased item weights. The statement follows.
\end{proof}

Instances in $S_3$ are ``balanced'', in the sense that there are no unnecessary items or knapsacks that are not used in the LP-optimal solution. The profit of this fractional assignment can be calculated explicitly, since items $1,\ldots, n-1$ are included completely, and the outlying part of item $n$ accounts for the $(w_{\text{sum}}-C_{\text{sum}})$-fraction of its total profit.
\[
p(x^*_s)= p_1+\ldots + p_{n-1} + \frac{C_{\text{sum}}-(w_1 + \ldots w_{n-1})}{w_n}\cdot p_n,
\]
so
\begin{equation}\label{eq:mk_lpopt}
    \lp(C,w,p)= p_{\text{sum}} - (w_{\text{sum}}-C_{\text{sum}})\cdot \frac{p_n}{w_n}
\end{equation}
for each $(C,w,p)\in S_3$.

The key reduction in our chain exploits \eqref{eq:mk_lpopt} to merge items while maintaining the LP-optimum. Merging two items can only decrease the integer optimum, so the gap will be increased by such an operation. However, not all pairs of items can be merged unconditionally. First off, we need to maintain the property that each item can be packed into at least the largest knapsack (in formula, $\max_j\{w_j\}\le \max_i \{C_i\}$). Otherwise, merging items would lead to a territory of unbounded integrality gaps, possibly even reaching instances with infinite gap: if eventually we end up with one item that no longer fits inside any of the knapsacks, the integer optimum is $\infty$. Furthermore, in order for \eqref{eq:mk_lpopt} to remain unchanged by merging two items, they must be different from item $n$. These issues can be corrected easily by merging only specific pairs of items.

Let $S_4 \subseteq S_3$ be the set of instances in $S_3$ such that for each pair of items $j_1, j_2 \ne n$, it holds that $w_{j_1}+w_{j_2}> C_1$. For an instance $(C,w,p) \in S_4$, let $f_4(C,w,p)= (C,w,p)$. Else, assume $j_1, j_2 \ne n$ and $w_{j_1} + w_{j_2} < C_1$. First, let us multiply the profit vector $p$ by $\nicefrac{w_n}{p_n}$; let $p'$ be the resulting vector. It does not change the gap, since both the integer and fractional optima get multiplied by the same constant. Note that $\nicefrac{p'_1}{w_1}\ge \ldots \ge \nicefrac{p'_n}{w_n}=1$, and \eqref{eq:mk_lpopt} becomes
\begin{equation}\label{eq:mk_lpopt_one}
\lp(C,w,p') = p'_{\text{sum}} - w_{\text{sum}} + C_{\text{sum}}.
\end{equation}

Let us merge items $j_1$ and $j_2$ into an item $j_{new}$ with $p'_{j_{new}}:= p'_{j_1} + p'_{j_2}$ and $w_{j_{new}}:= w_{j_1} + w_{j_2}$. Let $(C,w'', p'')$ be the resulting instance. If there are still two items that can be merged (the sum of their weight is at most $C_1$, and they are different from $n$), repeat the process until the condition of $S_4$ holds. Let $f_4(C,w,p)$ be this final instance.

\begin{lemma}\label{lem:mk_4}
    $S_3 \to_{f_4} S_4$.
\end{lemma}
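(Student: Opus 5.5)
It suffices to show that a single merge step preserves the gap and keeps the instance inside $S_3$. The full claim then follows by induction on the number of items. Each merge reduces $n$ by one, so the iteration terminates, necessarily in an instance of $S_4$.

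Fix $(C,w,p)\in S_3\setminus S_4$, and pass to the rescaled profit vector $p'$. This rescaling multiplies both $\lp$ and $\ip$ by $\nicefrac{w_n}{p_n}$, so the gap is unchanged. Let $j_1,j_2\ne n$ with $w_{j_1}+w_{j_2}\le C_1$, and let $(C,w'',p'')$ be the merged instance.

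\textbf{Step 1: the new instance is admissible and lies in $S_3$.}
\begin{itemize}
\item Since $w_{j_{new}}\le C_1=\max_i C_i$, the degeneracy condition of $S_0$ still holds.
\item The totals $w''_{\text{sum}}=w_{\text{sum}}$ and $p''_{\text{sum}}=p'_{\text{sum}}$ are preserved.
\item The unit profit of $j_{new}$ is a mediant of $\nicefrac{p'_{j_1}}{w_{j_1}}$ and $\nicefrac{p'_{j_2}}{w_{j_2}}$, both of which are $\ge 1=\nicefrac{p'_n}{w_n}$. Hence item $n$ is still last in some valid sorted order (ties broken in its favour).
\end{itemize}
It follows that the greedy solution $x^*_s$ of the merged instance again fills all knapsacks completely, with only item $n$ protruding. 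Concretely, $w''_{\text{sum}}-w''_n=w_{\text{sum}}-w_n<C_{\text{sum}}\le w''_{\text{sum}}$. I also need $f_{m-1}<\infty$, which says knapsack $m$ is nonempty. This follows from the inequality $w_{\text{sum}}-w_n<C_{\text{sum}}\le w_{\text{sum}}$: the prefix sums before $n$ do not exceed $C_{\text{sum}}$, while the total reaches it. Thus the membership conditions of $S_1$, $S_2$ and $S_3$ persist, which is the bookkeeping I expect to need the most care. A small subtlety: the paper writes the strict inequality $<C_1$ in the construction but merges on $\le C_1$; either convention works.

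\textbf{Step 2: the LP value is unchanged.}
By \eqref{eq:mk_lpopt_one}, which applies because the merged instance is in $S_3$ with $\nicefrac{p''_n}{w''_n}=1$, we get
\[
\lp(C,w'',p'')=p''_{\text{sum}}-w''_{\text{sum}}+C_{\text{sum}}=\lp(C,w,p').
\]
This is the key point, and the reason item $n$ must be excluded from merging: if $n$ were merged, the critical unit profit would change and \eqref{eq:mk_lpopt_one} would fail.

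\textbf{Step 3: the IP value does not increase.}
Take any feasible integer solution of the merged instance. Replacing $j_{new}$ by both $j_1$ and $j_2$ in the same knapsack gives a feasible solution of the original instance with the same weight per knapsack and the same profit. Hence $\ip(C,w'',p'')\le\ip(C,w,p')$.

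Combining Steps 2 and 3, Lemma \ref{lem:gap_change_max} Case 1 gives $\ig(C,w,p)=\ig(C,w,p')\le\ig(C,w'',p'')$. Iterating this over all merge steps proves $S_3\to_{f_4}S_4$.
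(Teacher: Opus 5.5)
Your proposal is correct and follows the paper's proof essentially step for step. Like the paper, you rescale $p$, merge $j_1,j_2\ne n$, use the mediant bound to keep item $n$ last so that \eqref{eq:mk_lpopt_one} gives an unchanged $\lp$, and note that integer solutions of the merged instance embed into those of the original; your explicit check that the merged instance stays in $S_3$ is bookkeeping the paper leaves implicit, though $S_2$ only yields $w_{\text{sum}}-w_n\le C_{\text{sum}}$ rather than the strict inequality you wrote (harmless, since your argument works verbatim with $\le$).
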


\begin{proof}
    It is enough to prove the gap-preserving property of one round in the reduction. Multiplying the vector $p$ modifies the integer and the linear optima by the same constant, so the gap does dot change. Clearly, $\ip(C,w,p')\ge \ip(C,w'',p'')$, as the integer feasible solutions for the latter can be perceived naturally as a subset of the integer solutions of the former. For the linear optimum, note that $p''_{\text{sum}}=p_{\text{sum}}$ and $w''_{\text{sum}}=w_{\text{sum}}$, and even though the sorted order of items might change from $(C,w,p')$ to $(C,w'',p'')$, item $n$ will remain the last one:
    \[
     \frac{p'_{j_{new}}}{w_{j_{new}}} = \frac{p'_{j_{1}} + p'_{j_{2}}}{w_{j_{1}} + w_{j_{2}}} \ge \frac{w_{j_{1}} + w_{j_{2}}}{w_{j_{1}} + w_{j_{2}}}=1= \frac{p'_n}{w_n},
    \]
    using the fact that $\nicefrac{p'_{j_1}}{w_{j_1}} \ge \nicefrac{p'_n}{w_n}=1$, so $p'_{j_1}\ge w_{j_1}$.

    Consequently, \eqref{eq:mk_lpopt_one} shows that 
    \[
    \lp(C,w'',p'') = \lp(C,w,p').
    \]
    The statement follows from Lemma \ref{lem:gap_change_max}, and from $\ig(C,w,p')=\ig(C,w,p)$.
\end{proof}

The property of $S_4$ gives a very strong structure to integer solutions. In particular, in any assignment, with the exception of one knapsack, all knapsacks contain either one or zero items, whereas the potential exception may contain a pair of items, among which $n$ must be present. We can exploit this property in the following way: let $i$ be an arbitrary knapsack with item $j\ne n$ assigned to it in an optimal integer assignment. If we decrease both $C_i$ and $w_j$ by $w_j$, then \eqref{eq:mk_lpopt} does not change, and by the property that no two items (except maybe item $n$ with another item) fit inside any knapsack together, the integer optimum will remain the same as well. The only caveat we have to bear in mind is that all items must fit inside the largest knapsack (or else the issue with a potentially infinite gap would re-emerge), so the modification cannot be done for the first knapsack. Moreover, in order for \eqref{eq:mk_lpopt} to be an invariant during the reduction, we cannot modify $w_n$, creating edge cases when item $n$ is involved in the optimal solution. We elaborate on the idea below.

Let $\sigma: [n]\to [m+1]$ denote an assignment of items to knapsacks, where $\sigma(j)=i$ means item $j$ is placed in knapsack $i$, and $\sigma(j)=m+1$ means item $j$ was not included in any of the knapsacks. The assignment $\sigma$ is called \emph{feasible} if $w(\sigma_i):= \sum_{j \in [n]: \sigma(j)=i} w_j \le C_i$ holds for each knapsack $i \in [m]$. Let $\sigma_{opt}$ be an optimal assignment for an instance $(C,w,p)\in S_4$, and let $i>1$ be a knapsack such that $\sigma_{opt}^{-1}(i) = \{j,n\}$ or $\sigma_{opt}^{-1}(i) = \{j\}$ for some $j \in [n-1]$. Let $C'_i:= C_i - w_j$, $w'_j := w_j - w_j =0$, and $C'_{i'}=C_{i'}$, $w_{j'}':= w_{j'}$ for all $i' \ne i, j' \ne j$. Let $p'\equiv p$, and dispose of item $j$ from the instance, giving $(C'',w'',p'')=(C'|_{[n]\setminus \{j\}}, w'|_{[n]\setminus \{j\}}, p'|_{[n]\setminus \{j\}})$. Repeat the process as long as there exist $\sigma_{opt}, i$ and $j$ satisfying the requirements. Note that each application of the step strictly reduces the number of items, so a final instance, denoted by $f_5(C,w,p)$, will eventually be reached. Let $S_5\subseteq S_4$ be the set of instances in $S_4$ such that for every optimal assignment $\sigma_{opt}$ and for every knapsack $i>1$, either $\sigma^{-1}_{opt}(i)=\emptyset$ or $\sigma^{-1}_{opt}(i)=\{n\}$ holds.

\begin{lemma}\label{lem:mk_5}
    $S_4 \to_{f_5} S_5$.
\end{lemma}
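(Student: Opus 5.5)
As in Lemma \ref{lem:mk_4}, it suffices to show that one round of the reduction preserves the gap. The full claim then follows for three reasons: each round strictly decreases the number of items, so $f_5$ is well defined; membership in $S_4$ is maintained; and the final instance satisfies the defining property of $S_5$ by the stopping rule. Fix $(C,w,p)\in S_4\setminus S_5$, an optimal assignment $\sigma_{opt}$, a knapsack $i>1$ and an item $j\in[n-1]$ with $\sigma_{opt}^{-1}(i)\in\{\{j\},\{j,n\}\}$, and let $(C'',w'',p'')$ be the modified instance. The plan is to show $\lp(C'',w'',p'')=\lp(C,w,p)-p_j$ and $\ip(C'',w'',p'')=\ip(C,w,p)-p_j$. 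Since $\lp\ge\ip$, Case 2 of Lemma \ref{lem:gap_change_max} then applies with $\delta=\varepsilon=p_j$.

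\textbf{LP side.} I would invoke \eqref{eq:mk_lpopt}. Removing item $j$ decreases $p_{\text{sum}}$ by $p_j$ and $w_{\text{sum}}$ by $w_j$, and $C_{\text{sum}}$ also decreases by $w_j$. Hence $w_{\text{sum}}-C_{\text{sum}}$ is unchanged. Item $n$ is untouched, so it remains last in the ratio order, and $p_n/w_n$ is unchanged. One must still check that the new instance satisfies the validity assumptions of \eqref{eq:mk_lpopt}: it should lie in $S_3$ (or at least $x^*_s$ should still fill all knapsacks and leave item $n$ as the only possibly split critical item). This holds because the total-weight excess is preserved and the order of the remaining items is unchanged. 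Membership in $S_0$ is also preserved: $C_1$ is unchanged and is still the maximum since $C'_i\le C_i$, and no weight increased. Membership in $S_4$ is preserved because pairwise weight sums are unchanged. Note that $C'_i\ge w_n$ if $n\in\sigma_{opt}^{-1}(i)$, and $C'_i\ge 0$ in general; if $C'_i=0$ the knapsack can be kept or discarded.

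\textbf{IP side.} The lower bound is straightforward. Restricting $\sigma_{opt}$ to $[n]\setminus\{j\}$ is feasible for the new instance, because knapsack $i$ now holds at most item $n$ with load $w(\sigma_i)-w_j\le C'_i$. Hence $\ip''\ge \ip-p_j$. For the upper bound, take any feasible $\sigma''$ for the new instance and try to extend it by placing $j$ into knapsack $i$. This requires $w(\sigma''_i)\le C_i-w_j$, which holds by feasibility, so $j$ fits and $\ip\ge\ip''+p_j$. The subtlety is only that $j$ must not already be elsewhere, and it is not, since $j$ was deleted. So this direction is in fact immediate, and the $S_4$ structure is not even needed for it.

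The step I expect to be the main obstacle is verifying that \eqref{eq:mk_lpopt} genuinely remains valid after the modification. Shrinking $C_i$ could conceivably create an empty knapsack or change which items are critical, so one must confirm that the greedy solution $x^*_s$ still fills every knapsack. If it does not, I would argue directly instead: the Martello–Toth merged-knapsack view depends only on $C_{\text{sum}}$, and the LP value of the merged knapsack with capacity $C_{\text{sum}}-w_j$ and item $j$ removed equals the old value minus $p_j$. This holds because $j$ precedes $n$ in the ratio order and was fully packed in the old greedy solution. This alternative route avoids the issue entirely.
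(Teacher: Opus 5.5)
Your proof is correct, but it decomposes the argument differently from the paper.

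\textbf{How the paper proceeds.} The paper splits each round into two stages. It first passes to an intermediate instance in which item $j$ is kept with weight $w'_j=0$ and capacity $C'_i=C_i-w_j$. It asserts that both optima are unchanged there: the LP by \eqref{eq:mk_lpopt}, and the IP by the $S_4$ property that no item other than $n$ fits into the shrunken knapsack $i$, so that the maximal feasible assignments ``coincide''. Only then does it delete $j$, which lowers both optima by $p_j$, and it applies Lemma~\ref{lem:gap_change_max}.

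\textbf{How you proceed.} You merge the two stages and show $\lp''=\lp-p_j$ and $\ip''=\ip-p_j$ directly. Your IP argument restricts $\sigma_{opt}$ for one inequality and reinserts $j$ into knapsack $i$ for the other. It uses only that $j\in\sigma_{opt}^{-1}(i)$ for some optimal $\sigma_{opt}$.

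\textbf{What your route buys.} The exchange argument shows that the $S_4$ structure is not needed for gap preservation at all. It only serves to guarantee that the terminal instance lies in $S_5$. It is also more precise than the paper's ``coincide'' claim, which is not literally true: in the original instance knapsack $i$ may host a single item $k\ne j,n$ that no longer fits after shrinking, so only the optimal values agree.

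On the LP side, you check rather than assume that \eqref{eq:mk_lpopt} stays valid. The relevant conditions $\sum_{l<n}w_l\le C_{\text{sum}}\le w_{\text{sum}}$ both shift by exactly $w_j$. Your merged-knapsack fallback is an independent and correct justification of the same fact.

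\textbf{What the paper's route buys.} Its two-stage presentation has the minor advantage that the first stage is exactly gap-neutral, which fits its ``local modification'' narrative. Your version is shorter and self-contained.
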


\begin{proof}
    It suffices to show the gap-preserving property for one single choice of $\sigma_{opt}, i>1$ and $j\ne n$. By \eqref{eq:mk_lpopt}, it holds that $\lp(C',w',p')=\lp(C,w,p)$, given that $j\ne n$ and both $w_{\text{sum}}$ and $C_{\text{sum}}$ get modified by the same value. On the other hand, by the property of $S_4$, no item can fit into knapsack $i$ with the reduced capacity $C'_i$, except for (potentially) item $n$. In either case, the inclusion-wise maximal feasible assignments of $(C,w,p)$ and of $(C',w',p')$ coincide, with the exception that $j$ can be assigned to an arbitrary knapsack in $(C',w',p')$ due to $w'_j = 0$. It follows that $\ip(C',w',p')=\ip(C,w,p)$ and $\ig(C',w',p')=\ig(C,w,p)$.

    For the second part of the reduction, observe that deleting item $j$ from $(C',w',p')$ decreases both the integer and fractional optimum by $p_j$, so Lemma \ref{lem:gap_change_max} implies $\ig(C'',w'',p'')\ge \ig(C',w',p')$.

    Lastly, let us remark that $f_5$ indeed maps into $S_5$, as it keeps all previous properties of $S_1, \ldots, S_4$.
\end{proof}

In the remaining part, we focus on strengthening the previous property by guaranteeing that there is an optimal assignment $\sigma_{opt}$ that uses only one single item. Let $S_6\subseteq S_5$ be the set of instances in $S_5$ with this property. The reduction will eliminate the few edge cases left by the previous step. For $(C,w,p)\in S_6$, we define $f_6(C,w,p)=(C,w,p)$. Let $(C,w,p) \in S_5 \setminus S_6$, and let $\sigma_{opt}$ be an optimal assignment. We distinguish between the following cases:

\begin{enumerate}
    \item Assume that there exists a knapsack $i>1$ such that $\sigma_{opt}(n)=i$. Any knapsack other than $i$ and $1$ are left empty in $\sigma_{opt}$, and $\sigma^{-1}(1)=j$ for some $j\ne n$. It must hold that $w_{j'} > C_{i'}$ for all items $j'\ne n$ and all knapsacks $i'\ne i$, or else we could strictly increase the profit of $\sigma_{opt}$ by placing $j'\ne j$ on $i$, or we could move $j'=j$ from $1$ to $i$, and place another item on the first knapsack (recall that all items fit in knapsack $1$ by assumption). Observe that item $j$ cannot fit inside knapsack $i$ either, or else the assignment $\sigma'_{opt}$ in which $j$ and $n$ are swapped would be feasible and thus optimal, so the previous reduction step could not have terminated.
    
    \begin{enumerate}
        \item No item, apart from $n$, fits inside knapsack $i$. Delete item $n$ completely, and let $f_6(C,w,p)$ be the resulting instance.
        
        \item There exists an item $k\ne j,n$ such that $w_k \le W_i$. Note that $p_k < p_n$ must hold: if $p_k > p_n$, then swapping them would strictly increase the profit of $\sigma_{opt}$; else if $p_k = p_n$, then swapping them would result in another optimal assignment $\sigma'_{opt}$ for which the previous reduction step is still applicable. Let us define $p'_k:= p_n > p_k$, and $p'=p$ otherwise. For the instance $(C,w,p')$, the assignment $\sigma_{opt}$ can be modified by putting item $k$ on knapsack $i$ and removing $n$, after which the previous reduction step can be repeated. Let $f_6(C,w,p)=f_5(C,w,p')$ be the resulting instance.
    \end{enumerate}
    
    \item In the remaining case, knapsacks $2,\ldots, m$ are empty in $\sigma_{opt}$, but the first knapsack contains more than one item. By property $S_4$, this can only happen if item $n$ and some other item $j\ne n$ are assigned there.

    \begin{enumerate}
        \item There exists an item $k\ne n$ that does not fit into knapsack $1$ together with $n$, because $w_n + w_k > C_1$. Clearly, $k\ne j$ must hold. The idea is to increase $p_k$ and $w_k$ simultaneously in a way that all previous properties remain satisfied, and the gap does not decrease. First, note that $\frac{p_k + \varepsilon}{w_k + \varepsilon}\ge 1=\frac{p_n}{w_n}$ holds for any $\varepsilon>0$, so even though the order of items $1,\ldots, n-1$ might change, item $n$ will remain the last element. By \eqref{eq:mk_lpopt_one}, the modification does not change the LP-optimum, unless item $n$ becomes redundant in $x^*_s$ and stops being the last critical item. This happens when $\sum\limits_{l=1}^{n-1} w_l + \varepsilon = C_{sum}$, so $\varepsilon \le C_{sum} - (w_{sum}-w_n)$ must hold. The other thing we must pay attention to is not to increase the integer optimum. Originally, $p_k \le p_j + p_n$ must hold, or else we could replace the pair $(j,n)$ with $k$ in the first knapsack without decreasing the integer optimum. If the modification does not increase $p_k$ above $p_j + p_n$, the integer optimum will not change either. Let us choose $\varepsilon=\min\{C_{sum} - (w_{sum}-w_n),\, p_j + p_n - p_k\}$, and let us define $w'$ and $p'$ as $p'_k = p_k + \varepsilon, w'_k = w_k + \varepsilon$, and $p'=p, w'=w$ else. If $\varepsilon = C_{sum} - (w_{sum}-w_n)$, we define $f_6(C,w,p)=(C, w'|_{[n-1]}, p'|_{[n-1]})$. In the other case, $\varepsilon = p_j + p_n -p_k$, and we simply define $f_6(C,w,p)=(C,w',p')$.

        \item The pair $(k,n)$ fits inside $C_1$ for all items $k\ne n$. Note that apart from $j$, at least one item must exist, since knapsacks $2,\ldots, m$ cannot be empty in $x^*_s$ by the property $S_3$. Let us define $w'_n = p'_n = \max\{C_2, \ldots, C_m\}$, and $w'=w, p'=p$ else. Since item $n$ can be placed on some machine now, we are in Case $1$. We define $f_6(C,w,p)= f_6 (C, w', p')$ as given above.
    \end{enumerate}

\end{enumerate}

\begin{lemma}\label{lem:mk_6}
    $S_5 \to_{f_6} S_6$.
\end{lemma}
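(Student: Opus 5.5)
The plan is to verify, case by case, the three properties required of $f_6$. First, each single application of the case-dependent modification does not decrease the gap; this uses Lemma \ref{lem:gap_change_max}, mostly Case 1, where $\lp$ is unchanged by \eqref{eq:mk_lpopt_one} and $\ip$ does not increase. Second, the modified instance still lies in $S_5$, possibly after one more application of $f_5$, which is gap preserving by Lemma \ref{lem:mk_5}. Third, the recursion terminates in $S_6$. For termination I would use the potential (number of items, number of items fitting in some knapsack $i>1$ other than $n$). Cases 1(a), 1(b) and 2(a) with $\varepsilon = C_{sum}-(w_{sum}-w_n)$ strictly decrease the number of items, or hand over to $f_5$, which does. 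Case 2(b) leads to Case 1, and Case 2(a) with $\varepsilon = p_j+p_n-p_k$ produces an instance where $k$ alone is optimal in knapsack $1$, i.e.\ an $S_6$ instance.

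\textbf{Case 1(a).} I delete item $n$. Since no item other than $n$ fits in knapsack $i$, any optimal assignment after deletion loses exactly $p_n$, so the new $\ip$ equals $\ip-p_n$. On the LP side, item $n$ is the last critical item, so removing it decreases $\lp$ by its fractional contribution $(C_{sum}-(w_{sum}-w_n))\cdot p_n/w_n \le p_n$. Lemma \ref{lem:gap_change_max} Case 2 then applies, with $\delta$ the LP loss and $\varepsilon=p_n$. I then re-apply $f_1,f_2,f_3$ if needed to restore the preprocessing properties.

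\textbf{Case 1(b).} Raising $p_k$ to $p_n$ does not change $\ip$, since the assignment swapping $k$ for $n$ has the same profit and any assignment containing $k$ gained at most $p_n-p_k$ while being dominated. It does not decrease $\lp$, since $\lp$ is monotone in profits. The order of items keeps $n$ last, because $p_k/w_k$ only increases. After this, $f_5$ applies with $\sigma'_{opt}$ and is gap preserving.

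\textbf{Case 2(a).} The computation in the construction already shows that $\lp$ is unchanged by \eqref{eq:mk_lpopt_one}. I must check that $\ip$ is not increased. A new solution using $k$ gains at most $\varepsilon \le p_j+p_n-p_k$. Since $w_k+\varepsilon$ still fits in at most one knapsack, and $k$ cannot pair with $n$ in knapsack $1$, the best such solution is at most $p_k+\varepsilon \le p_j+p_n$. In the branch $\varepsilon=C_{sum}-(w_{sum}-w_n)$, item $n$ becomes integrally unnecessary, so deleting it is exactly the $f_2$-type reduction (Lemma \ref{lem:mk_2}). The $S_4$ property is maintained because weights only grow, and $\max_j w_j\le C_1$ holds by the bound on $\varepsilon$.

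\textbf{Case 2(b).} I set $w'_n=p'_n$, which keeps $p'_n/w'_n=1$, and take it equal to $\max_{i\ge2}C_i$. I expect this case to be the main obstacle. The value $w_{sum}$ changes, so by \eqref{eq:mk_lpopt_one} $\lp$ changes only by $(p'_n-w'_n)-(p_n-w_n)=0$, provided $n$ remains critical. I would confirm this from $S_3$ together with the fact that the new $w_n$ is no larger than before: every pair $(k,n)$ fits in $C_1$, while knapsacks $2,\dots,m$ are filled in $x^*_s$. For $\ip$, I need that placing $n$ on a smaller knapsack gains nothing beyond the original optimum. If this fails, I would fall back on showing that the gap of the modified instance is at least the original via Case 2 of Lemma \ref{lem:gap_change_max}.
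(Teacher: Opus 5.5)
Your case split and modifications are the paper's, and Case 1(a) is handled as there. But the integer-side estimates, which are the substance of the lemma, are not proved in Cases 2(b) and 1(b).

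In 2(b) you leave open whether $\ip$ could grow. The missing idea is the structural fact the paper establishes for Case 2: no item, $n$ included, fits in any knapsack $i\ge 2$. Otherwise $\sigma_{opt}$ improves strictly, or $f_5$ applies again, or one is in Case 1. Hence $p'_n=w'_n=\max_{i\ge 2}C_i<w_n=p_n$. By $S_4$, knapsack $1$ never holds two items other than $n$. So every inclusion-maximal feasible assignment, before and after the change, is one item $k\ne n$ in knapsack $1$ plus $n$; all pairs $(k,n)$ already fit in $C_1$, so the new room for $n$ frees nothing. Thus $\ip$ drops by exactly $p_n-p'_n$.

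Your doubt about the criticality of $n$ is legitimate. For $C=(100,20,20)$, $w=(50,55,45)$, $p=(100,60,45)$ one gets $w'_{sum}=125<C_{sum}$, and $\lp$ drops from $195$ to $180$, so the paper's ``$\lp$ unchanged'' fails here. However, your justification ($S_3$ together with $w'_n\le w_n$) points the wrong way: shrinking $w_n$ is precisely what can destroy criticality. What does hold is that $\lp$ drops by $\max\{0,(p_n-p'_n)-(w_{sum}-C_{sum})\}\le p_n-p'_n$. So your fallback via Case 2 of Lemma~\ref{lem:gap_change_max} works, but only once the exact drop of $\ip$ is established.

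In 1(b), ``dominated'' does not supply the inequality $p_k<p_n<p_j$ that the paper uses. Item $j$ cannot fit in knapsack $i$, since swapping $j$ and $n$ would give an optimal assignment violating $S_5$. So $w_j>C_i\ge w_n$, and $p_j/w_j\ge p_n/w_n$ then yields $p_n<p_j$. Without this, the assignment with $k$ in knapsack $1$ and $n$ in knapsack $i$, of new profit $2p_n$, is not bounded by $\ip=p_j+p_n$.

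In 2(a), your claim that the bound on $\varepsilon$ keeps $\max_j w_j\le C_1$ is false. Take $C=(100,40,40)$ and items $(w,p)=(50,200),(90,91),(45,45)$ in this order, so $j=1$, $k=2$, $n=3$. This instance lies in $S_5\setminus S_6$: the unique optimum is $\{j,n\}$ in knapsack $1$, and nothing fits in knapsacks $2,3$. It falls in Case 2(a), since $w_n+w_k=135>C_1$. Then $\varepsilon=\min\{40,154\}=40$ gives $w'_k=130>C_1$.

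The paper's integer estimate in 2(a) does not rely on this claim, since an item fitting nowhere is simply never used, so you should drop it. Be aware, though, that this means the image can leave $S_0$, which neither your argument nor the paper's addresses. Finally, your bound ``at most $p_k+\varepsilon$'' for solutions using $k$ tacitly uses the fact that no item fits in any knapsack $i\ge 2$. You never state or prove this, and it is needed in both Case 2 subcases.
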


\begin{proof}
    We prove the statement on a case-by-case basis. 
    \begin{itemize}
        \item Case $1.$a. The deletion of item $n$ decreases the LP-optimum by at most $p_n$ and the integer optimum by exactly $p_n$, since the inclusion-wise maximal assignments of $(C,w,p)$ and $(C,w|_{[n-1]}, p|_{[n-1]})$ have a bijection by inserting/moving item $n$ to knapsack $i$.
        
        \item Case $1.$b. Since $k\ne n$, the LP-optimum increases by $p_n-p_k>0$ according to \eqref{eq:mk_lpopt}. We can show that the integer optimum is unchanged. To see this, notice that $w_j > C_i \ge w_n$, or else we could switch items $j$ and $n$ and repeat the previous step. Now if $p_n > p_j$ would be true, we would have that $\nicefrac{p_n}{w_n} > \nicefrac{p_j}{w_j}$, contradicting the minimality of item $n$. Consequently, $p_k \le p_n \le p_j$ holds. Let $\sigma'_{opt}$ be an optimal assignment for $(C,w,p')$. If $\sigma'_{opt}(k)=m+1$, then $p(\sigma'_{opt})=p'(\sigma'_{opt})$, and $\ip(C,w,p)=\ip(C,w,p')$. If $\sigma'_{opt}(k)=1$, then item $n$ is assigned to knapsack $i$ and the total profit is $p'(\sigma'_{opt})=p'_k+p_n = p_n + p_n \le p_j + p_n = \ip(C,w,p)$. Lastly, if $\sigma'_{opt}(k)=i$, then the item with maximal profit (item $j$) is assigned to knapsack $1$, and the total profit is $p'(\sigma'_{opt})=p'_k+p'_j=p_n +p_j = \ip(C,w,p)$. The statement follows.

        \item Case $2$.a. Observe that none of the items fit inside any knapsack $i>1$. If some item other than $j$ does fit, $\sigma_{opt}$ could be strictly improved by placing it in $i$. If item $j$ or $n$ fits in $i$, it could be moved to knapsack $i$ and $f_5$ could be repeated. Lastly, if item $n$ fits in knapsack $i$, we could move it there and would arrive in Case $1$. Let $\sigma'_{opt}$ be an optimal integer assignment of $(C,w',p')$. If $\sigma'_{opt}(k)=1$, then it is the only item there, by the assumption that $w_n + w_k > C_1$. Hence, $\ip(C,w',p') = p'_k = p_k + \varepsilon \le  p_j + p_n =\ip(C,w,p)$. Otherwise, $\sigma'_{opt}(k)=m+1$, and $p'(\sigma'_{opt})=p(\sigma'_{opt})$, implying $\ip(C,w',p') \le \ip(C,w,p)$. As pointed out before, we have $\lp(C,w,p) = \lp(C,w',p')$, so $\ig(C,w',p')=\ig(C,w,p)$. 
        
        In case $\varepsilon = C_{sum} - (w_{sum}-w_n)$ holds, we further have $\lp(C, w'|_{[n-1]}, p'|_{[n-1]}) = \lp(C,w',p')$ and $\ip(C, w'|_{[n-1]}, p'|_{[n-1]}) \le \ip(C,w',p')$, implying $\ig(C,w',p')\le \ig(C, w'|_{[n-1]}, p'|_{[n-1]})$. Since item $n$ is no longer part of the input, the property of $S_6$ indeed holds. In the other case, we can exchange the pair $(j,n)$ with $k$ in knapsack $1$ in $\sigma_{opt}$ to obtain another optimal assignment. It follows that $(C,w',p')\in S_6$ in this case as well.

        \item Case $2$.b. Since $p_n= w_n$ both change by the same value, \eqref{eq:mk_lpopt} remains unchanged. On the other hand, we can show that the integer optimum decreases. Let $\sigma'_{opt}$ be an optimal assignment for $(C,w',p')$. \eqref{eq:mk_lpopt}. Since no two items, apart from $n$, fit together in knapsack $1$, and no items, apart from $n$, fit inside the other knapsacks, we have that the inclusion-wise maximal feasible assignments for $(C,w',p')$ are the ones that contain some item $k\ne n$ in knapsack $1$, and item $n$ in some knapsack (possibly $1$). The profit of these assignments strictly decreased, so $\ip(C,w',p') < \ip(C,w,p)$ holds.
    \end{itemize}
\end{proof}

The property of $S_6$ is strong enough to calculate the gap of an instance explicitly.

\begin{lemma}\label{lem:mk_final}
    $\ig(S_6)=m+1$.
\end{lemma}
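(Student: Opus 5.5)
We have to prove two inequalities: an upper bound $\ig(C,w,p)\le m+1$ for every instance of $S_6$, and a family of instances in $S_6$ whose gap tends to $m+1$, so that the supremum equals $m+1$.

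\emph{Upper bound.} For $(C,w,p)\in S_6$, let $\sigma_{opt}$ be an optimal assignment placing a single item $j$ in knapsack $1$ and leaving the others empty. Then $\ip(C,w,p)=p_j$. Since every item fits into knapsack $1$ (because $\max_j w_j\le \max_i C_i=C_1$), the item $j$ must have maximal profit, so $p_k\le p_j$ for all $k$.

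Next we bound the LP value through the greedy solution $x^*_s$. By the preprocessing from $S_3$, every knapsack is full, and there are critical items $f_1,\ldots,f_m$.

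The key observation is that in $x^*_s$ the items integrally packed in knapsack $i$, together with the fractional pieces of items, contribute at most one ``item-profit'' per critical boundary plus the contents of each knapsack. More precisely, we bound each knapsack's integral contents by $p_j$ as follows. The items placed wholly in knapsack $i$ form an integer feasible packing for knapsack $i$ alone. Putting them into knapsack $1$ (since $C_1\ge C_i$) gives a feasible integer solution, so their profit is at most $\ip=p_j$.

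This yields $m$ bundles of integral items, each of profit at most $p_j$. The fractional parts belong to at most $m$ critical items. Critical items $f_1,\ldots,f_{m-1}$ are split between two knapsacks, and $f_m=n$ sticks out at the end. Counting the whole profit of each critical item once, each is at most $p_j$.

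The naive count gives $2m$, which is too weak. So the plan is to group more cleverly, as in the classical Martello–Toth argument. For each $i<m$, bundle the integral items of knapsack $i$ together with the full critical item $f_i$. This bundle has total weight exceeding $C_i$, so it is not packable as is. Removing $f_i$ leaves a packable set; keeping only $f_i$ is also packable. Hence each bundle is bounded by $2p_j$, which again gives $2m$.

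Therefore I would instead use the merge structure of $S_4$. Any two items other than $n$ have combined weight exceeding $C_1\ge C_i$. Hence each knapsack contains, in $x^*_s$, at most one integral item besides possibly $n$. So the LP value is at most the sum of the profits of at most $m$ integral items plus the fractional critical items. Because knapsacks are full and no two non-$n$ items fit together, the total number of distinct items touched is at most $m+1$. With each item of profit at most $p_j$, this gives $\lp\le(m+1)p_j$.

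The main obstacle is making this counting rigorous, namely showing that at most $m+1$ items appear in $x^*_s$. The argument I would try is this: since consecutive non-$n$ items cannot share a knapsack, item $l$ must start in knapsack $\ge l$ roughly, so the $m$ knapsacks are exhausted after $m+1$ items. I need to handle carefully the role of item $n$, which might share a knapsack with another item.

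\emph{Lower bound.} Take $m$ knapsacks of capacity $C$ and $m+1$ items, each of weight $\frac{C}{2}+\epsilon'$ scaled so that $w_{\text{sum}}\ge C_{\text{sum}}$ while each pair exceeds $C$, with all profits equal to $1$. One checks membership in $S_6$ by verifying the defining properties of $S_1$ through $S_6$ directly. Then $\ip=1$ when all but one knapsack are unusable by any item. To arrange this, make knapsacks $2,\ldots,m$ tiny and use $m+1$ tiny-weight-adjusted items. Meanwhile $\lp\to m+1$, giving $\ig\to m+1$. I would fine-tune the weights so that every item fits only in knapsack $1$ yet the total weight still fills all knapsacks. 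If this proves inconsistent, I would fall back on the standard Martello–Toth example, verified to lie in $S_6$.
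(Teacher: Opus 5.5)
Your identification $\ip=p_j=\max_k p_k$ is fine. The upper bound, however, rests entirely on the claim that $x^*_s$ touches at most $m+1$ items, and you try to derive it from full knapsacks plus the $S_4$ property alone. That cannot work. $S_4$ still lets each knapsack hold one integral item together with pieces of two critical items, so item $l$ only starts in knapsack about $\lceil l/2\rceil$, not $l$. For example, $m=2$, $C=(10,10)$, $w=(6,6,6,6)$, $p=(1,1,1,1)$ lies in $S_4$. Yet $x^*_s$ packs items $1,3$ integrally and items $2,4$ fractionally, so it touches $2m>m+1$ items.

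The missing ingredient is the $S_6$ property itself, used for more than $\ip=p_j$. Profits are positive and the lone item $j$ in knapsack $1$ is optimal, so no item fits into any knapsack $i\ge 2$. Otherwise you could add it to $\sigma_{opt}$, or, if it is $j$, move $j$ there and put another item into knapsack $1$. Hence knapsacks $2,\dots,m$ contain no integrally packed item in $x^*_s$, only pieces of critical items. Your $S_4$ observation gives that knapsack $1$ holds at most one integral item; those items precede $f_1$, so none of them is $n$. Therefore all items but one are among $f_1,\dots,f_m$. Since every item is touched (by $S_2$), this gives $n\le m+1$ and $\lp\le p_{\text{sum}}\le (m+1)\max_k p_k=(m+1)\ip$. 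This is exactly the paper's argument.

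The lower bound is not established either. Your first instance has equal capacities $C$ and $m+1$ items of weight about $C/2$. There $\ip=m$, since every knapsack takes one item. For $m\ge 2$ it only satisfies $w_{\text{sum}}\ge C_{\text{sum}}$ if the weights are at least $mC/(m+1)$, and its gap is at most $(m+1)/m$.

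Making knapsacks $2,\dots,m$ ``tiny'' goes the wrong way. A knapsack of capacity $C_i$ contributes only about $C_i/w$ items' worth of profit to $\lp$, which leaves the gap near $2$. What you need is capacities just below the item size. Take $m+1$ items with $w_k=p_k=N$, $C_1=2N-1$ and $C_2=\dots=C_m=N-1$. Each item fits only into knapsack $1$ and only alone, so $\ip=N$. By \eqref{eq:mk_lpopt}, $\lp=C_{\text{sum}}=(m+1)N-m$. For $N>m$ the instance lies in $S_6$, and its gap $m+1-m/N$ tends to $m+1$.

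This is the paper's example, but with the slack sent to $0$ while $m$ stays fixed. The paper's choice of slack $1/n$ with $n=m+1$ only yields $m+\frac{1}{m+1}$ for a fixed $m$.
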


\begin{proof}
    Property $S_6$ implies $\ip(C,w,p)=\max_j\{p_j\}$. Since no items fit inside knapsacks $2,\ldots,m$, the instance consist of just one single integer item and at most $m$ fractional items, implying $n\le m+1$. Consequently, if we rewrite \eqref{eq:mk_lpopt} as $\lp(C,w,p)=p_{sum}-\varepsilon$ with $\varepsilon= (w_{sum} - C_{sum})\cdot \nicefrac{p_n}{w_n}$, we have that
    \[
    \ig(C,w,p) = \frac{p_{sum}-\varepsilon}{\max_j\{p_j\}} \le \frac{(m+1)\cdot \max_j\{p_j\}-\varepsilon}{\max_j\{p_j\}}\le m+1.
    \]
    Equality is shown by the following instances: let $p_1 = \ldots = p_n = w_1 \ldots w_n = 1$ with $n=m+1$, and let $C_1 = 2-\frac{1}{n}$ and $C_2 = \ldots = C_{m}=1-\frac{1}{n}$. It is easy to check that the instance belongs to $S_6$, and the gap goes to $m+1$ when $n$ goes to $\infty$.
\end{proof}

\thmmk

\begin{proof}
    It follows from Lemma \ref{lem:mk_final} and from the fact that our reductions is an IGPR-chain.
\end{proof}

\section{Unrelated machine scheduling}\label{app:machine_sched}

Let $(x^*, T^*)$ be an optimal vertex of $\lp(P)$. We say job $j$ is \emph{integer on} $i$ (in $x^*$) if $x^*_{j,i}=1$, and $j$ is \emph{fractional on} $i$ otherwise. Job $j$ is \emph{integer} if it's integer on some machine, and it's \emph{fractional} otherwise. Let $J_{x^*}\subseteq [n]$ denote the jobs that are fractional in $x^*$. We use the same auxiliary bipartite graph $G(x^*)$ as \cite{lenstra_shmoys_tardos} and \cite{Vazirani03} (chapter $17.3$) do for a parametrized version of $\lp(P)$, with results transferable to our case. Let $G(x^*)=(J_{x^*},[m],E)$ where $(j,i)\in E$ if and only if $0 < x^*_{j,i} < 1$. Similar to \cite{Vazirani03} (lemmas $17.3, 17.4$ and $17.6$), we can show $G(x^*)$ is a forest for at least one optimal vertex $(x^*,T^*)$. For the sake of simplicity, we will refer to only $x^*$ as an ``optimal vertex of $\lp(P)$'', since $T^*$ (the fractional optimum) is the same across all optimal vertices. Let $X^*_P$ denote the set of optimal vertices of $P$ such that $G(x^*)$ is a forest.

\begin{lemma}\label{lem:vert_char}
    $X^*_P \ne \emptyset$ for every instance $P$.
\end{lemma}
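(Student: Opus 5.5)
We need to show that some optimal vertex $x^*$ of $\lp(P)$ has a forest $G(x^*)$. The plan is to take an optimal vertex, use a rank argument to bound the number of edges of $G(x^*)$ on every connected component, and conclude each component is a pseudotree, with at most one cycle. A pseudotree is not yet a forest, so I then break cycles by moving to another optimal solution. This follows Lemmas 17.3, 17.4 and 17.6 of \cite{Vazirani03}.

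\textbf{Step 1 (counting).} Since $T$ is a variable here, the polytope has dimension one more than in the parametrized version. I fix an optimal $T^*$ and consider the polytope
\[
Q(T^*)=\{x\ge 0 : \eqref{makespan_const}, \eqref{job_const} \text{ with } T=T^*\},
\]
whose vertices are optimal solutions of $\lp(P)$. By Lemma 17.3 of \cite{Vazirani03}, a vertex of $Q(T^*)$ has at most $n+m$ nonzero coordinates. The same count applies on each connected component $H$ of $G(x^*)$, via the standard restriction argument of Lemma 17.4. Restricting $x^*$ to the variables of $H$ gives a vertex of the subsystem formed by the job constraints of the fractional jobs in $H$ and the machine constraints of the machines in $H$. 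Here the machine right-hand sides are reduced by the integral load, so
\[
|E(H)|\le |V(H)|.
\]
Hence every component of $G(x^*)$ is a tree or a tree plus one edge, that is, a pseudotree.

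\textbf{Step 2 (breaking cycles).} Suppose some component contains a cycle. Since $G(x^*)$ is bipartite, the cycle is even and alternates jobs and machines. I choose a vertex $x^*$ of $Q(T^*)$ maximizing the number of zero coordinates, equivalently minimizing $|E(G(x^*))|$, and try to perturb $x^*$ along the cycle to decrease this number. The natural perturbation alternately adds and subtracts $\varepsilon$ on the cycle's edges, so each job constraint stays satisfied. The machine constraints are the problem: machine $i$'s load changes by $\varepsilon_1 p_{j,i}-\varepsilon_2 p_{j',i}$. So the amounts must be scaled multiplicatively around the cycle, $\varepsilon_{k+1}=\varepsilon_k p_{j,i}/p_{j',i}$. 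This is consistent only if the product of the ratios around the cycle is $1$. Generically it is not, and then some machine's load strictly decreases while another's increases. Note that $x^*$ being a vertex of $Q(T^*)$ already rules out a cycle whose ratio product is $1$, since such a cycle would give a direction of movement in both senses.

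\textbf{Main obstacle.} A pseudotree component can therefore persist at vertices, so I need the freedom of $T$. Along the cycle I push in the direction that does not increase the maximum load. If strict slack appears on all tight machines, $T^*$ was not optimal, which is a contradiction. Otherwise I continue until some edge variable hits $0$ or $1$, which breaks the cycle. I then re-take a vertex of the new optimal face and argue that the number of edges strictly decreases, so by induction we reach $X^*_P\ne\emptyset$.

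If this exchange argument becomes messy, my fallback is to redefine $X^*_P$ via the vertex of the parametrized polytope used in \cite{Vazirani03}. That would mean proving the forest property for a vertex of $\lp$ that is minimal in the number of nonzero fractional variables, and handling the single cycle per component through the unbalanced-machine argument above.
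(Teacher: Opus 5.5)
Your outline is the paper's: perturb along a cycle of $G(x^*)$ with multiplicatively scaled amounts, use vertex-ness to exclude a ratio product of $1$, otherwise push until a cycle variable vanishes, and iterate. But the step that makes the argument work is missing, and what you put in its place is wrong.

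\textbf{The missing step.} Index the cycle as $j_1,i_1,\dots,j_l,i_l$, with job $j_k$ shifting $\varepsilon_k$ from $i_{k+1}$ to $i_k$. Let $R=\prod_{k=1}^{l} p_{j_{k-1},i_k}/p_{j_k,i_k}$, with indices mod $l$. Take $\varepsilon_1=1$ and $\varepsilon_k=\varepsilon_{k-1}\,p_{j_{k-1},i_k}/p_{j_k,i_k}$ for $k\ge 2$. Then the loads of $i_2,\dots,i_l$ are unchanged, and the load of $i_1$ changes by $p_{j_1,i_1}(1-R)$. So pushing along $+\varepsilon$ if $R>1$, and along $-\varepsilon$ if $R<1$, raises no machine's load and strictly lowers one. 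This is exactly the paper's case $R>1$.

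\textbf{What you wrote instead.} You assert the opposite (``some machine's load strictly decreases while another's increases'') and conclude that you ``need the freedom of $T$''. That freedom is useless. $T$ cannot drop below $T^*$, and raising it leaves the optimal face, so the point you reach is no longer an optimal solution, let alone an optimal vertex. Moreover, your own Step 1 shows that a component $H$ containing a cycle has $|E(H)|=|V(H)|$ at a vertex. This forces every machine constraint of $H$ to be tight, so any direction that raises a cycle machine's load is infeasible at $T^*$. Hence ``push in the direction that does not increase the maximum load \dots\ until some edge variable hits $0$ or $1$'' is unjustified until the non-increasing direction above is exhibited. The case ``strict slack on all tight machines'' never arises, since only one load moves.

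\textbf{Closing the argument.} Once you have that direction, everything happens inside $Q(T^*)$ with $T$ fixed, and Step 1 becomes unnecessary. You must also pick the next vertex carefully. Let $y$ be the point you reached, and take a vertex of the face $\{x\in Q(T^*): x_e=0 \text{ whenever } y_e=0\}$. Its fractional edges are contained in $E(G(y))$, which is a proper subset of $E(G(x^*))$. An arbitrary vertex of the optimal face, by contrast, may have more edges. The paper handles this differently: it notes that the push direction spans an edge of the polyhedron and walks to the other endpoint. Minimal support and minimal $|E(G(x^*))|$ are not literally equivalent, but either works as the potential.

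\textbf{The fallback.} It does not prove the stated lemma, because $X^*_P$ is already defined. Also, extreme points of the parametrized LP in \cite{Vazirani03} are in general only pseudoforests, so you would need the same one-sided push there anyway.
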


\begin{proof}
    Let $x^*$ be an optimal vertex of $\lp(P)$, and assume $G(x^*)$ has a cycle. We may assume it is determined by a sequence $j_1, i_1, j_2, i_2, \ldots, j_l, i_l$ such that $x^*_{j_k, i_k}\ne 0$ and $x^*_{j_k, i_{k+1}}\ne 0$ for $k=1, \ldots, l$, with $k+1$ interpreted modulo $l$. Let $\varepsilon = (\varepsilon_1, \ldots, \varepsilon_l)\ge 0$ be an arbitrary vector, and let us define $x^+=x^* +\varepsilon$ as $x^+_{j_1, i_1}:=x^*_{j_1,i_1}+\varepsilon_1, \, x^+_{j_1, i_2}:=x^*_{j_1, i_2}-\varepsilon_1, \ldots, x^+_{j_l, i_l}:=x^*_{j_l,i_l}+\varepsilon_l, \, x^+_{j_l, i_1}:=x^*_{j_l, i_1}-\varepsilon_l$; let $x^+_{j,i}=x^*_{j,i}$ otherwise. Define similarly $x^-= x^*-\varepsilon$ by switching $\pm$ for each $\varepsilon_k$. We want to determine under what conditions $x^+$ or $x^-$ are feasible for $\lp(P)$. Constraint \eqref{job_const} holds for both, by the way we defined the two vectors. In order for \eqref{makespan_const} to hold for $x^+$, it is enough to guarantee that the completion times of machines $i_1, \ldots, i_l$ does not increase when switching from $x^*$ to $x^+$. More precisely, we need, for $k=1, \ldots, l$, that the following condition holds:
    \begin{equation}\label{eq:full_change}
        p_{j_k, i_k} \cdot (x^*_{j_k, i_k}+\varepsilon_k) + p_{j_{k-1}, i_k} \cdot (x^*_{j_{k-1}, i_k}-\varepsilon_{k-1}) \le p_{j_k, i_k} \cdot x^*_{j_k, i_k} + p_{j_{k-1}, i_k} \cdot x^*_{j_{k-1}, i_k},
    \end{equation}
    or equivalently,
    \[
    \varepsilon_k \cdot p_{j_k, i_k} \le \varepsilon_{k-1}\cdot p_{j_{k-1}, i_k}
    \]
    and 
    \begin{equation}\label{eq:eps+}
        \frac{\varepsilon_k}{\varepsilon_{k-1}} \le \frac{p_{j_{k-1}, i_k}}{p_{j_k, i_k}}.
    \end{equation}
    We also need to make sure $x^+ \ge 0$ holds, meaning that $\varepsilon_k \le x^*_{j_k, i_{k-1}}$ needs to be true for all $k$. But once we have chosen $\varepsilon$ such that \eqref{eq:eps+} holds, we can rescale the vector $\varepsilon$ by a constant such that $\varepsilon_k \le x^*_{j_k, i_{k-1}}$ becomes true for all $k$. This rescaling does not affect \eqref{eq:eps+}, as LHS is homogenous in $\varepsilon$.

    For $x^-$, the corresponding inequalities are
    \begin{equation}\label{eq:eps-}
        \frac{p_{j_{k-1}, i_k}}{p_{j_k, i_k}} \le \frac{\varepsilon_k}{\varepsilon_{k-1}}
    \end{equation}
    for $k=1, \ldots, l$.

    We distinguish between two cases. First, let us assume 
    \[
    \frac{p_{j_l, i_1}}{p_{j_1, i_1}} \cdot \frac{p_{j_1, i_2}}{p_{j_2, i_2}} \cdot \ldots \cdot \frac{p_{j_{l-1}, i_l}}{p_{j_l, i_l}} =1.
    \]
    If we define $\varepsilon$ recursively as $\varepsilon_1 := p_{j_1, i_2}$ and $\varepsilon_k := \frac{p_{j_{k-1}, i_k}}{p_{j_k, i_k}}\cdot \varepsilon_{k-1}$ for $k=2, \ldots, l$, then both \eqref{eq:eps+} and \eqref{eq:eps-} hold with equality for all $k$. After an appropriate rescaling of $\varepsilon$, we have $0 \le x^+, x^-$ as well. But then $x^*$ is the convex combination of $x^+$ and $x^-$ (with machine completion times that are coordinate-wise equal to those of $x^*$), contradicting the fact that $x^*$ is a vertex.

    Now assume that
    \[
    \frac{p_{j_l, i_1}}{p_{j_1, i_1}} \cdot \frac{p_{j_1, i_2}}{p_{j_2, i_2}} \cdot \ldots \cdot \frac{p_{j_{l-1}, i_l}}{p_{j_l, i_l}}  > 1.
    \]
    In this case, $x^*+\varepsilon$ with the above-defined $\varepsilon$ is feasible for \eqref{eq:eps+}. Note that if there exists a machine among $i_1, \ldots, i_l$ for which \eqref{makespan_const} is not tight, we can also satisfy \eqref{eq:eps-} as follows. By relabelling, we may assume \eqref{makespan_const} is not tight for $i_l$. Constraints \eqref{eq:eps-} for $i_1, \ldots, i_{l-1}$ are still satisfied (with equality) by the definition of $\varepsilon$, but
    \[
    \frac{p_{j_{l-1}, i_l}}{p_{j_l, i_l}} > \frac{\varepsilon_l}{\varepsilon_{l-1}}
    \]
    might hold. In this case, however, by another rescaling of $\varepsilon$ we can make sure that the LHS of \eqref{eq:full_change} is larger than the RHS only by a small value $\delta$. This does not change the makespan, since constraint \eqref{makespan_const} was not tight for $i_l$ in $x^*$. We reach the same contradiction as before, the vector $x^*$ being the convex combination of $x^+$ and $x^-$.
    
    Consequently, constraint \eqref{makespan_const} is tight for $i_1, \ldots, i_l$. Therefore, the difference vector between $x^*$ and $x^+$ constitutes an edge of the polyhedron $\lp(P)$, as every tight constraint for $x^*$ remains tight for $x^+$, with the exception of \eqref{makespan_const} for machine $i_l$. If we choose $\varepsilon$ such that $x^*+\varepsilon$ is on the boundary of $\lp(P)$, the resulting vector must again be a vertex of $\lp(P)$. This can only happen when $(x^*+\varepsilon)_{j,i}$ becomes $0$ for some coordinate, so $G(x^*+\varepsilon)$ has one less edge than $G(x^*)$, the original cycle is no longer present, and no new cycle was created in the process. If we repeat the argument each time there is a cycle in the auxiliary graph, we eventually arrive at an optimal vertex $x^*$ for which $G(x^*)$ is a forest.
\end{proof}

For the sake of smoother reductions, we focus on instances with at most $M$ machines for a fixed constant $M$, temporarily denoted as $R(\le M)|\cM_j|C_{max}$. A summary of key steps is presented in Table \ref{tab:ms_summary}. The set $S_0 = \{P \in (\enne_+ \cup \{\infty\})^{n \times m}: n,m \in \enne_+, m \le M\}$ denotes the collection of all instances. The first reduction exploits the following observation: for an optimal vertex $x^*$, we can increase the entries of $P$ for which $x^*_{j,i}=0$ without affecting the feasibility or the makespan of $x^*$. The integer optimum can only increase, along with the integrality gap. Let $S_1 := \{ P \in S_0 \vert\, x_{j, i}^* = 0 \Rightarrow p_{j,i} = \infty$ for all $x^*$ optimal vertex of $\lp(P)\}$. Let $P \in S_0\setminus S_1$ such that there exists an optimal vertex $x^*$ and a pair $(j,i)$ for which $x^*_{j,i}=0$ but $p_{j,i}\ne \infty$. Let us fix one such $x^*$, and define $P'$ by setting $p'_{j,i}=\infty$ for all such pairs $(j,i)$, and $p'_{j,i}:=p_{j,i}$ else. It might be that some optimal vertex $x^{**}$ of $\lp(P')$ still does not satisfy the criteria; then we proceed by creating $P''$ based on $x^{**}$, and so on. Eventually, we find an instance $P^f$ whose optimal vertices all satisfy the requirement, since the number of non-infinite coordinates strictly decreases with each modification. Observe that $P^{f}$ might contain a machine $i$ such that $p^f_{j,i}=\infty$ for all jobs $j \in [n]$. We may get rid of such machines, and write $P^f = P^f|_{([m]\setminus \{i\}) \times [n]}$. Let $f_1(P)=P^f$.

\begin{lemma}\label{lem:ms_1}
    $S_0 \to_{f_1} S_1$.
\end{lemma}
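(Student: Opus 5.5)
We need to show that $f_1$ is well defined, that its image lies in $S_1$, and that each elementary step does not decrease the gap; the claim then follows by induction over the finitely many iterations. Consider one step $P \mapsto P'$ built from a fixed optimal vertex $x^*$ of $\lp(P)$, where we set $p'_{j,i}=\infty$ for every pair with $x^*_{j,i}=0$ and $p_{j,i}\neq\infty$.

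\emph{Integer optimum.} Every schedule feasible for $P'$ uses only pairs $(j,i)$ with $p'_{j,i}<\infty$. On those pairs $p'_{j,i}=p_{j,i}$, so the schedule is also feasible for $P$ with the same makespan. Hence $\ip(P)\le \ip(P')$.

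\emph{Fractional optimum.} The vector $(x^*,T^*)$ remains feasible for $\lp(P')$. Its support uses only pairs whose entries were not changed, so constraints \eqref{makespan_const} and \eqref{job_const} hold with the same left-hand sides. Dropping the variables $x_{j,i}$ for the new $\infty$ entries is harmless because they were zero. Hence $\lp(P')\le T^*=\lp(P)$, and Case 1 of Lemma \ref{lem:gap_change} gives $\ig(P)\le\ig(P')$.

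We also check that $P'$ is a legitimate instance. Every job keeps at least one finite entry, since $\sum_i x^*_{j,i}=1$ forces some $x^*_{j,i}>0$. The only possible defect is a machine column that becomes entirely $\infty$.

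\emph{Deleting all-$\infty$ machines.} Suppose machine $i$ has only $\infty$ entries. No job can be scheduled on it, integrally or fractionally, so removing the column leaves both the feasible integer schedules and the LP feasible region unchanged, up to dropping nonexistent variables. Thus $\ip$ and $\lp$ are unchanged. The number of machines does not increase, so the result stays in $S_0$ with $m\le M$.

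\emph{Termination.} Each step strictly decreases the number of finite entries of $P$, since at least one pair $(j,i)$ was changed, and this number is a nonnegative integer. So the iteration stops at some $P^f$. By the stopping rule, no optimal vertex of $\lp(P^f)$ has a zero coordinate at a finite entry, so $P^f\in S_1$. Column deletion preserves this property, because it does not alter the remaining coordinates of any vertex. Instances already in $S_1$ are mapped to themselves, so $f_1$ maps $S_0$ into $S_1$.

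The main point to verify carefully is the fractional step. We must make sure that replacing entries by $\infty$ means removing variables from $\lp$, not keeping them with infinite coefficients, so that the restricted $x^*$ is genuinely feasible. Once that convention is fixed, the proof is a direct application of Lemma \ref{lem:gap_change}.
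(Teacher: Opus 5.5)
Your proof is correct and is essentially the paper's argument: each step keeps the optimal vertex $x^*$ feasible, so the LP value does not increase; raising entries to $\infty$ cannot decrease $\ip$; and Case 1 of Lemma~\ref{lem:gap_change} concludes. Your extra bookkeeping (termination via the number of finite entries, every job keeping a finite entry, and harmless deletion of all-$\infty$ machine columns) is material the paper places in the definition of $f_1$ rather than in the proof itself.
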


\begin{proof}
    It suffices to show the gap-preserving property for a single step of the modification. Let $x^*$ be an optimal vertex of $\lp(P)$, and assume $x^*_{j,i}=0$ but $p_{j,i}\ne \infty$. Let $P'$ be the instance where $p'_{j,i}=\infty$ and $P'=P$ else. The vector $x^*$ remains feasible, so $\lp(P)=\lp(P')$. On the other hand, increasing processing times can only increase the integer optimum, so $\ip(P)\le \ip(P')$. By Lemma \ref{lem:gap_change} Case $1$, we are done.
\end{proof}

Note that for $P \in S_1$, all optimal vertices of $\lp(P)$ have the exact same integer jobs: if $x^*_{j,i}=1$, then $x^*_{j,i'}=0, \,\, \forall i'\ne i$ by \eqref{job_const}. By the property of $S_1$, it must hold that $p_{j,i'}=\infty$ for all $i' \ne i$, so $j$ must be integer on $i$ in every optimal vertex of $\lp(P)$. For this reason, we call a job $j$ \emph{integer (on machine $i$)} if $p_{j,i'}=\infty$ for all $i' \ne i$. Job $j$ is integer iff it is integer in any and all of the optimal vertices of $\lp(P)$. Similarly, if $0< x^*_{j,i}<1$ for some optimal vertex $x^*$, then it must be true for any other vertex as well, and $j$ is called \emph{fractional (on $i$)}. We denote by $J_F \subseteq [n]$ and $J_I \subseteq [n]$ the set of fractional and integer jobs. As a consequence, for $P\in S_1$, every optimal vertex of $\lp(P)$ is automatically in $X^*_P$, and $G(x^*)$ is the same forest (with $V(G(x^*))=J_F \cup [m]$) for each of them, denoted by $G(P)$ from now on.

Our next reduction step puts a limit on the number of integer jobs. Let $S_2 \subseteq S_1$ denote the instances $P \in S_1$ for which there is at most $1$ integer job on each machine. Let $f_2(P)=P$ for $P \in S_2$, else let $f_2(P)=P'$ be defined as follows: if jobs $j_1, \ldots, j_k$ are integer on machine $i$, we merge them into to a new job $j_{new}$ such that $p'_{j_{new}, i} = p_{j_1, i} + \ldots + p_{j_k,i}$, and $p'_{j_{new}, i'}=\infty$ for $i' \ne i$. Repeat this for each machine $i$, and let $P'=P$ elsewhere. 

\begin{lemma}\label{lem:merge_int}
    $S_1 \to_{f_2} S_2$.
\end{lemma}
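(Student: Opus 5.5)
We need $\ig(P)\le \ig(P')$ with $P'=f_2(P)$, and also $P'\in S_2$. The plan is to show that $\lp(P')=\lp(P)$ and $\ip(P')=\ip(P)$, which is slightly stronger than needed, and then conclude by Lemma \ref{lem:gap_change} Case 1 (in fact with equality of the gaps). Since the merging is done machine by machine and each merge only touches jobs integer on one machine $i$, it suffices to analyse a single merge of the integer jobs $j_1,\ldots,j_k$ on a fixed machine $i$. The other machines are handled by repeating the argument.

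\textbf{Integer optimum.} In $P$ (which lies in $S_1$), each $j_l$ has $p_{j_l,i'}=\infty$ for $i'\ne i$. Any schedule of finite makespan must therefore put all of $j_1,\ldots,j_k$ on $i$, where they contribute exactly $p_{j_1,i}+\ldots+p_{j_k,i}=p'_{j_{new},i}$ to the load of $i$. This gives a makespan-preserving bijection between finite schedules of $P$ and of $P'$: identify "all $j_l$ on $i$" with "$j_{new}$ on $i$", leaving the other jobs unchanged. Hence $\ip(P)=\ip(P')$.

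\textbf{LP optimum.} In any feasible $x$ of $\lp(P)$, constraint \eqref{job_const} forces $x_{j_l,i}=1$, since $i$ is the only variable present for $j_l$. Symmetrically $x'_{j_{new},i}=1$ in $\lp(P')$. Mapping $x\mapsto x'$ (drop the $j_l$ coordinates, set $x'_{j_{new},i}=1$) leaves every left-hand side of \eqref{makespan_const} unchanged. It is a bijection between the feasible regions with the same $T$, so $\lp(P)=\lp(P')$.

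\textbf{Membership in $S_2$.} It remains to check that $P'\in S_1$, so that $P'\in S_2$. This is the only step that needs care. Optimal vertices of $\lp(P')$ correspond, via the affine bijection above, to optimal vertices of $\lp(P)$, because the bijection is an affine isomorphism of the polyhedra (it just removes coordinates fixed to $1$). A zero entry $x'_{j,i'}=0$ with finite $p'_{j,i'}$ would then pull back to a zero entry of an optimal vertex of $\lp(P)$ with finite $p_{j,i'}$, contradicting $P\in S_1$. For $j_{new}$ this is automatic, since all its other entries are $\infty$. After the merge, each machine carries at most one integer job, and no machine loses all its finite entries, so $P'\in S_2$.
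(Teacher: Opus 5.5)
Your proposal is correct and follows the same route as the paper. Both arguments use a correspondence between the (optimal vertex) LP solutions and the integer schedules of $P$ and $f_2(P)$ to get $\lp(P)=\lp(f_2(P))$ and $\ip(P)=\ip(f_2(P))$, and inherit the $S_1$ property through the vertex correspondence. Your version is slightly more explicit: you treat the map as an affine isomorphism of the full feasible regions and check that no column becomes all-$\infty$, both of which the paper leaves implicit.
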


\begin{proof}
    Optimal vertices of $\lp(P)$ and $\lp(f_2(P))$ are in a one-to-one correspondence with each other. If $x^*$ is an optimal vertex of $\lp(P)$ and $j_1, \ldots, j_k$ are the integer jobs on machine $i$, then define $\hat{x}$ as $\hat{x}_{j_{new},i}=1$, $\hat{x}_{j_{new},i'}=0$ for all $i'\ne i$, and $\hat{x}_{j,i'}=x^*_{j,i'}$ for all $j \not \in \{j_1, \ldots, j_k\}$ and $i' \in [m]$. Repeat this for every machine $i$. The vector $\hat{x}$ is an optimal vertex of $\lp(P')$, and the transformation works in the other direction as well, implying $\lp(P)=\lp(P')$. A similar argument gives that $\ip(P)=\ip(P')$. Note that $f_2(P)$ still satisfies the property of $S_1$, due to the correspondence between the vertices.
\end{proof}

Our next reduction ensures $G(P)$ is a connected graph. Assume it is not the case, and let $G_1 =((J_1, I_1), E_1)$ be a connected component where $J_1 \subseteq J_F$ and $I_1 \subseteq [m]$. Let $J_2 := J_F\setminus J_1$ and $I_2 := [m]\setminus I_1$. Note that $J_1 = \emptyset$ or $J_2 = \emptyset$ might occur, if there are isolated machine nodes in $G$. Let $J_I^1$ and $J_I^2$ denote the integer jobs on machines from $I_1$ and $I_2$, respectively. Let $P_1 = P|_{(J_1\cup J_I^1) \times I_1}$ and $P_2 = P|_{(J_2\cup J_I^2) \times I_2}$. The graph $G_2 = ((J_2, I_2), E_2)$ with $E_2 := E(G(P)) \setminus E_1$ is also a forest, and it holds that $G_1 = G(P_1)$ and $G_2 = G(P_2)$ by the property of $S_1$. Furthermore, the instance $P$ is the block sum of the instances $P_1$ and $P_2$: $\ip(P)=\max\{\ip(P_1), \ip(P_2)\}$ and $\lp(P)=\max\{\lp(P_1), \lp(P_2)\}$, and any optimal vertex $x^*$ of $\lp(P)$ is the block sum of $x^*_1$ and $x^*_1$ that are (not necessarily optimal) vertices of $\lp(P_1)$ and $\lp(P_2)$, respectively.

Assume without loss of generality that $\lp(P_1) \ge \lp(P_2)$, and let $c:= \frac{\lp(P_1)}{\lp(P_2)}$. Let us define $cP_2$ by multiplying each processing time in $P_2$ by $c$. Let $P'$ denote the instance in which $P_2$ is replaced by $cP_2$ in $P$. Clearly, $\ip(cP_2)=c\ip(P_2)$ and $\lp(cP_2)=c\lp(P_2) = \lp(P_1)$. It follows that $\ip(P')\ge \ip(P)$ and $\lp(P')=\lp(P)$. So if we choose $P'' := \arg\max\{\ip(P_1), \ip(cP_2)\}$, it holds that 
\[
\ig(P)=\frac{\ip(P)}{\lp(P)}\le \frac{\ip(P')}{\lp(P')}\le \frac{\ip(P'')}{\lp(P'')}=\ig(P'').
\]
Let $S_3 \subseteq S_2$ denote the instances $P \in S_2$ for which $G(P)$ is connected (and hence it is a tree). Let $f_3: S_2 \to S_3$ be given as $f_3(P)=P$ for $P \in S_3$, else define $P''$ as above. If $G(P'')$ is still not a tree, repeat the same step by considering again a component until $G(P^f)$ is a tree for the final instance $P^f$. Note that $|V(G)|$ decreases strictly in each iteration, so $P^f$ exists. Define $f_3(P):=P^f$. Note that $f_3$ maintains both of the previous properties of $S_1$ and $S_2$, due to the vertices of $\lp(P_2)$ and $\lp(cP_2)$ being the same, and due to the block decomposition of the vertices of $\lp(P)$ into the vertices of the sub-instances. The following Lemma is implied:

\begin{lemma}\label{lem:graph_conn}
    $S_2 \to_{f_3} S_3$.
\end{lemma}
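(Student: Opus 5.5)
To prove Lemma \ref{lem:graph_conn} I need three things. First, a single component-extraction step $P \mapsto P''$ does not decrease the gap. Second, $P''$ still lies in $S_2$, so the step can be iterated. Third, the iteration terminates in $S_3$.

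\textbf{Block structure of $P$.} Take $P\in S_2\setminus S_3$. By the property of $S_1$, every finite entry $p_{j,i}$ corresponds to an edge of $G(P)$, or to an integer job sitting on its unique machine. So no job of $J_1\cup J_I^1$ has a finite entry on a machine of $I_2$, and symmetrically for $J_2\cup J_I^2$. Hence every assignment, integral or fractional, splits into independent assignments for $P_1$ and $P_2$. Since the makespan is a maximum over machines, this gives $\ip(P)=\max\{\ip(P_1),\ip(P_2)\}$ and $\lp(P)=\max\{\lp(P_1),\lp(P_2)\}$.

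\textbf{Rescaling step.} Assume $\lp(P_1)\ge\lp(P_2)>0$ and set $c=\lp(P_1)/\lp(P_2)\ge 1$. Both the IP and the LP are homogeneous in $P$, so $\ip(cP_2)=c\,\ip(P_2)$ and $\lp(cP_2)=\lp(P_1)$. Writing $P'$ for the block sum of $P_1$ and $cP_2$, we get $\lp(P')=\lp(P)$ and $\ip(P')\ge\ip(P)$. By Lemma \ref{lem:gap_change} Case 1, $\ig(P)\le\ig(P')$. Now let $P''$ be the block with the larger IP value. Then $\ip(P'')=\ip(P')$ and $\lp(P'')=\lp(P_1)=\lp(P')$, so $\ig(P')=\ig(P'')$.

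Two degenerate cases need care, and I would use the convention $\ig(\emptyset)=1$ for them. One is a component consisting of isolated machines with no jobs. Such a block has LP value $0$, and it can simply be discarded since it affects neither optimum. The other is $\lp(P_2)=0$; in that case I take $P''=P_1$ directly.

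\textbf{$P''$ remains in $S_2$ (main obstacle).} I must show that the optimal vertices of $\lp(P'')$ all satisfy the $S_1$ condition. The vertices of $\lp(cP_2)$ are exactly those of $\lp(P_2)$ (same assignment vectors, makespan scaled by $c$). Vertices of $\lp$ of a block sum decompose blockwise, but an optimal vertex of $\lp(P)$ restricted to $P_2$ need not be optimal for $\lp(P_2)$ when $\lp(P_2)<\lp(P)$. This is where care is needed.

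My plan is to argue through the face of $\lp(P)$ at level $T^*=\lp(P)$. Every vertex of $\lp(P_1)$ that is optimal for $P_1$ at makespan $T^*$ extends, by combining it with any vertex of the $P_2$ side feasible at $T^*$, to an optimal vertex of $\lp(P)$. Therefore the $S_1$ property transfers to every optimal vertex of $\lp(P_1)$. The same argument applies to $cP_2$, because after rescaling $\lp(cP_2)=T^*$ exactly. The $S_2$ property (at most one integer job per machine) is inherited trivially.

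\textbf{Termination.} If $G(P'')$ is still disconnected, I repeat the step. Each step removes at least one connected component, so $|V(G)|$ strictly decreases, and the process ends with a connected forest, i.e.\ a tree, which is an element of $S_3$. Chaining the per-step inequalities gives $\ig(P)\le\ig(f_3(P))$, which proves the lemma.
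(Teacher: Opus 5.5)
Your overall route is the paper's: split $P$ into one component's block and the rest, rescale the block with smaller LP value by $c=\lp(P_1)/\lp(P_2)$, keep the block with larger IP value, and iterate while $|V(G)|$ drops. The gap-monotonicity part is fine, and so is your extension argument for the block $P_1$ that attains $T^*$.

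\textbf{The gap.} The problem is the sentence ``The same argument applies to $cP_2$, because after rescaling $\lp(cP_2)=T^*$ exactly.'' Your extension argument rests on the hypothesis $P\in S_1$, which is a statement about optimal vertices of $\lp(P)$. It is not a statement about the rescaled block sum $P'$, and $P'\in S_1$ is essentially what you would be assuming. Inside $\lp(P)$ the second block sits at level $T^*$, which is strictly above $\lp(P_2)$ whenever $c>1$.

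Write $Q_{P_2}(T)$ for the polytope of fractional assignments of $P_2$ with all loads at most $T$. Optimal vertices of $\lp(cP_2)$ correspond to vertices of $Q_{P_2}(\lp(P_2))$. These are in general not vertices of $Q_{P_2}(T^*)$, so they do not extend to vertices of $\lp(P)$. For example, take a single job with $p=(1,1)$. Its unique optimal vertex $(\tfrac12,\tfrac12)$ at level $\tfrac12$ is not a vertex of $Q(0.7)$, whose vertices are $(0.3,0.7)$ and $(0.7,0.3)$. This case cannot be skipped, because $P''=cP_2$ happens exactly when $\ip(cP_2)>\ip(P_1)$. The paper's own justification, which appeals to the block decomposition of vertices of $\lp(P)$, also passes over this point.

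\textbf{The fix.} It is short, and uses that $Q_{P_2}(\lp(P_2))\subseteq Q_{P_2}(T^*)$. Suppose some optimal solution of $\lp(P_2)$ had $x_{j,i}=0$ with $p_{j,i}<\infty$. Then $Q_{P_2}(T^*)\cap\{x_{j,i}=0\}$ is a nonempty face of the polytope $Q_{P_2}(T^*)$, so it contains a vertex $v$ of $Q_{P_2}(T^*)$. Your rank argument shows that $v$, combined with an optimal vertex of $\lp(P_1)$, is an optimal vertex of $\lp(P)$ that vanishes on a finite entry. This contradicts $P\in S_1$. Hence $P_2\in S_1$, and therefore $cP_2\in S_1$.

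A cleaner equivalent uses the same face argument on the optimal face of $\lp(P)$. It shows that $P\in S_1$ holds if and only if every optimal solution, not just every optimal vertex, is positive on all finite entries. With this reformulation both blocks are handled at once: an optimal solution of $\lp(P_2)$ together with an optimal solution of $\lp(P_1)$ is an optimal solution of $\lp(P)$. For block $1$, any point of $Q_{P_2}(T^*)$ can play the role of the second part.

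A minor point that neither you nor the paper addresses: $cP_2$ may have non-integral entries. This is harmless after multiplying by a common denominator, since both the gap and the vertex structure are scale-invariant.
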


The remaining reductions revolve around shrinking $G(P)$ by deleting special nodes of the tree that have (almost) only leaf neighbours. A leaf of $G(P)$ must be a machine-node, since each fractional job has at least two neighbours in $G(P)$ by definition. We call machine $i$ a \emph{leaf-machine} if $i\in V(G(P))$ is a leaf node. The next preprocessing reduction ensures that leaf-machines don't have integer jobs; let $S_4\subseteq S_3$ denote the set of such instances in $S_3$. Let us define $f_4(P)=P$ for $P \in S_4$. Else, assume $i$ is a leaf-machine and has an integer job $j_{int}$. Since $G(P)$ is a tree and $i$ is a leaf, there must also exist exactly one job $j_{frac}$ that is fractional on $i$. Let us define $P'$ as $p'_{j_{int},i}=0$, $p'_{j_{frac},i}=p_{j_{frac},i} + p_{j_{int},i}$, and let $P'=P$ else. Since $p'_{j_{int,i}}=0$, we can get rid of job $j_{int}$ in the new instance $P'$ without affecting $\lp(P')$ or $\ip(P')$. We repeat the same process for each leaf-machine; let the final instance be $P^f$.

Unlike previous cases, simply defining $f_4(P)=P^f$ would not guarantee that all former properties of $S_1, S_2$ and $S_3$ are satisfied, since vertex structures might change globally with respect to $\lp(P)$. A quick workaround is as follows: if $f_4$ maps $P$ outside $S_1$ (or any of the previous sets), we can repeat the reductions $f_1, f_2$, and $f_3$ and then apply $f_4$ again. If the result is still outside any of the previous sets, we can repeat the whole chain as many times as necessary to arrive at an instance in $S_4$. This will happen in a finite number of iterations, given that each time we use $f_4$, the number of jobs in the instance strictly decreases (with the convention that $j_{int}$ is disposed of after setting $p'_{j_{int}, i}=0$). Let $F_4(P)$ be the result of these finitely many (at most $n$) compositions of the chain $f_3 \circ f_2 \circ f_1 \circ f_4$ with itself. We sketch an auxiliary drawing in Figure \ref{fig:f4_red}.

\begin{figure}
    \centering
    \begin{tikzpicture}
\draw[rounded corners, black] (0, 0) rectangle (6, 7);
\draw[rounded corners, green] (1, 0) rectangle (6, 7);
\draw[rounded corners, red] (2, 0) rectangle (6, 7);
\draw[rounded corners, blue] (3, 0) rectangle (6, 7);
\draw[rounded corners, orange] (4, 0) rectangle (6, 7);

\node at (0.5, 0.5) (s0) {$S_0$};
\node at (1.5, 0.5) (s1) {\textcolor{green}{$S_1$}};
\node at (2.5, 0.5) (s2) {\textcolor{red}{$S_2$}};
\node at (3.5, 0.5) (s3) {\textcolor{blue}{$S_3$}};
\node at (5, 0.5) (s4) {\textcolor{orange}{$S_4$}};

\node at (-1, 7.2) (text) {Number of jobs};

\node at (-1, 6.5) (text) {$2m-1$};
\node at (-1, 6) (text) {$\ldots$};
\node at (-1, 5.5) (text) {$k$};
\node at (-1, 4.5) (text) {$\ldots$};
\node at (-1, 1.5) (text) {$l$};
\node at (-1, 1) (text) {$\ldots$};
\node at (-1, 0.5) (text) {$1$};

\foreach \y in {1,2,3,4,5,6}
    \draw[dashed] (0,\y) -- (6,\y);

\node[draw,circle, inner sep=0pt] (p0) at (3.5,5.5) {$P_0$};
\node[draw,circle, inner sep=0pt] (p1) at (2.5,4.5) {$P_1$};
\node[draw,circle, inner sep=0pt] (p2) at (3.5,4.5) {$P_2$};
\node[draw,circle, inner sep=0pt] (p3) at (0.5,3.5) {$P_3$};
\node[draw,circle, inner sep=0pt] (p4) at (1.5,3.5) {$P_4$};
\node[draw,circle, inner sep=0pt] (p5) at (2.5,3.5) {$P_5$};
\node[draw,circle, inner sep=0pt] (p6) at (3.5,3.5) {$P_6$};
\node[draw,circle, inner sep=0pt] (p7) at (2.5,2.5) {$P_7$};
\node[draw,circle, inner sep=0pt] (p8) at (3.5,2.5) {$P_8$};
\node[draw,circle, inner sep=0pt] (p9) at (5,1.5) {$P_9$};

\draw[->] (p0)  -- node[above] {{\tiny $f_4$}}   (p1);
\draw[->] (p1)  -- node[above] {{\tiny $f_3$}}   (p2);

\draw[->] (p2)  -- node[above] {{\tiny $f_4$}}   (p3);
\draw[->] (p3)  -- node[below] {{\tiny $f_1$}}   (p4);
\draw[->] (p4)  -- node[below] {{\tiny $f_2$}}   (p5);

\draw[->] (p5)  -- node[below] {{\tiny $f_3$}}   (p6);
\draw[->] (p6)  -- node[below] {{\tiny $f_4$}}   (p7);
\draw[->] (p7)  -- node[below] {{\tiny $f_3$}}   (p8);
\draw[->] (p8)  -- node[above] {{\tiny $f_4$}}   (p9);





\end{tikzpicture}
    \caption{Example of the reduction $F_4$. $P_0$ has $k\le 2m-1$ jobs, $P_9 = F_4(P_0)$ has $l \le k-1$ jobs. Reduction steps corresponding to loops are omitted for simplicity.}
    \label{fig:f4_red}
\end{figure}

\begin{lemma}\label{lem:int_job}
    $S_3 \to_{F_4} S_4$.
\end{lemma}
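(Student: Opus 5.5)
Since $f_1$, $f_2$ and $f_3$ are already known to be gap-preserving (Lemmas \ref{lem:ms_1}, \ref{lem:merge_int} and \ref{lem:graph_conn}), and gap-preservation is closed under composition, the proof reduces to two claims. First, a single application of $f_4$ (one leaf-machine step, followed by deleting $j_{int}$) does not decrease the gap. Second, the iteration defining $F_4$ terminates inside $S_4$.

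For the single step, take $P \in S_3$, a leaf-machine $i$ carrying the (unique, by $S_2$) integer job $j_{int}$, and the unique fractional job $j_{frac}$ adjacent to $i$ in $G(P)$. Define $P'$ as in the construction. I would use Lemma \ref{lem:gap_change} Case 1 and show the following.

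\textbf{LP side.} We need $\lp(P') \le \lp(P)$. Take an optimal vertex $x^*$ of $\lp(P)$ and use it unchanged for $P'$. The load of every machine other than $i$ is unchanged. On $i$ the new load is
\[
p_{j_{frac},i}\,x^*_{j_{frac},i} + p_{j_{int},i} \cdot x^*_{j_{frac},i}
\le p_{j_{frac},i}\,x^*_{j_{frac},i} + p_{j_{int},i},
\]
since $x^*_{j_{frac},i} < 1$. So $x^*$ remains feasible with makespan at most $T^*$, giving $\lp(P') \le \lp(P)$.

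\textbf{IP side.} We need $\ip(P') \ge \ip(P)$. Given an integer schedule $\sigma'$ for $P'$, build one for $P$ by putting $j_{int}$ back on $i$. If $\sigma'(j_{frac}) = i$, machine $i$ has load $p_{j_{frac},i} + p_{j_{int},i}$ in both instances. Otherwise machine $i$ has load $p_{j_{int},i}$ in $P$. Because $i$ is a leaf-machine, in $P \in S_1$ no jobs other than $j_{int}$ and $j_{frac}$ can run on $i$. So the load of $i$ is at most $p_{j_{int},i}$, which is at most the makespan of $\sigma'$ whenever that makespan is at least $p_{j_{int},i}$.

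I expect the bound $\ip(P') \ge p_{j_{int},i}$ to be the main obstacle. I would try to derive it by showing that $j_{frac}$ must be placed somewhere in any schedule for $P'$. If $j_{frac}$ goes on $i$, the load there is already at least $p_{j_{int},i}$. If it goes on another machine $i'$, the bound has to come from elsewhere. The alternative I would fall back on is to compare against $\ip(P)$ directly: every schedule $\sigma$ for $P$ induces a schedule for $P'$ with the same loads except possibly on $i$. This yields the chain $\ip(P) \le \max\{\ip(P'), p_{j_{int},i}\}$, and $p_{j_{int},i} \le \lp(P)$. In the case $\ip(P) = p_{j_{int},i}$ we get $\ig(P) \le 1 \le \ig(P')$ trivially. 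So either the IP inequality holds, or the gap of $P$ is $1$; in both cases $\ig(P) \le \ig(P')$. Finally, deleting the zero-time job $j_{int}$ changes neither optimum.

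For termination, each application of $f_4$ removes one job. The maps $f_1$, $f_2$, $f_3$ never add jobs: they only set entries to $\infty$, merge jobs, or restrict to a component. Hence at most $n$ rounds of $f_3\circ f_2\circ f_1\circ f_4$ occur. The process stops only when no leaf-machine carries an integer job while properties $S_1$ through $S_3$ hold, i.e.\ in $S_4$. Composing the gap inequalities then gives $\ig(P) \le \ig(F_4(P))$.
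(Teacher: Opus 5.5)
Your proposal is correct and follows essentially the same route as the paper. On the LP side you reuse the optimal vertex $x^*$ and bound the load on the leaf-machine $i$. On the IP side you take an optimal schedule of $P'$ and split on whether $j_{frac}$ lands on $i$; in the bad case you get $\ip(P)\le p_{j_{int},i}\le \lp(P)$, hence $\ig(P)=1$, and termination follows from the strict decrease in the number of jobs. Your attempt to prove $\ip(P')\ge p_{j_{int},i}$ is unnecessary, and that bound need not hold anyway; the fallback you give is exactly the paper's argument.
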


\begin{proof}
    It is enough to prove the gap-preserving property of one rebalancing between a pair $j_{frac}$ and $j_{int}$ on a leaf-machine $i$. Let $P\in S_3 \setminus S_3$, let $P'$ be the instance after rebalancing $j_{frac}$ and $j_{int}$. First we prove $\lp(P') \le \lp(P)$. For a simpler analysis, we consider $j_{int}$ being present in $P'$ as well, even though we dispose of it later due to $p'_{j_{int},i}=0$. Let $x^*$ be an optimal vertex of $\lp(P)$. Then $x^*$ is also feasible for $\lp(P')$, and its makespan is at most as large as for $P$: on machine $i$, the completion time changes from $x^*_{j_{frac},i}\cdot p_{j_{frac},i} + x^*_{j_{int},i}\cdot p_{j_{int},i}=x^*_{j_{frac},i}\cdot p_{j_{frac},i} + p_{j_{int},i}$ to $x^*_{j_{frac},i}\cdot p'_{j_{frac},i} + x^*_{j_{int},i}\cdot p'_{j_{int},i}= x^*_{j_{frac},i}\cdot (p_{j_{int},i} + p_{j_{frac},i})$, and the latter is at most the former as $x^*_{j_{frac},i} \le 1$. The completion times of other machines do not change.

    Now we prove $\ip(P)\le \ip(P')$. Consider an optimal assignment $\sigma$ of $P'$. We can assume $\sigma(j_{int})=i$ without loss of generality, given that $p'_{j_{int},i}=0$. We distinguish between two cases. In the first case, $\sigma(j_{frac})=i$. Then $C_{\sigma, P'}(i)=C_{\sigma,P}(i)$ since $p'_{j_{frac},i}+ p'_{j_{int},i}= p_{j_{frac},i} + p_{j_{int},i}$, and the rest of the machine completion times are the same by default. Hence, $\ip(P)\le \ip(P')$. In the second case, $\sigma(j_{frac})=i' \ne i$. Then $C_{\sigma, P'}(i)=0$ and $C_{\sigma, P}(i)=p_{j_{int},i}$; the rest of the completion times are the same in $P$ and in $P'$. Now either $p_{j_{int},i} < C_{max}(\sigma, P')$ and so $\ip(P)\le \ip(P')$, or $p_{j_{int},i} \ge C_{max}(\sigma, P')$. But in the latter case, $p_{j,i}$ must be equal to the makespan of $\sigma$, so $\ip(P)\le C_{max}(\sigma, P)= p_{j_{int},i}\le \lp(P)$ and $\ig(P)=1$. In any case, $\ig(P) \le \ig(P')$ holds.
\end{proof}

The next reduction makes the processing times of fractional jobs on leaf-machines equal to the integer optimum, thus making the instance more uniform. Let $S_5 \subseteq S_4$ be the collection of instances $P \in S_4$ such that any fractional job $j$ on any leaf-machine $i$ s.t. $p_{j,i}\ne \infty$ satisfies $p_{j,i} = \ip(P)$. For $P \in S_5$, let $f_5(P)=P$. Else, if $j$ is fractional on $i$ in $P$, and $p_{j,i} > \ip(P)$, let us define $P'$ as $p'_{j,i}= \ip(P)$ and $P'=P$ else. If instead $p_{j,i} < \ip(P)$ holds, let $p'_{j,i}=0$ and $P'=P$ otherwise. In the latter case, we can dispose of job $j$ entirely without affecting the integer or fractional optima of $P'$, and since $i$ was a leaf-machine, we can delete it as well since $p'_{j',i}=\infty$ holds for all $j' \ne j$. Repeat the above process for each leaf-machine. Let $f_5(P)$ be the final instance, and let $F_5(P)$ be the instance we get by repeatedly composing $F_4 \circ f_3 \circ f_2 \circ f_1 \circ f_5$ with itself until the final instance satisfies all the properties of $S_1, \ldots, S_5$. It happens after finitely many repetations, as each application of $f_5$ either strictly decreases the number of jobs (when $p'_{j,i}=0$), or keeps all the previous properties (when $p'_{j,i}=\ip(P)$).

\begin{lemma}\label{lem:ms_5}
    $S_4 \to _{F_5} S_5$.
\end{lemma}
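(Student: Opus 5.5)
As in the proofs of Lemmas \ref{lem:ms_1} and \ref{lem:int_job}, it suffices to show that a single modification of $f_5$ on one pair $(j,i)$, with $i$ a leaf-machine and $j$ the unique fractional job on $i$, does not decrease the gap. The compositions with $F_4, f_3, f_2, f_1$ are already known to be gap-preserving, and termination was argued in the definition of $F_5$. By the property of $S_4$, $i$ carries no integer job, so its completion time in any schedule is either $0$ or $p_{j,i}$. The argument splits into the two cases of the construction.

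\textbf{Case $p_{j,i} > \ip(P)$.} In an optimal integer assignment $\sigma$ of $P$, job $j$ is not on $i$, since otherwise the makespan would exceed $\ip(P)$. Hence $\sigma$ has the same makespan for $P'$, and since all other entries are unchanged, $\ip(P') \le \ip(P)$. Conversely, any schedule of $P'$ placing $j$ on $i$ has makespan at least $p'_{j,i} = \ip(P)$, and any schedule not doing so has the same cost in $P$. Therefore $\ip(P') = \ip(P)$.

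For the fractional side, $P' \le P$ entrywise, so $\lp(P') \le \lp(P)$. This holds because decreasing processing times keeps every feasible point $(x,T)$ feasible. Lemma \ref{lem:gap_change} Case 1 then gives $\ig(P) \le \ig(P')$.

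\textbf{Case $p_{j,i} < \ip(P)$.} Setting $p'_{j,i} = 0$ again decreases processing times, so $\lp(P') \le \lp(P)$. Once $p'_{j,i} = 0$, job $j$ can be placed on $i$ for free, and $i$ hosts no other job because it is a leaf with no integer job. Deleting $j$ and $i$ therefore yields an instance $P''$ with $\ip(P'') = \ip(P')$ and $\lp(P'') = \lp(P')$.

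The main step, which I expect to be the real obstacle, is showing $\ip(P') \ge \ip(P)$. Take an optimal schedule $\sigma'$ of $P'$, and assume without loss of generality that $\sigma'(j) = i$. Read the same schedule in $P$: only machine $i$ changes, and its completion time becomes $p_{j,i} < \ip(P)$. If $C_{\max}(\sigma',P') \ge p_{j,i}$, then $\ip(P) \le C_{\max}(\sigma',P) = C_{\max}(\sigma',P') = \ip(P')$, and we are done.

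Otherwise, $C_{\max}(\sigma',P) = p_{j,i} < \ip(P)$, which contradicts the optimality of $\ip(P)$. So this subcase is impossible, and $\ip(P) \le \ip(P')$ holds in all cases.

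Combining $\ip(P) \le \ip(P')$ with $\lp(P') \le \lp(P)$, Lemma \ref{lem:gap_change} Case 1 gives $\ig(P) \le \ig(P') = \ig(P'')$.

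It remains to check that when the modification only sets $p'_{j,i} = \ip(P)$, the quantity $\ip$ is unchanged, as shown in the first case. Hence later leaf-machines are compared against the same value $\ip(P)$, and the final instance indeed satisfies $p_{j,i} = \ip(P)$ on every leaf-machine. Any loss of the properties of $S_1,\dots,S_4$ caused by changing the vertex structure is repaired by the recomposition built into $F_5$, each step of which is gap-preserving by Lemmas \ref{lem:ms_1}, \ref{lem:merge_int}, \ref{lem:graph_conn} and \ref{lem:int_job}.
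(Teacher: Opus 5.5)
Your proof is correct and follows essentially the same route as the paper: the same two-case split on $p_{j,i}$ versus $\ip(P)$. In each case you show $\ip(P')=\ip(P)$ and $\lp(P')\le\lp(P)$ and invoke Lemma \ref{lem:gap_change} Case 1, and the case $p_{j,i}<\ip(P)$ uses the same observation as the paper, namely that placing $j$ on $i$ (free in $P'$) cannot change the makespan because $p_{j,i}<\ip(P)$; your remark that $\ip$ is unchanged by each step, so every leaf-machine is compared against the same target value, makes explicit something the paper leaves implicit.
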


\begin{proof}
    Let $P \in S_4 \setminus S_5$, and assume $j$ is a fractional job on a leaf-machine $i$, and $p_{j,i} > \ip(P)$. The inequality $\lp(P')\le \lp(P)$ clearly holds, since we only decreased processing times. On the other hand, let $\sigma$ be an arbitrary integer assignment for $P'$. If $\sigma(j)=i$, then $C_{\sigma, P'}(i)=\ip(P)$ and $C_{\sigma, P'}(i')=C_{\sigma, P}(i')$ for all $i'\ne i$. If $\sigma(j)\ne i$, then $C_{\sigma, P'}(i')=C_{\sigma, P}(i')$ for all $i' \in [m]$, since $i$ is a leaf-machine and no job other than $j$ can be assigned there. It follows that $\ip(P')=\ip(P)$, and by by Lemma \ref{lem:gap_change} Case $1$, we are ready.

    Now assume $p_{j,i}< \ip(P)$. The inequality $\lp(P')\le \lp(P)$ clearly holds, as we only reduced processing times. Let $\sigma$ be an integer assignment such that $\sigma(j)=i'\ne i$, and let $\sigma'$ be the assignment in which $\sigma'(j)=i$ and $\sigma'=\sigma$ else. Since $P$ and $P'$ only differ in the $(j,i)$ coordinates, it holds that $C_{max}(\sigma, P')=C_{max}(\sigma, P)$. We also know $C_{max}(\sigma', P)\le C_{max}(\sigma, P)=C_{max}(\sigma, P')$, because moving $j$ from $i'$ to $i$ decreases the completion time of machine $i'$, does not change the completion time apart from machines $i'$ and $i$, and the completion time of $i$ becomes $p_{j,i}< \ip(P)\le C_{max}(\sigma', P)$. Consequently, $C_{max}(\sigma', P')=C_{max}(\sigma',P)$, because setting $p'_{j,i}=0$ does not affect the longest completion time in $\sigma'$. It follows that $\ip(P')=\ip(P)$, and by Lemma \ref{lem:gap_change} Case $1$, we are ready.
\end{proof}

With this, we are ready to start trimming away special nodes of $G(P)$. We call job $j$ a \emph{leaf-job} if it is connected to exactly one non-leaf-machine in $G(P)$. Equivalently, if we remove every leaf-machine from $G(P)$, the leaves of the remaining graph correspond to the leaf-jobs. Note that as long as there are more than $1$ fractional jobs, there always exists at least one leaf-job as well. Consider $G(P)$ as a rooted tree with an arbitrary root $r$, and let $v_d$ be a job-node with maximal depth. Note that $v_d$ is necessarily a leaf-job. Let $u$ be the machine corresponding to the parent of $v_d$. We distinguish between two cases. If there exists a sibling of $v_d$ (a child of $u$ different from $v_d$), it corresponds to another leaf-job $w$. If $v_d$ has no siblings, then $w$ will denote the grandparent of $v_d$ (the parent of $u$). Provided that there are at least two fractional jobs in $P$, one of the two above cases always holds. We sketch an example in Figure \ref{fig:f5_red}.

\begin{figure}
    \centering
    \begin{tikzpicture}[scale=1, every node/.style={circle, draw},minimum size=4mm, inner sep=0pt]

 \node (r) at (0,0) {};

\node[fill=gray!50] (r1) at (0,-1) {};

\node (r2) at (-0.5,-2) {};
\node (r3) at (0.5,-2) {};
\node (u) at (0,-2) {{\tiny $u$}};

\node[fill=gray!50] (vd) at (-0.5, -3) {{\tiny $v_d$}};
\node[fill=gray!50] (w) at (0.5, -3) {{\tiny $w$}};

\node (r4) at (-1,-4) {};
\node (r5) at (-0.5,-4) {};
\node (r6) at (0,-4) {};

\node (r7) at (0.5,-4) {};
\node (r8) at (1,-4) {};

\draw (r) -- (r1);
\draw (r1) -- (r2);
\draw (r1) -- (r3);
\draw (r1) -- (u);
\draw (vd) -- (u);
\draw (w) -- (u);
\draw (vd) -- (r4);
\draw (vd) -- (r5);
\draw (vd) -- (r6);

\draw (w) -- (r7);
\draw (w) -- (r8);

\end{tikzpicture}
    \begin{tikzpicture}[scale=1, every node/.style={circle, draw},minimum size=4mm, inner sep=0pt]

 \node (r) at (0,0) {};

\node[fill=gray!50] (w) at (0,-1) {{\tiny $w$}};

\node (r2) at (-0.5,-2) {};
\node (r3) at (0.5,-2) {};
\node (u) at (0,-2) {{\tiny $u$}};

\node[fill=gray!50] (vd) at (-0.5, -3) {{\tiny $v_d$}};

\node (r4) at (-1,-4) {};
\node (r5) at (-0.5,-4) {};
\node (r6) at (0,-4) {};

\draw (r) -- (r1);
\draw (r1) -- (r2);
\draw (r1) -- (r3);
\draw (r1) -- (u);
\draw (vd) -- (u);
\draw (vd) -- (r4);
\draw (vd) -- (r5);
\draw (vd) -- (r6);

\end{tikzpicture}
    \caption{Example of the structure of $G(P)$ rooted at an arbitrary node. $v_d$ is a job node with maximal depth, $w$ is its sibling (if one exists) or its grandparent. Shaded nodes are job-nodes, plain nodes are machine-nodes.}
    \label{fig:f5_red}
\end{figure}

Akin to Lemma \ref{lem:int_job}, we can apply a reduction to get rid of an integer job on machine $u$. Assume job $j$ is integer on $u$. Since $v_d$ is a neighbour of $u$ in $G(P)$, $v_d$ is fractional on $u$. Let $p'_{j,u}:=0, \,\, p'_{v_d, u}:=p_{j,u}+p_{v_d, u}$ and $P'=P$ otherwise. Job $j$ is deleted from $P'$ without changing its integer or fractional optima. Let $S_6 \subseteq S_5$ be the set of instances $P \in S_5$ such that either there is only one fractional job in $P$, or machine $u$ has no integer jobs. Let $f_6(P)=P$ for $P \in S_6$, and $f_6(P)=P'$ for the above-defined $P'$ otherwise. We compose $F_5 \circ F_4 \circ f_3 \circ f_2 \circ f_1 \circ f_6$ with itself as many times as it takes to reach an instance in $S_6$. This will happen in finitely many steps, as each repetition of the chain reduces the number of jobs by at least one. We denote the final instance by $F_6(P)$.

\begin{lemma}\label{lem:ms_6}
    $S_5 \to _{F_6} S_6$.
\end{lemma}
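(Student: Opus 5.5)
The plan follows the proof of Lemma~\ref{lem:int_job} closely, because the modification $f_6$ is the same kind of rebalancing. There an integer job on a leaf-machine was folded into the unique fractional job on that machine; here an integer job $j$ on $u$ is folded into the fractional job $v_d$ on $u$. The composed map $F_6$ is a finite composition of $f_6$ with $F_5, F_4, f_3, f_2, f_1$, and those are already known to be gap-preserving by Lemmas~\ref{lem:ms_1}--\ref{lem:ms_5}. Termination also holds, since every application of $f_6$ strictly decreases the number of jobs. So it suffices to show that one application $P \mapsto P'$ with $p'_{j,u}=0$ and $p'_{v_d,u}=p_{j,u}+p_{v_d,u}$ satisfies $\ig(P)\le\ig(P')$. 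Once that is done, the chain lands in $S_6$ by construction.

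\textbf{Step 1: $\lp(P')\le\lp(P)$.} Take an optimal vertex $x^*$ of $\lp(P)$ and keep $j$ in $P'$ on $u$ at cost $0$, so $x^*$ stays feasible. The load of $u$ changes from
$$x^*_{v_d,u}p_{v_d,u}+p_{j,u}$$
to
$$x^*_{v_d,u}(p_{v_d,u}+p_{j,u}).$$
Since $x^*_{v_d,u}\le 1$, the new load is at most the old one. All other machine loads are unchanged, so the makespan does not increase. Afterwards $j$ can be deleted without affecting either optimum.

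\textbf{Step 2: $\ip(P)\le\ip(P')$, or else $\ig(P)=1$.} Take an optimal assignment $\sigma$ for $P'$, and assume without loss of generality $\sigma(j)=u$.
\begin{itemize}
\item If $\sigma(v_d)=u$, the load of $u$ under $P$ and under $P'$ is the same, namely $p_{j,u}+p_{v_d,u}$. All other loads agree, so $C_{max}(\sigma,P)=C_{max}(\sigma,P')$.
\item If $\sigma(v_d)\neq u$, the only difference is that $u$ carries the extra load $p_{j,u}$ under $P$. Either $p_{j,u}\le C_{max}(\sigma,P')$, and the makespan is unchanged, or $p_{j,u}$ is the makespan of $\sigma$ under $P$.
\end{itemize}
In the latter case, the integer job $j$ is assigned to $u$ in every LP solution, so
$$\ip(P)\le p_{j,u}\le \lp(P),$$
which gives $\ig(P)=1\le\ig(P')$. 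In both cases Lemma~\ref{lem:gap_change} Case~1 finishes the single step.

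\textbf{Main obstacle: repairing the earlier properties.} The argument above is purely local. However, $P'$ may leave $S_1,\dots,S_5$, because optimal vertices of $\lp(P')$ can change globally and $G(P')$ may lose the properties established by $f_3$, $F_4$ and $F_5$. So I must check that the re-applications inside $F_6$ do not undo progress and that the loop terminates. I would argue this with the job-count potential: $f_6$ removes $j$, and none of $f_1,f_2,f_3,F_4,F_5$ ever increases the number of jobs. Hence after at most $n$ rounds no integer job remains on the parent $u$ of a deepest leaf-job, or only one fractional job remains.

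A subtle point is that $u$ and $v_d$ are redefined after each round relative to the new tree. This is harmless, since the defining property of $S_6$ is checked on the final instance.
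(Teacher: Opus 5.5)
Your Step~1, the first bullet of Step~2, and the termination bookkeeping are fine and match the paper. The gap is in the second bullet of Step~2, where you transplant the leaf-machine argument of Lemma~\ref{lem:int_job} to $u$.

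That argument worked because a leaf-machine can host only $j_{int}$ and $j_{frac}$. Once $j_{frac}$ is moved away, the load there is exactly $p_{j_{int},i}$. But $u$ is \emph{not} a leaf-machine: besides $j$ and $v_d$ it can host the sibling $w$ or the job above $u$ in the tree. With $\sigma(v_d)\neq u$ you have $C_{\sigma,P}(u)=p_{j,u}+C_{\sigma,P'}(u)$, and $C_{\sigma,P'}(u)$ may be positive. This has two consequences:
\begin{enumerate}
\item $p_{j,u}\le C_{\max}(\sigma,P')$ does not imply that the makespan is unchanged.
\item If the makespan does increase, it equals $p_{j,u}+C_{\sigma,P'}(u)$, not $p_{j,u}$, and this is not bounded by $\lp(P)$.
\end{enumerate}
So the dichotomy you state is false. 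The same omission of other jobs on $u$ is harmless in Step~1, because the extra terms appear identically on both sides there. It is fatal here.

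The missing ingredient is the hypothesis $P\in S_5$, which your proof never uses. Since $v_d$ is a leaf-job, its only non-leaf neighbour is $u$. So $\sigma(v_d)=i\neq u$ forces $i$ to be a leaf-machine, and by the property of $S_5$, $p_{v_d,i}=\ip(P)$. Since $P'$ agrees with $P$ off machine $u$, you get $\ip(P')=C_{\max}(\sigma,P')\ge C_{\sigma,P'}(i)\ge p'_{v_d,i}=p_{v_d,i}=\ip(P)$. This closes the case without comparing loads on $u$ at all. It is exactly why $F_5$ is applied before $f_6$ in the chain.
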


\begin{proof}
    It is enough to prove the gap-preserving property of $f_6$. Let $P\in S_5 \setminus S_6$, and let $P'=f_6(P)$. The inequality $\lp(P')\le \lp(P)$ is proven exactly as in Lemma \ref{lem:int_job}. The other part can be handled similarly as well. Let $\sigma$ be an optimal integer assignment of $P'$. We can assume $\sigma(j)=u$ without loss of generality, given that $p'_{j,u}=0$. We distinguish between two cases. In the first case, $\sigma(v_d)=u$. Then $C_{\sigma, P'}(u)=C_{\sigma, P}(u)$ as $p'_{j,u}+ p'_{v_d, u}=p_{j,u}+p_{v_d, u}$. Furthermore, $C_{\sigma, P'}(i)=C_{\sigma, P}(i)$ holds for all $i\ne u$, as $P$ equals $P'$ aside machine $u$. It follows that $\ip(P)\le \ip(P')$. In the second case, $\sigma(v_d)=i\ne u$. Since $v_d$ is leaf-job, $i$ must be a leaf-machine. We know $\ip(P)=p_{v_d, i}$ from the property of $S_5$, and $p_{v_d, i} = p'_{v_d, i} = C_{\sigma, P'}(i) \le C_{max}(\sigma, P')=\ip(P')$, so $\ip(P) \le \ip(P')$. By Lemma \ref{lem:gap_change} Case $1$, we are done.
\end{proof}

Our final reduction step eliminates one of $v_d$ and $u$, depending on whether $x^*_{v_d,u}$ or $x^*_{w,u}$ is larger in $x^* \in X^*_P$ for $P \in S_6$. In either case, we repeat the same rebalancing of $p_{v_d,u}$ and $p_{w,u}$ from before: if $x^*_{v_d,u} \ge x^*_{w,u}$, then $p'_{v_d,u}:=0, \,\, p'_{w,u}:= p_{v_d,u}+p_{w,u}$ and $P':=P$ else; if instead $x^*_{v_d,u} < x^*_{w,u}$, then $p'_{w,u}:=0, \,\, p'_{v_d,u}:= p_{v_d,u}+p_{w,u}$ and $P':=P$ else. Dispose of the job whose processing time was set to $0$ on $u$. Let $f_7(P)$ be this final instance, and let $f_7(P)=P$ if $P$ has only one fractional job. Repeat the entire chain $F_6 \circ F_5 \circ F_4 \circ f_3 \circ f_2 \circ f_1 \circ f_7$ as many times as it takes to arrive in $S_7= \{P \in S_6: P \text{ has one fractional job}\}$; let $F_7(P)$ denote this final instance.

\begin{lemma}\label{lem:ms_7}
    $S_6 \to_{F_7} S_7$, and $\ig(S_7)=M$.
\end{lemma}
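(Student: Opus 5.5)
The lemma has two parts, and I would prove them separately. First, $f_7$ (and hence $F_7$) is gap-preserving. Second, instances with a single fractional job have gap at most $M$, with equality attained.

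For the first part it suffices, as in Lemmas \ref{lem:int_job} and \ref{lem:ms_6}, to treat one application of $f_7$. The rest of $F_7$ consists of already-established reductions, and termination holds because every application of $f_7$ deletes one job. By symmetry, assume $x^*_{v_d,u}\ge x^*_{w,u}$, so we set $p'_{v_d,u}=0$ and $p'_{w,u}=p_{v_d,u}+p_{w,u}$.

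For $\lp(P')\le\lp(P)$ I would not reuse $x^*$ directly. Instead I would swap the fractional amounts on $u$: put $\hat x_{w,u}:=x^*_{v_d,u}$, move $v_d$ fully to $u$, and compensate on the other machines. Since $u$ now carries only the single job $w$ with weight at most $x^*_{v_d,u}\cdot(\ldots)$, this is the delicate point. My first attempt would be the following. On $u$, the new load is $x^*_{w,u}(p_{v_d,u}+p_{w,u})\le x^*_{v_d,u}p_{v_d,u}+x^*_{w,u}p_{w,u}$. This inequality uses exactly $x^*_{w,u}\le x^*_{v_d,u}$. The job $v_d$ keeps its remaining fractions on its leaf-machines, and $v_d$ has zero cost on $u$, so $v_d$'s leftover fraction $x^*_{v_d,u}$ can be placed on $u$ for free. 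The makespan of every machine is therefore unchanged or smaller, which gives $\lp(P')\le\lp(P)$.

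For $\ip(P)\le\ip(P')$ I would take an optimal $\sigma$ for $P'$ and assume without loss of generality that $\sigma(v_d)=u$. If $\sigma(w)=u$, the load on $u$ coincides in $P$ and $P'$. If $\sigma(w)\ne u$, I would reassign $v_d$ in $P$. When $u$ has no integer job (property $S_6$), placing $v_d$ on $u$ costs $p_{v_d,u}\le p'_{w,u}$. The remaining worry is whether $p_{v_d,u}$ exceeds $\ip(P')$. Here I would fall back on $S_5$: $v_d$ can also go to a leaf-machine at cost exactly $\ip(P)$, and the same argument as in Lemma \ref{lem:ms_6} yields $\ip(P)\le\ip(P')$. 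The other case, $x^*_{v_d,u}<x^*_{w,u}$, is symmetric, with the roles of $v_d$ and $w$ exchanged. When $w$ is the grandparent, it is not a leaf-job, so I would use that $v_d$ stays on $u$ and exploit the leaf structure of $v_d$ only. I expect this asymmetric case to be the main obstacle.

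For $\ig(S_7)=M$, take $P\in S_7$. By the properties of $S_1$ through $S_5$, $G(P)$ is a star with the unique fractional job $j$ at the center, and every machine is a leaf-machine without integer jobs. Indeed, if any machine had an integer job, it would be a leaf with an integer job, contradicting $S_4$. Hence $P$ consists of the single job $j$ with $p_{j,i}=\ip(P)$ on all $k\le M$ machines, by $S_5$. Then $\lp(P)=\ip(P)/k$, since the optimum splits $j$ evenly, so $\ig(P)=k\le M$. The instance with one job of unit time on $M$ machines shows equality.
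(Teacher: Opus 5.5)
Your LP inequality (keep $x^*$ and use $x^*_{w,u}\le x^*_{v_d,u}$ on $u$) and your computation of $\ig(S_7)=M$ match the paper and are fine; the swap detour is unnecessary. The gap is in $\ip(P)\le\ip(P')$. The paper's argument rests on an asymmetry you do not use. The job whose time on $u$ is set to $0$ may be assumed to sit on $u$ in an optimal $\sigma$ for $P'$. The \emph{other} job must then be a leaf-job: if it is not on $u$, it lies on one of its leaf-machines, where its processing time is $\ip(P)$ by $S_5$, so $\ip(P')\ge\ip(P)$ at once. In Lemma~\ref{lem:ms_6} this works because the non-zeroed job there, $v_d$, is a leaf-job.

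Your substitute does not do this. You reassign $v_d$, which you have already placed on $u$, to a leaf-machine. That only yields a schedule of $P$ with makespan $\max\{C_{max}(\sigma,P'),\ip(P)\}$, which bounds nothing. The comparison $p_{v_d,u}\le p'_{w,u}$ is also irrelevant once $w\notin\sigma^{-1}(u)$. Even in the sibling case you must argue with $w$, not $v_d$.

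For the same reason, ``by symmetry'' is not available in general:
\begin{itemize}
\item When $w$ is a sibling, it is legitimate: both jobs are maximum-depth leaf-jobs, so you may relabel.
\item When $w$ is the grandparent and $x^*_{v_d,u}<x^*_{w,u}$, $w$ is zeroed and $v_d$ is the leaf-job, so the argument goes through.
\item When $w$ is the grandparent and $x^*_{v_d,u}\ge x^*_{w,u}$, $v_d$ is zeroed and the remaining job $w$ is not a leaf-job. This is the case you leave open.
\end{itemize}

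That case cannot be closed, because there $f_7$ is not gap-preserving. Take jobs $a,b,c$ and machines $\ell,u,u',\ell'$ with finite times
\[
p_{a,\ell}=p_{a,u}=10,\quad p_{b,u}=1,\quad p_{b,u'}=p_{c,u'}=4,\quad p_{c,\ell'}=10.
\]
\begin{itemize}
\item Integer optimum: job $a$ costs $10$ on either machine, and $a\to\ell$, $b,c\to u'$ has makespan $10$. So $\ip(P)=10$.
\item LP optimum: adding the machine constraints of $\ell,u,u',\ell'$ with multipliers $4,4,1,\frac{2}{5}$ gives $48\le\frac{47}{5}T$. Hence $\lp(P)=\frac{240}{47}$, attained uniquely (all machines tight) by $x_{a,u}=x_{c,u'}=\frac{23}{47}$ and $x_{b,u}=\frac{10}{47}$. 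So $\ig(P)=\frac{47}{24}$.
\item Membership: every finite entry is used, $G(P)$ is the path $\ell,a,u,b,u',c,\ell'$, there are no integer jobs, and both leaf times equal $\ip(P)$. So $P\in S_6$.
\item Applying $f_7$: rooting at $\ell'$ gives $v_d=a$ with no sibling, $w=b$, and $x^*_{a,u}\ge x^*_{b,u}$. So $f_7$ sets $p'_{a,u}=0$ and $p'_{b,u}=11$.
\item After the step: $b,c\to u'$ gives $\ip(P')=8$. Multipliers $\frac{4}{11},1,\frac{2}{5}$ on $u,u',\ell'$ give $\lp(P')=\frac{440}{97}$. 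So $\ig(P')=\frac{97}{55}<\frac{47}{24}=\ig(P)$.
\end{itemize}

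The same instance also refutes the paper's own treatment of this case. The paper concludes $\ig(P)=1$ whenever $p_{v_d,u}>C_{max}(\sigma,P')$; here $10>8$, yet $\ig(P)=\frac{47}{24}$. So you have found the real weak spot, but neither your sketch nor the paper's proof supplies a valid step there. A different rebalancing, or an argument that avoids this configuration, is required.
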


\begin{proof}
    We first prove $\ig(S_7)=M$. Any $P\in S_7$ has just one fractional job $j$. At the same time, $G(P)$ is a tree by the property of $S_3$, so $j$ is fractional on all of the machines, and all machines are leaf-machines. By $S_4$, none of them have integer jobs, $P$ consists of only one job $j$ and some machines $m\le M$. By $S_5$, we know $p_{j,i}=\ip(P)$ for all $i \in [m]$. For such an instance, $\lp(P)=p_{j,1}/m=\ip(P)/m$, so $\ig(P)=m$. The statement follows.
    
    To prove the first statement, it is enough to assert the gap-preserving property of $f_7$. Let $P\in F_6 \setminus F_7$ and $P'=f_7(P)$, and let $x^*$ be an optimal vertex of $\lp(P)$. We distinguish cases according to the relationship between $v_d$ and $w$. First, assume that $v_d$ and $w$ are siblings. Then $w$ is also a leaf-job at maximum depth, so we may assume $x^*_{v_d, u}\ge x^*_{w,u}$ by exchanging $v_d$ and $w$ if necessary. The inequality $\lp(P')\le \lp(P)$ is proven by the fact that $x^*$ has a makespan at most as large for $P'$ as for $P$: on machine $u$, the completion time is $x^*_{v_d, u} \cdot p_{v_d,u} + x^*_{w, u} \cdot p_{w,u}$ for $P$, and $x^*_{w,u}\cdot (p_{v_d,u} + p_{w,u})$ for $P'$. The latter is at most as large as the former, due to the assumption $x^*_{d_v,u}\ge x^*_{w,u}$. The rest of the completion times are the same for $P$ and $P'$. Let $\sigma$ be an optimal assignment for $P'$. We can assume $\sigma(v_d)=u$ without loss of generality, since $p'_{v_d,u}=0$. If $\sigma(w)=u$, then $C_{\sigma, P'}(i)=C_{\sigma, P}$ holds for all $i \in [m]$, implying $\ip(P)\le \ip(P')$. Else, if $\sigma(w)=i\ne u$, then we know $\ip(P')=C_{max}(\sigma, P') \ge C_{\sigma, P'}(i)=p_{w,i} = \ip(P)$, since $w$ is a leaf-job and $i \ne u$ is a leaf-machine, so the property of $S_5$ applies. By Lemma \ref{lem:gap_change}, we are done.

    Note that $v_d$ played no role in the former reasoning, implying that the case when $w$ is the grandparent of $v_d$ and $x^*_{v_d,u} \le x^*_{w,u}$ can be proven the exact same way. The only remaining case is when $w$ is the grandparent of $v_d$ and $x^*_{v_d,u} \ge x^*_{w,u}$. The inequality $\lp(P')\le \lp(P)$ is proven the same way as before. Let $\sigma$ be an optimal integer solution for $P'$. We can assume without loss of generality that $\sigma(v_d)=u$, given that $p'_{v_d,u}=0$. If $\sigma(w)=u$, then $C_{\sigma, P}(i)=C_{\sigma, P'}(i)$ holds for all $i \in [m]$, implying $\ip(P)\le \ip(P')$. Else, $C_{\sigma, P'}(u)=0$ holds by $p'_{v_d, u}=0$, and $C_{\sigma, P'}(i)=0$ for all such $i$ that are leaf-machines and can only accommodate job $v_d$. For all other machines, we have $C_{\sigma, P'}(i)=C_{\sigma, P}(i)$. If $C_{\sigma, P}(u)=p_{v_d,u}> C_{max}(\sigma, P')$, then $C_{max}(\sigma, P)=p_{v_d, u}$ and $\ig(P)=1$. Else, if $C_{\sigma, P}(u)\le C_{max}(\sigma, P')$, then $C_{max}(\sigma, P) \le C_{max}(\sigma, P')$ and $\ip(P)\le \ip(P')$. The desired statement $\ig(P)\le \ig(P')$ follows from Lemma \ref{lem:gap_change}.
\end{proof}

The next theorem follows from the fact that $S_7$ contains instances of $RM||C_{max}$.

\thmms*

\section{The integrality gap of the restricted assignment problem via the configuration LP}\label{app:config}


\subsection{The two-machine case}\label{app:clp2}

Let $S_0$ denote the input universe for $P2|\cM_j|C_{max}$, that is, the union of all $n\times 2$ matrices taken over $n \in \enne$ in which two elements of a row are either the same positive integer, or one of them is $\infty$. Our first reductions eliminate the \emph{machine-exclusive} jobs: we say that job $j$ is exclusive relative to machine $i$ if $j$ can only be assigned to $i$. Let $S_1 \subset S_0$ be composed of all instances where both machines have at most one exclusive job. In other words: $S_1$ is the set of matrices of $S_0$ in which each column has at most one $\infty$-entry. We define $f_1: S_0 \to S_1$ as $f_1(P)=P$ for all $P \in S_1$. Else, if column $2$ of $P$ has more than one $\infty$-entry, we can assume by relabelling them that these belong to jobs $j_1, \ldots j_k$. Let us define the instance $P'$ by merging jobs $j_1, \ldots, j_k$ together to a new job $j_{new}$, where $p_{j_{new}} = p_{j_1} + \ldots +p_{j_k}$ and $j_{new}$ is $1$-exclusive as well. Apply the same modification to column $1$ if necessary, and let $f_1(P)=P'$.

\begin{lemma}\label{lem:exclusive}
    $S_0 \to_{f_1} S_1$.
\end{lemma}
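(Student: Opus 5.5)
Merging several machine-exclusive jobs into one does not change the integer optimum and can only lower the configuration LP optimum. So by Lemma \ref{lem:gap_change} Case 1, $\ig(P)\le\ig(P')$. It is enough to handle a single merge, say of the jobs $j_1,\ldots,j_k$ that are exclusive to machine $1$; the merge for machine $2$ is symmetric and the two operations commute. It is also immediate that $P'\in S_1$: after both merges, each column of $P'$ has at most one $\infty$-entry. If a merge produces $p_{j_{new}}$ larger than anything else, that is harmless, since the argument below does not care.

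\emph{Integer side.} Every feasible assignment $\sigma$ for $P$ must place $j_1,\ldots,j_k$ on machine $1$. Replacing them by $j_{new}$ on machine $1$ gives an assignment for $P'$ with identical completion times, and conversely every assignment for $P'$ splits back into one for $P$. Hence $\ip(P)=\ip(P')$.

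\emph{Fractional side.} Let $T=\lp(P)$ and let $x$ be feasible for $\clp(P,T)$. By constraint \eqref{ass_const}, each $j_l$ is covered with total weight $1$ by configurations of machine $1$ only. Since the $x_{1,C}$ sum to $1$ by \eqref{machine_const}, every configuration $C$ with $x_{1,C}>0$ contains all of $j_1,\ldots,j_k$. Indeed, $\sum_{C\ni j_l} x_{1,C}=1=\sum_C x_{1,C}$, so no positive-weight configuration of machine $1$ can miss any $j_l$.

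Replace each such $C$ by $C'=(C\setminus\{j_1,\ldots,j_k\})\cup\{j_{new}\}$. It has the same load on machine $1$, so $C'\in\cC_1(P',T)$. Put $x'_{1,C'}:=x_{1,C}$, summing weights if several $C$ map to the same $C'$. Configurations of machine $2$ contain no $j_l$, so they are kept as they are. Constraint \eqref{machine_const} is clearly preserved, and each remaining job is covered exactly as before. The new job $j_{new}$ is covered with weight $\sum_C x_{1,C}=1$. Therefore $\clp(P',T)$ is feasible and $\lp(P')\le\lp(P)$.

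The only step needing care is the observation that every positive-weight configuration on machine $1$ contains all exclusive jobs simultaneously. Without it the merged job could not be placed consistently, so this is where the argument really rests.
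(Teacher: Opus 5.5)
Your proof is correct and follows essentially the same route as the paper. Both arguments rest on the same observation: every positive-weight configuration of machine $1$ contains all the exclusive jobs. This makes the substitution $C\mapsto (C\setminus\{j_1,\ldots,j_k\})\cup\{j_{new}\}$ give $\lp(P')\le\lp(P)$, and the bijection of integer assignments gives $\ip(P')=\ip(P)$. The paper cites Case 2 of Lemma \ref{lem:gap_change} (with $\delta=0$), while you cite Case 1; either one suffices here.
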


\begin{proof}
    Let $P \in S_0 \setminus S_1$, and let $x$ be a feasible solution for $\clp(P,T)$ for some $T$. Let jobs $j_1,\ldots, j_k$ be all the $1$-exclusive jobs in $P$. These can only appear in $x$ in configurations of machine $1$, hence \eqref{machine_const} for machine $1$ and \eqref{ass_const} for $j_1,\ldots, j_k$ together imply that $\{j_1, \ldots, j_k\} \subseteq C$ for all $C \in \cC_1(P,T)$ such that $x_{1,C}>0$. Let us define $\hat{x}$ as follows: for $C \in \cC_1(P,T)$ where $x_{1,C}>0$, let $C' = C \cup \{j_{new}\} \setminus \{j_1, \ldots, j_k\}$ and let $\hat{x}_{1,C'} = x_{1,C}$. Let $\hat{x}_{2,C}=x_{2,C}$ for every $C \in \cC_2(P,T)$ and let $\hat{x}_{i,C}=0$ for any other $i$ and $C$. The vector $\hat{x}$ is feasible for $\clp(P', T)$ for $P' = f_1(P)$, since we only replaced $\{j_1, \ldots, j_k\}$ with $j_{new}$ in all configurations that contain them, so \eqref{machine_const} and \eqref{ass_const} are implied immediately for all machines and jobs in $P'$ except for $j_{new}$. For the latter, \eqref{ass_const} is implied by the fact that all configurations of $x$ on machine $1$ contain each of $j_1, \ldots, j_k$. The inequality $\lp(P') \le \lp(P)$ follows.

    On the other hand, the integer feasible solutions of $P$ and $P'$ are in a direct one-to-one correspondence which maintains the machine completion times as well: for any feasible integer solution of $P$, replace the collection of jobs $\{1,\ldots, k\}$ (necessarily assigned to machine $1$) with the new job $j_{new}$, and vice versa. By the way we defined $p_{j_{new}}$ this operation does not change the makespan of the two solutions. Hence, $\ip(P') = \ip(P)$. By Lemma \ref{lem:gap_change} (Case 2), we are done.
\end{proof}

At this point, both machines have at most $1$ exclusive job. As a next step, we can get rid of one of them in case both machines have exactly one. Let $S_2 \subseteq S_1$ be the instances in which there is at most one machine-exclusive job for machines $1$ and $2$ combined. Let us define $f_2: S_1 \to S_2$ as $f_2(P) = P$ if $P \in S_2$; else, assume that $j_1$ is $1$-exclusive and $j_2$ is $2$-exclusive. Assume without loss of generality that $p_{j_1} \le p_{j_2}$. Let us define $P'$ the following way: for any $j\ne j_1,j_2$, the processing times are the same as in $P$. Let $p'_{j_2} = p_{j_2} - p_{j_1}$, $p'_{j_1} = p_{j_1} - p_{j_1} = 0$, and let $j_1$, $j_2$ remain $1$-exclusive and $2$-exclusive, respectively. Since $p'_{j_1} = 0$, we can further remove $j_1$ from the set of jobs in $P'$. Let $f_2(P) = P'|_{([n]\setminus \{j_1\}) \times \{1,2\}}$.

\begin{lemma}\label{lem:clp2_2}
    $S_1 \to_{f_2} S_2$.
\end{lemma}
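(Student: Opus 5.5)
We prove $\ig(P) \le \ig(P')$ via Lemma \ref{lem:gap_change} Case 2, with $P'$ the instance before deleting $j_1$. Deleting a job with $p'_{j_1}=0$ changes neither $\lp$ nor $\ip$: such a job can be added to any configuration of machine $1$ (or dropped from one) without affecting feasibility. The two claims to establish are $\ip(P') = \ip(P) - p_{j_1}$ and $\lp(P') \le \lp(P) - p_{j_1}$. Setting $\delta = p_{j_1}$ and $\varepsilon = \lp(P) - \lp(P') \ge \delta$ then yields the statement directly.

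\textbf{Integer side.} Every integer assignment of $P$ puts $j_1$ on machine $1$ and $j_2$ on machine $2$. The remaining jobs are assigned freely, so the loads are $L_1 + p_{j_1}$ and $L_2 + p_{j_2}$, where $L_1, L_2$ are the loads of the shared jobs. In $P'$ the same assignment gives loads $L_1$ and $L_2 + p_{j_2} - p_{j_1}$, so both loads drop by exactly $p_{j_1}$. Since the assignments of the shared jobs are in bijection, the maximum over each assignment drops by exactly $p_{j_1}$, and minimizing over assignments gives $\ip(P') = \ip(P) - p_{j_1}$.

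\textbf{Fractional side.} Let $T = \lp(P)$ and let $x$ be feasible for $\clp(P,T)$. As in Lemma \ref{lem:exclusive}, every configuration used by machine $1$ contains $j_1$, and every configuration used by machine $2$ contains $j_2$. Set $T' = T - p_{j_1}$. For machine $1$, map $C \mapsto C \setminus \{j_1\}$; its load drops by $p_{j_1}$, so $C \setminus \{j_1\} \in \cC_1(P', T')$. For machine $2$, keep $C$ unchanged; its $P'$-load is the $P$-load minus $p_{j_1}$, so $C \in \cC_2(P', T')$. Constraints \eqref{machine_const} and \eqref{ass_const} carry over for all remaining jobs, since coverage of shared jobs is unchanged. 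Hence $\clp(P', T-p_{j_1})$ is feasible. Because $T$ and $p_{j_1}$ are integers, $T - p_{j_1}$ is an integer, so $\lp(P') \le \lp(P) - p_{j_1}$.

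The only delicate points are checking that every positive-weight configuration contains the exclusive job, which follows from \eqref{machine_const}, \eqref{ass_const} and $x\ge 0$ exactly as in Lemma \ref{lem:exclusive}, and handling the integrality convention in the definition of $\lp$. Integrality is preserved because all processing times are integers. Finally, $f_2(P) \in S_2$ holds because only $j_2$ remains exclusive; if $p'_{j_2}=0$, it can be deleted too, leaving no exclusive jobs.
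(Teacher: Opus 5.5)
Your proof is correct and follows essentially the same route as the paper. Both arguments use that every positively weighted configuration on machine~$1$ (resp.\ machine~$2$) contains $j_1$ (resp.\ $j_2$), so every configuration load and every integer machine load drops by exactly $p_{j_1}$, and both then invoke Lemma~\ref{lem:gap_change}, Case~2. The only difference is that you prove just $\lp(P')\le \lp(P)-p_{j_1}$, which is all Case~2 requires, whereas the paper also asserts the reverse inequality and cites a nonexistent ``Case~3''.
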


\begin{proof}
    Let $P \in S_1 \setminus S_2$, and let $j_1$ and $j_2$ be $1$-exclusive and $2$-exclusive in $P$, respectively. Let $x$ be any feasible solution of $\clp(P,T)$. Akin to the previous proof, for each $C \in \cC_1(P,T)$ such that $x_{1,C}>0$, $j_1 \in C$ must hold. Likewise, $j_2 \in C$ for all $C \in \cC_2(P,T)$ where $x_{2,C}>0$. Consequently, if we decrease the processing times of $j_1$ and $j_2$ simultaneously by $p_{j_1}$, the combined processing time of each configuration in $x$ decreases by $p_{j_1}$ as well. Hence, $x$ is feasible for $\clp(P', T-p_{j_1})$, and it can be easily checked that the reverse implication is true as well, implying $\lp(P')=\lp(P)-p_{j_1}$.

    At the same time, the simultaneous decrease of $p_{j_1}$ and $p_{j_2}$ affects any integer solution of $P$ the same way: since $j_1$ must be assigned to machine $1$ and $j_2$ to machine $2$, both machine completion times reduce by $p_{j_1}$ as well. When increasing back the processing times from $p'_{j_1}$ and $p'_{j_2}$ by $p_{j_1}$, any feasible integer solution of $P'$ increases by $p_{j_1}$ on both machines. Hence, $\ip(P') = \ip(P)-p_{j_1}$ as well. The statement follows from Lemma \ref{lem:gap_change} (Case $3$).
\end{proof}

Now each instance has at most one exclusive job. In the next step, we can dispose of exclusivity. Let $S_3 \subseteq S_2$ be the set of all instances without machine-exclusive jobs. Let us define $f_3(P) = P$ for $P \in S_3$. Otherwise, assume $p_{j_1,1}=\infty$ and let $f_3(P)=P'$ where $P'$ is identical to $P$, with the sole exception that $p'_{j_1,1}=p'_{j_1,2} = p_{j_1}$.

\begin{lemma}\label{lem:clp2_3}
    $S_2 \to_{f_3} S_3$.
\end{lemma}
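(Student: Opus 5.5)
We prove $\lp(P')\le \lp(P)$ and $\ip(P')\ge \ip(P)$ and then invoke Lemma~\ref{lem:gap_change} (Case~1). Here $P\in S_2\setminus S_3$ has exactly one exclusive job $j_1$, which is $2$-exclusive since $p_{j_1,1}=\infty$. The instance $P'$ only relaxes this restriction by allowing $j_1$ on machine $1$ with the same processing time.

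For the LP side, fix any $T$ with $\clp(P,T)$ feasible and a feasible $x$. For each machine $i$, every configuration $C\in\cC_i(P,T)$ has total processing time at most $T$ on $i$. Every job of $C$ is allowed on $i$ with the same processing time in $P'$, so $C\in\cC_i(P',T)$. Thus $\cC_i(P,T)\subseteq\cC_i(P',T)$ for $i=1,2$. Setting $x'_{i,C}=x_{i,C}$ and $x'=0$ on the new configurations gives a solution satisfying \eqref{machine_const}--\eqref{nonneg_const} for $P'$. Taking $T=\lp(P)$ yields $\lp(P')\le\lp(P)$.

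The IP side is the main obstacle, because $P'$ has more feasible integer schedules, so $\ip(P')$ could a priori be smaller. My plan is to take an optimal schedule $\sigma'$ for $P'$ and show it can be turned into one for $P$ with the same makespan. If $\sigma'(j_1)=2$, then $\sigma'$ is already feasible for $P$. If $\sigma'(j_1)=1$, I exploit the symmetry of $P'$. Since $P'\in S_3$ has no exclusive jobs and all processing times are machine-independent, swapping the roles of machines $1$ and $2$ gives the schedule $\tau(j)=3-\sigma'(j)$. It is feasible for $P'$ and has the same makespan, because the multiset of the two completion times is unchanged. Moreover $\tau(j_1)=2$, and every other job is allowed on both machines in $P$: $j_1$ was the only exclusive job, as $P\in S_2$. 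Hence $\tau$ is feasible for $P$, and $\ip(P)\le C_{max}(\tau)=\ip(P')$. Together with the trivial $\ip(P')\le\ip(P)$, this gives $\ip(P')=\ip(P)$.

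Finally, $\ip(P)\le \ip(P')$ and $\lp(P)\ge\lp(P')$, so Lemma~\ref{lem:gap_change} Case~1 gives $\ig(P)\le\ig(P')$. Also $P'$ has no exclusive jobs, so $P'\in S_3$, and $f_3$ is a gap-preserving reduction $S_2\to_{f_3}S_3$. The only point needing care is checking that no other job of $P$ is exclusive, which is exactly the defining property of $S_2$.
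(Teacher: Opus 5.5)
Your proof is correct and follows the paper's argument: the IP side uses the same relabelling of the two machines, and the LP side uses the configuration inclusion $\cC_i(P,T)\subseteq\cC_i(P',T)$, which the paper also notes. The paper additionally proves $\lp(P)\le\lp(P')$ by redistributing a solution of $\clp(P',T)$ (configurations containing $j_1$ go to machine $2$, the rest to machine $1$), which gives $\ig(P')=\ig(P)$. As you observe, Case~1 of Lemma~\ref{lem:gap_change} only needs $\lp(P')\le\lp(P)$, so that extra step is not required for the reduction.
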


\begin{proof}
    Let $P \in S_2 \setminus S_3$, and assume $p_{j_1,1} = \infty$. Let $\hat{x}$ be a feasible solution for $\clp(P',T)$ for $P' = f_3(P)$ and some $T$. Notice that since $P'$ does not have machine-exclusive jobs, we may write $\cC_1(P',T) = \cC_2(P', T) := \cC(P', T)$. Let us define a vector $x$ as follows: $x_{2,C} := \hat{x}_{1,C} + \hat{x}_{2,C}$ for all $C \in \cC(P', T)$ with $j_1 \in C$, and $x_{1,C} := \hat{x}_{1,C} + \hat{x}_{2,C}$ for all $C \in \cC(P', T)$ such that $j_1 \notin C$; and let $x$ be zero elsewhere. The vector $x$ satisfies machine constraints \eqref{machine_const}, because on machine $2$ it contains all variables of $\hat{x}$ that contain $j_1$ (which sum up to $1$ by \eqref{ass_const}), and on machine $1$ it contains the rest of $\hat{x}$. The latter sum equals $1$ as well, because it equals the total sum of $\hat{x}$ ($= 2\cdot 1$ by \eqref{machine_const}) minus the sum of $\hat{x}$ where it contains $j_1$. Furthermore, $x$ also satisfies the job constraints \eqref{ass_const}, because we just redistributed variables of $\hat{x}$ on the two machines. On the other hand, any feasible solution for $\lp(P,T)$ is automatically feasible for $\lp(P',T)$. Consequently, $\lp(P') = \lp(P)$.

    Similarly, any integer assignment for $P'$ can be transformed easily into another one with the same makespan that does not contain $j_1$ on machine $1$: if it does, just simply switch the label of the two machines. It follows that $\ip(P')=\ip(P)$, and $\ig(P')=\ig(P)$.
\end{proof}

The set $S_3$ corresponds to instances of the identical machine scheduling on two machines, denoted by $P2||C_{max}$. This knowledge will be enough to determine the gap exactly, based on two simple observations. The first one claims that for any feasible solution $x$ of $\clp(P, \lp(P))$, there exists a configuration $C$ on some one of the machines (say $i$) that has a completion time $\lp(P)$ and $x_{i,C}>0$. Indeed, if not, the same $x$ would be feasible for $\clp(P, \lp(P)-1)$, contradicting the minimality of $\lp(P)$.

The second observation asserts that for an arbitrary instance $P$ for the identical machine scheduling problem with $m$ machines, the inequality $\frac{p_1 + \ldots + p_n}{m} \le \lp(P)$ holds. Indeed, let us examine the expression $\sum\limits_{i,C} x_{i,C}\cdot p(C)$ for a feasible solution $x$, where $p(C)$ denotes the total processing time of a configuration $C$. On one hand, each job's processing time appears with a total coefficient of $1$ due to \eqref{ass_const}, so the sum is $p_1 + \ldots + p_n$. On the other hand, for any machine, the sum of corresponding coefficients is $1$ due to \eqref{machine_const}, and each $p(C)$ is at most $\lp(P)$ due to the feasibility of $x$. Thus, the sum is at most $m \cdot \lp(P)$, and the statement follows by rearrangement.

\begin{lemma}\label{lem:iden}
    $\ig(P2||C_{max}) = 1$.
\end{lemma}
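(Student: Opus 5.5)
We want to show $\ip(P)\le\lp(P)$ for every instance $P$ of $P2||C_{max}$; combined with $\lp(P)\le\ip(P)$ this gives $\ig(P)=1$. The plan is to take a feasible solution $x$ of $\clp(P,\lp(P))$ and extract from it an integer schedule of makespan at most $\lp(P)$, using the two observations stated just before the lemma. Write $T:=\lp(P)$ and $p_{\text{sum}}:=p_1+\ldots+p_n$, so that the second observation gives $p_{\text{sum}}\le 2T$.

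The key step is to exploit the first observation. It says there exist a machine $i$ and a configuration $C\in\cC_i(P,T)$ with $x_{i,C}>0$ and $p(C)=T$. In the identical-machine case, configurations on both machines are simply subsets of $[n]$ of total size at most $T$, so the machine label plays no role. We then consider the integer assignment that places $C$ on machine $1$ and the complement $[n]\setminus C$ on machine $2$. Machine $1$ has load exactly $p(C)=T$. Machine $2$ has load $p_{\text{sum}}-T\le 2T-T=T$. Hence this assignment has makespan at most $T$, so $\ip(P)\le T=\lp(P)$.

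The remaining care is in the first observation itself, namely that $\clp(P,T-1)$ is infeasible whenever $x$ uses no configuration of size exactly $T$. This relies on integrality of processing times: any configuration with $p(C)<T$ satisfies $p(C)\le T-1$, so the same $x$ would be feasible for $\clp(P,T-1)$. This contradicts the definition of $\lp(P)$ as the smallest integer $T$ for which $\clp(P,T)$ is feasible. Since this is already stated in the paper, I expect no real obstacle. The only point worth double-checking is the degenerate case $n=0$ or $\lp(P)=0$. We avoid it because processing times are positive integers, so $T\ge 1$ whenever $n\ge1$.

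Together with the reductions $S_0\to_{f_1}S_1\to_{f_2}S_2\to_{f_3}S_3$ of Lemmas \ref{lem:exclusive}, \ref{lem:clp2_2} and \ref{lem:clp2_3}, this lemma then yields Theorem \ref{thm:p2}.
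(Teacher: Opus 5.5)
Your proposal is correct and follows the paper's proof essentially verbatim. Like the paper, you take a support configuration $C$ with $p(C)=\lp(P)$ (first observation), assign it to one machine and $[n]\setminus C$ to the other, and bound the second load by $p([n])-\lp(P)\le \lp(P)$ (second observation); your remark that the first observation relies on integer processing times is a correct and useful clarification.
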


\begin{proof}
    Let $P$ be an arbitrary instance of $P2||C_{max}$, and let $x$ be a feasible solution for $\clp(P,\lp(P))$. By our first observation, we can assume that there exists $C \in \cC_1(P, \lp(P))$ such that $p(C) = \lp(P)$ and $x_{1,C} > 0$. Let $\overline{C} = [n]\setminus C$ denote the rest of the jobs. From the second observation, it holds that $p(\overline{C}) = p([n]) - p(C)\le 2\cdot \lp(P) - \lp(P) = \lp(P)$. Hence, if we put all jobs in $C$ to machine $1$ and all jobs in $\overline{C}$ to machine $2$, we get an integer assignment with a makespan of $\lp(P)$. It follows that $\ig(P)=1$.
\end{proof}

\thmmequaltwo*

\begin{proof}
    It follows from Lemma \ref{lem:iden} and from the fact that $((S_0, S_1, S_2, S_3), (f_1, f_2, f_3))$ is an IGPR-chain.
\end{proof}

\begin{remark*}
    Let us point out that Theorem \ref{thm:p2} does not contradict the $\mathsf{NP}$-hardness of $P2|\mathcal{M}_j|C_{max}$. First, a gap of $1$ does not imply that the polyhedron of $\clp(P,\lp(P))$ is integer; several fractional vertices do exist for many instances. Hence, solving the LP-relaxation to optimality does not automatically return an integer feasible solution. Second, the time complexity of solving $\clp(P,\lp(P))$ is exponential in $(P,n,m)$, since it may contain $O(2^n)$-many constraints.
\end{remark*}

\subsection{Unit processing times}\label{app:unit}

Let us fix an instance $P$, and consider an integer optimal assignment $\sigma$ with makespan $\ip(P)$. We construct an auxiliary digraph $D(\sigma) = (V,A)$ that encodes the possible local transformations of $\sigma$ to another feasible (not necessarily optimal) integer assignment. For each machine $i \in [m]$, we include a node $v_i \in V$. We add an arc $\overrightarrow{v_{i_1} v_{i_2}}$ to $A$ if there exists a job $j$ such that $\sigma(j)=i_1$, and it is allowed on machine $i_2$ as well (that is, $p_{j, i_2} \ne \infty$). For simplicity, we only include one arc for the entire set of jobs that could be moved from $i_1$ to $i_2$. Let $i_{max}$ be a machine which obtains the makespan $\ip(P)$ in $\sigma$. The key observation will be that when $\sigma$ is chosen in a clever way, we can assume every other node is reachable from $v_{i_{max}}$ in $D(\sigma)$ via a directed path. In particular, let $I_{max}(\sigma)=\{i \in[m]: C_{\sigma}(i)=\ip(P)\}$ denote the set of all machines whose completion time equals $\ip(P)$ according to $\sigma$. We will consider $\sigma$ for which $|I_{max}(\sigma)|$ is minimal among all integer optimal assignments; let us denote it by $\sigma_f$ indicating a minimal number of full machines.

With this, we are ready to construct the reduction. Let $S_0$ denote all the instances of $P|\mathcal{M}_j, p_j = 1|C_{max}$. Let $S_1 \subseteq S_0$ be the subset of instances for which $\sigma_f$ satisfies that for every $i_{max}\in I_{max}(\sigma_f)$, every node is reachable from $v_{i_{max}}$ in $D(\sigma_f)$. Define $f_1(P)=P$ for $P \in S_1$. Otherwise, assume $i_{max}\in I_{max}(\sigma_f)$ is such that not all nodes are reachable from $v_{i_{max}}$ in $D(\sigma_f)$. Let $I_1$ be the set of machines $i\in[m]$ for which $v_i$ is reachable from $v_{i_{max}}$ (including $v_{i_{max}}$ itself), and let $I_2 = [m]\setminus I_1$. Let $J_1$ be the set of jobs assigned to machines from $I_1$ by $\sigma_f$, and let $J_2 = [n]\setminus J_1$. We define $f_1(P) = P|_{J_1 \times I_1}$ as the restriction of $P$ to machines $I_1$ and jobs $J_1$.

\begin{lemma}\label{lem:graph_connect}
    $S_0 \to_{f_1} S_1$.
\end{lemma}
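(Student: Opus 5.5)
The plan is to show that a single application of the restriction $P \mapsto P|_{J_1\times I_1}$ does not decrease the gap, via Lemma~\ref{lem:gap_change} Case~1. We then iterate it until we land in $S_1$. Termination is clear because $I_2\neq\emptyset$ whenever $P\notin S_1$, so the number of machines strictly drops at each step. Once a single machine remains, every node is trivially reachable from $v_{i_{max}}$, so that instance lies in $S_1$. As in the earlier chains, we define $f_1$ as the result of this finite iteration.

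The basic structural fact comes first. By definition of $I_1$ as the set of nodes reachable from $v_{i_{max}}$ in $D(\sigma_f)$, no arc of $D(\sigma_f)$ leaves $I_1$. Hence every job $j\in J_1$ satisfies $\cM_j\subseteq I_1$: otherwise the arc $\overrightarrow{v_{\sigma_f(j)}v_{i}}$ with $i\in I_2$ would make $v_i$ reachable. In particular $P'=P|_{J_1\times I_1}$ is a valid instance, since each job of $J_1$ keeps all its allowed machines. Also, $\sigma_f$ restricted to $J_2$ is a valid assignment of $J_2$ into $I_2$.

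For the fractional side we show $\lp(P')\le\lp(P)$. Take any feasible $x$ for $\clp(P,T)$ and define $x'$ on machines $i\in I_1$ by sending each configuration $C$ to $C\cap J_1$, summing the weights of configurations that collapse to the same set. Since processing times are nonnegative, $C\cap J_1\in\cC_i(P',T)$, and constraint \eqref{machine_const} is preserved for each $i\in I_1$. For $j\in J_1$, constraint \eqref{ass_const} only involved configurations on machines of $I_1$, because $\cM_j\subseteq I_1$ and configurations on other machines cannot contain $j$ under finite $T$. That total weight is unchanged by the map, so \eqref{ass_const} still holds. Hence $\clp(P',T)$ is feasible whenever $\clp(P,T)$ is.

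For the integer side we show $\ip(P')=\ip(P)$, and I expect this to be the main point. The restriction of $\sigma_f$ to $J_1$ shows $\ip(P')\le\ip(P)$. Conversely, suppose some assignment $\sigma'$ of $P'$ has makespan strictly less than $\ip(P)$. Glue $\sigma'$ on $J_1$ with $\sigma_f$ on $J_2$, which is valid by the structural fact above. The result is an integer assignment of $P$ with makespan at most $\ip(P)$, so it is optimal. It has no full machine in $I_1$, while its full machines in $I_2$ are exactly those of $\sigma_f$ in $I_2$. Since $i_{max}\in I_1\cap I_{max}(\sigma_f)$, this assignment has strictly fewer full machines than $\sigma_f$, contradicting the minimality of $|I_{max}(\sigma_f)|$. (If it even had makespan below $\ip(P)$, it would contradict optimality directly.) Thus $\ip(P')\ge\ip(P)$, and Lemma~\ref{lem:gap_change} Case~1 gives $\ig(P)\le\ig(P')$. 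Note that the minimal-full-machines choice of $\sigma_f$ is exactly what is needed here.
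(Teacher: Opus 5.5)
Your proof is correct and follows the same route as the paper. The three steps match: no job of $J_1$ is allowed on a machine of $I_2$; restricting CLP solutions to $I_1$ and $J_1$ gives $\lp(P')\le\lp(P)$; and minimality of $|I_{max}(\sigma_f)|$ gives $\ip(P')=\ip(P)$. Your version is slightly more careful than the paper's in three places: you merge configurations that collapse to the same set $C\cap J_1$, you make the gluing with $\sigma_f$ on $J_2$ explicit, and you iterate the restriction so the final instance actually lies in $S_1$.
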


\begin{proof}
    Let $P\in S_0 \setminus S_1$, and let $i_{max} \in I_{max}(\sigma_f)$ be a machine such that not all nodes are reachable from $v_{i_{max}}$ in $D(\sigma_f)$. The crucial observation is that $p_{j,i}=\infty$ for all $j\in J_1$ and $i \in I_2$. Indeed, $p_{j,i}\ne\infty$ for such $j$ and $i$ would imply that there is an arc from $v_{\sigma_f(j)}$ to $v_i$. But $j\in J_1$ means $v_{\sigma_f(j)}$ is reachable from $v_{i_{max}}$, so $v_i$ would be reachable as well, contradicting $i \in I_2$. Note that a similar statement is not necessarily true for $J_2$ and $I_1$; jobs from $J_2$ could very well be allowed on some machine in $I_1$.

    As a consequence, $\lp(P')\le \lp(P)$ holds for $P'=f_1(P)$. To see it, let $x$ be an arbitrary feasible solution of $\clp(P,T)$ for some $T$. By the above observation, $x|_{J_1 \times  I_1}$ satisfies \eqref{ass_const} for each job in $J_1$. Constraints \eqref{machine_const} remain satisfied for machines in $I_1$. Thus, the vector $\hat{x}_{i,C\setminus J_2} = x_{i,C}$ for all $i \in I_1$ and $C \in \cC_i (P,T)$ is feasible for $\lp(P',T)$.

    At the same time, $\ip(P')=\ip(P)$ holds. If not, the jobs in $J_1$ could be rearranged in $\sigma_f$ such that the highest completion time in $I_1$ would be at most $\ip(P')\le \ip(P)-1$. But this would mean that in the modified assignment, no machine in $I_1$ (including $i_{max}$) has a completion time equal to $\ip(P)$, contradicting the minimality of $\sigma_f$. The statement follows from Lemma \ref{lem:gap_change} Case $2$.
\end{proof}

\begin{lemma}\label{lem:graph_moving}
    $\ig(S_1)=1$.
\end{lemma}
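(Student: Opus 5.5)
Take $P\in S_1$ and $\sigma_f$ with minimal $|I_{max}(\sigma_f)|$, and set $T:=\ip(P)$. The goal is $\lp(P)\ge T$, so the plan is to show that every machine carries load at least $T-1$ in $\sigma_f$, and that this is enough because of the unit processing times.

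\textbf{Step 1 (even placement).} Fix $i_{max}\in I_{max}(\sigma_f)$. I claim every machine $i$ satisfies $C_{\sigma_f}(i)\ge T-1$. Suppose instead that some $i$ has $C_{\sigma_f}(i)\le T-2$. Since $P\in S_1$, $v_i$ is reachable from $v_{i_{max}}$, so take a shortest directed path $v_{i_{max}}=v_{k_0},v_{k_1},\dots,v_{k_r}=v_i$ in $D(\sigma_f)$. Each arc $\overrightarrow{v_{k_{s-1}}v_{k_s}}$ is witnessed by a job on $k_{s-1}$ that is allowed on $k_s$, and we shift these jobs one step along the path. All processing times equal $1$, so the intermediate loads are unchanged, $C(i_{max})$ drops by one, and $C(i)$ rises to at most $T-1$. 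The makespan does not increase, so the new assignment is still optimal, but it has strictly fewer full machines. This contradicts the minimality of $\sigma_f$. (Optimality also gives $C\le T$ everywhere, so every load lies in $\{T-1,T\}$.)

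\textbf{Step 2 (CLP lower bound).} Suppose $T':=\lp(P)\le T-1$, and let $x$ be feasible for $\clp(P,T')$. The averaging argument from the two-machine section works verbatim for restricted assignment, since jobs have a single processing time on every allowed machine: $n=\sum_{i,C}x_{i,C}|C|\le mT'$. On the other hand, Step 1 gives
\[
n\;\ge\; |I_{max}(\sigma_f)|\cdot T+(m-|I_{max}(\sigma_f)|)(T-1)\;=\;m(T-1)+|I_{max}(\sigma_f)|\;>\;m(T-1)\;\ge\; mT',
\]
because $|I_{max}(\sigma_f)|\ge 1$. This is a contradiction, so $\lp(P)=T$ and $\ig(P)=1$. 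The value $\ig(S_1)=1$ is attained, for example by any instance with a single job.

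\textbf{Main obstacle.} The delicate part is Step 1: the path-shifting argument must keep every intermediate load unchanged and must reduce $|I_{max}|$ strictly. Both properties depend on unit processing times and on choosing a shortest path. A shortest path has distinct machines, and the jobs moved along it are distinct because each is moved off a different machine. I would also verify that the reachability hypothesis is used from a full machine, which is exactly what the definition of $S_1$ provides.
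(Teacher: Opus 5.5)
Your proof is correct and is essentially the paper's. Step 1 is the same shortest-path shifting argument, which forces every load under $\sigma_f$ into $\{T-1,T\}$. Step 2 makes explicit the paper's one-line appeal to the trivial bound $\lp(P)\ge\lceil n/m\rceil$, and your identity $n=m(T-1)+|I_{max}(\sigma_f)|>m(T-1)$ is exactly why that bound equals $T$.

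One wording point in Step 1: when $|I_{max}(\sigma_f)|=1$, the shifted assignment has makespan at most $T-1$. So the contradiction there is with $T=\ip(P)$, not with the minimality of $|I_{max}|$ (it is not ``still optimal''). Either way the argument closes.
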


\begin{proof}
    Let $P\in S_1$, and consider the optimal assignment $\sigma_f$. Let $i_{max}\in I_{max}(\sigma_f)$. Note that there cannot exists a machine $i$ for which $C_{\sigma_f}(i) \le \ip(P)-2$. Indeed, since $v_i$ is reachable from $v_{i_{max}}$ in $D(\sigma_f)$, we could consider a shortest directed path $v_{i_{max}}=v_{i_0}, v_{i_1}, \ldots, v_{i_k} = v_{i}$ between them and move some job $j_l \in \sigma_f^{-1}(i_l)$ from $i_l$ to $i_{l+1}, \,\,l = 0, \ldots, k-1$. As the path is the shortest, the moved jobs have no repetitions, and so the completion times of machines $i_1, \ldots, i_{k-1}$ do not change. The completion time $C_{\sigma_f}(i_{max})$ decreases to $\ip(P)-1$, and $C_{\sigma_f}(i)$ increases to at most $\ip(P)-1$, contradicting the minimality of $\sigma_f$.

    Thus, each completion time is either $\ip(P)$, or $\ip(P)-1$, which implies that $C_{max}(\sigma_f)$ is equal to the trivial lower bound $\left\lceil\nicefrac{\sum_{j} p_j}{m}\right\rceil=\lceil\frac{n}{m}\rceil$ on $\lp(P)$. It follows that $\ig(P)=1$.
\end{proof}

Lemmas \ref{lem:graph_connect} and \ref{lem:graph_moving} imply the following theorem:

\thmunit*

\subsection{Improving the gap of known instances}

In this section, we put our method into practice to find instances of $P|\cM_j|C_{max}$ with a higher integrality gap than those in \cite{Kurpisz18}. Our reduction framework allows us to slightly strengthen this bound, increasing it from $1.00098$ to $1.01373$. The construction relies on a sequence of computer-aided reductions. We begin with the instance from \cite{Kurpisz18}, consisting of $n=15$ jobs and $m=3$ machines, with input matrix

\[
P_1 =
\left[
\begin{array}{*{15}{c}}
3 & 6 & 12 & 24 & 17 & 33 & 66 & 132 & 264 & 528 & 160 & 640 & 576 & 320 & 288 \\
3 & 6 & 12 & 24 & 17 & 33 & 66 & 132 & 264 & 528 & 160 & 640 & 576 & 320 & 288 \\
3 & 6 & 12 & 24 & 17 & 33 & 66 & 132 & 264 & 528 & 160 & 640 & 576 & 320 & 288
\end{array}
\right]
\]

The corresponding optimal LP solution proposed in \cite{Kurpisz18} is not half-integral. However, we construct an equivalent half-integral solution, in which each of the following (machine, configuration) pairs is assigned the value $\nicefrac{1}{2}$:


\begin{center}
\begin{tabular}{c|c}
    \hline
    Machine & Configuration\\
    \hline
     \multirow{2}{*}{1} & \{1 3 10 11 14\}  \\
                        & \{1 4 8 13 15\}  \\ \hline
     \multirow{2}{*}{2} & \{2 5 9 11 13\}  \\
                        & \{6 7 8 9 10\}  \\ \hline
     \multirow{2}{*}{3} & \{2 4 6 12 14\} \\
                        & \{3 5 7 12 15\} \\
     \hline
\end{tabular}    
\end{center}

Observing that each job $j$ appears in at most two machine configurations, we can restrict the instance by retaining only the machines which contain a configuration that includes the given job, and is assigned a non-zero value. This yields the following instance:
\[
P_1 =
\left[
\begin{array}{*{15}{c}}
3 & \infty & 12 & 24 & \infty & \infty & \infty & 132 & \infty & 528 & 160 & \infty & 576 & 320 & 288 \\
\infty & 6 & 12 & \infty & 17 & \infty & 66 & \infty & 264 & \infty & 160 & 640 & 576 & \infty & 288 \\
\infty & 6 & \infty & 24 & \infty & 33 & 66 & 132 & 264 & 528 & \infty & 640 & \infty & 320 & \infty
\end{array}
\right]
\]

Similar to Lemma \ref{lem:ms_1}, one can show the above modification preserves the gap. In particular, it increases to $\nicefrac{1037}{1024}$. We now apply a second transformation based on a simple and more general observation. Let $x$ be a half integral optimal solution, and let $S(x)$ be a set of jobs that never appear together in any configuration. In our example, we chose $S(x)$ to be $\{0,4,5\}$. Define $q(x) := \min_{j \in S(x)} p_j$, and consider the modified instance given by
\[
p'_j := \begin{cases}
p_j & j \notin S(x), \\
p_j - q(x) & j \in S(x).
\end{cases}
\]
By construction, at least one job in $S(x)$ has processing time reduced to zero; we remove this job, obtaining an instance with $n' = n - 1$ jobs. We claim that $\ip(I') = \ip(I) - q(x)$ and $\lp(I') = \lp(I) - q(x)$. To see this, consider an optimal integral (respectively, fractional) solution for $I'$. If its value was strictly smaller than $\ip(I) - q(x)$ (respectively, $\lp(I) - q(x)$), we could add back $q(x)$ units of processing time to each machine by reintroducing the contribution of the jobs in $S(x)$, increasing the makespan by at most $q(x)$. This would yield a solution for the original instance with value strictly smaller than $\ip(I)$ (respectively, $\lp(I)$), contradicting optimality. From Lemma \ref{lem:gap_change} Case $2$, it follows that the gap increases. Indeed, the new instance has a gap of $\nicefrac{1034}{1020} \sim 1.01373$ and it has the following structure:

\[
P_2 =
\left[
\begin{array}{*{14}{c}}
\infty & 12 & 24 & \infty & \infty & \infty & 132 & \infty & 528 & 160 & \infty & 576 & 320 & 288 \\
 6 & 12 & \infty & 14 & \infty & 66 & \infty & 264 & \infty & 160 & 640 & 576 & \infty & 288 \\
 6 & \infty & 24 & \infty & 30 & 66 & 132 & 264 & 528 & \infty & 640 & \infty & 320 & \infty
\end{array}
\right]
\]

\section{Connection to the bin-packing problem}

In the bin packing problem, an unlimited amount of fix-sized bins are provided, along with a collection of items with specific weights that are need to be packed into as few bins as possible, such that total weight of items hosted in a specific bin does not exceed the bin capacity.

The parallelism with the scheduling framework is evident: in scheduling problems, the number of machines/bins is fixed and we have to minimize the containers' capacity/completion time, whereas the bin packing framework inverts their roles, and aims to minimize the number of containers while keeping the capacity fixed. We are going to translate Theorem \ref{thm:p2} to a result for bin packing.

\end{document}